\setlist[itemize]{leftmargin=*,topsep=1ex,itemsep=0ex}
\setlist[enumerate]{leftmargin=*,topsep=1ex,itemsep=0ex}
\DeclareMathOperator{\Tr}{Tr}
\theoremstyle{plain}
\newtheorem{theorem}{Theorem}
\newtheorem{lemma}[theorem]{Lemma}
\newtheorem{proposition}[theorem]{Proposition}
\newtheorem{corollary}[theorem]{Corollary}
\theoremstyle{definition}
\newtheorem{definition}[theorem]{Definition}
\theoremstyle{remark}
\numberwithin{theorem}{section}
\numberwithin{equation}{section}
\numberwithin{figure}{section}
\numberwithin{table}{section}
\let\oldr@@t\r@@t
\def\r@@t#1#2{%
\setbox0=\hbox{$\oldr@@t#1{#2\,}$}\dimen0=\ht0
\advance\dimen0-0.2\ht0
\setbox2=\hbox{\vrule height\ht0 depth -\dimen0}%
{\box0\lower0.4pt\box2}}
\LetLtxMacro{\oldsqrt}{\sqrt}
\renewcommand*{\sqrt}[2][\ ]{\oldsqrt[#1]{#2}}
\newcommand{\N}{\mathbb{N}}
\newcommand{\R}{\mathbb{R}}
\newcommand{\cmpl}[1]{#1^\mathsf{c}}
\newcommand{\fnc}[3]{#1:#2\rightarrow #3}
\newcommand{\Cpq}[2]{
\ifthenelse{#1=0 \and #2=0}{\mathrm{C}}
{\ifthenelse{#2=0}{\mathrm{C}^{#1}}
{\mathrm{C}^{#1,#2}}}
}
\renewcommand{\d}{\mathrm{d}}
\newcommand{\SgAlg}[1]{\mathscr{#1}}
\newcommand{\Borel}{\mathscr{B}}
\newcommand{\Fltrn}[1]{\mathfrak{#1}}
\newcommand{\E}{\textsf{\upshape E}}
\renewcommand{\P}{\textsf{\upshape P}}
\renewcommand{\Pr}[1]{\textsf{\upshape{#1}}}
\newcommand{\ind}[1]{\mathbf{1}_{#1}}
\newcommand{\M}{\mathscr{M}}
\begin{document}
\title{Non-Parametric Robust Model Risk Measurement with Path-Dependent Loss Functions}

\author{Yu Feng}

\address{Yu Feng\\
Finance Discipline Group\\
University of Technology Sydney\\
P.O. Box 123\\
Broadway, NSW 2007\\
Australia}
\email{yu.feng-5@student.uts.edu.au}

\keywords{}

\date{\today}

\begin{abstract}
Understanding and measuring model risk is important to financial practitioners. 
However, there lacks a non-parametric approach to model risk quantification in a dynamic setting and with path-dependent losses. 
We propose a complete theory generalizing the relative-entropic approach by \cite{glasserman2014robust} to the dynamic case under any $f$-divergence. It provides an unified treatment for measuring both the worst-case risk and the $f$-divergence budget that originate from the model uncertainty of an underlying state process. 
\nocite{Con16,CF10a,CF10b,CF13,Dup09,GS89,GX14,HS01,HS15,HSTW06,JS17,Sap17}
\end{abstract}

\maketitle

\section{Introduction}
As a working definition, model risk refers to the quantification of unanticipated losses resulting from the use of inappropriate models to value and manage financial securities, including widely traded securities like stocks and bonds, for which market prices are readily available, and less traded derivatives written on such securities. Unlike other financial risks, which are concerned with the impact of randomness within the paradigm of a chosen model, model risk is concerned with the possibility that the wrong modelling paradigm was chosen in the first place. This makes it a much more challenging proposition, both conceptually and in terms of implementation. It is thus unsurprising that model risk continues to languish behind its more traditional counterparts, such as price risk, interest rate risk and credit risk, both in terms of identifying an appropriate theoretical methodology and in the development of specific metrics.

A simple approach of accounting for model uncertainty is to assign weights to alternative models and then calculate the average market risk \cite[]{branger2004model}.
Perhaps a better way is to separate the model risk component from the market risk component. In addition, from the risk management point of view, one may be more interested in the worst-case scenario instead of the average scenario. \cite{kerkhof2002model} proposed a risk-differencing measure that separates the market risk under the worst-case model from the nominal market risk. 
Following the worst-case approach, \cite{cont2006model} formulated a quantitative framework for measuring the model risk in derivative pricing. This approach applies to a parametric set of alternative measures which price some benchmark instruments within their respective bid-ask spreads. 
Following Cont's work, \cite{gupta2010model} proposed the definition of the spread of a contingent claim to be the set of the prices given by all legitimate models. 
\cite{Ban&Sch:13} proposed a parametric risk framework that unifies the proposals of \cite{cont2006model}, \cite{gupta2010model} and \cite{lindstrom2010implications}. This approach incorporates a distribution of parameter values to capture the risk of parameter uncertainty, resulting in bid-ask spreads in instruments that face parameter risk. 
\cite{Det&Pac:16} approach the problem of model risk measurement based on the residual profit and loss from hedging in the reference model. 
\cite{Ker&Ber&Sch:10} propose a procedure to take model risk into account when computing capital reserves. Instead of formulating model risk in terms of a collection of probability measures, they consider the reality that practitioners may evaluate risk based on models of different natures. 
From a practical point of view, \cite{Bou&Dan&Kou&Mai:14} proposed an approach that incorporates model risk into the usual market risk measures. 

The approaches described above are parametric in the sense that they consider alternative models parametrised by a finite set of parameters. To go beyond that, \cite{glasserman2014robust} proposed a non-parametric approach. Under this framework, a worst-case model is found among alternative models in a neighborhood of a reference model. Glasserman and Xu adopted the relative entropy (or the Kullback-Leibler divergence) to measure the distance between the probability measure given by the reference model and an (equivalent) alternative measure. By imposing a constraint on the relative entropy budget, the set of legitimate alternative models is defined in a non-parametric fashion, and the worst-case scenario can then be solved analytically within a finite distance to the reference model. This approach is formulated w.r.t the distribution of a state variable, thus less applicable when the state variable evolves dynamically. In this paper, we apply it conceptually to the problem of measuring model risk w.r.t a state process. We solve the problem in a dual formulation and handle its path-dependency with the help of the functional Ito calculus \cite[]{Con16}. The constraint that defines the legitimate alternative models is w.r.t the $f$-divergence, a more general choice than the Kullback-Leibler divergence.
\section{Problem Formulation}
Fix $T\in(0,\infty)$ and $d\in\N$, and let $\Omega\coloneqq D([0,T],\R^d)$ denote the set of c\`adl\`ag paths $\fnc{\omega}{[0,T]}{\R^d}$. Let $[0,T]\ni t\mapsto X(t)$ be the canonical process on $\Omega$, which means to say that $X(t)(\omega)\coloneqq\omega(t)$, for all $(t,\omega)\in[0,T]\times\Omega$. Let $\Fltrn{F}^0=(\SgAlg{F}^0_t)_{t\in[0,T]}$ denote the filtration on $\Omega$ generated by $X$, which is to say that
\begin{equation*}
\SgAlg{F}^0_t\coloneqq\bigvee_{s\in[0,t]}\bigl\{X(s)^{-1}(U)\,\bigl|\,U\in\Borel(\R^d)\bigr\}
=\bigvee_{s\in[0,t]}\bigcup_{U\in\Borel(\R^d)}\{\omega\in\Omega\,|\,\omega(s)\in U\},
\end{equation*}
for all $t\in[0,T]$. In particular,
\begin{equation*}
\SgAlg{F}^0_0\coloneqq\bigl\{X(0)^{-1}(U)\,\bigl|\,U\in\Borel(\R^d)\bigr\}
=\bigcup_{U\in\Borel(\R^d)}\{\omega\in\Omega\,|\,\omega(0)\in U\}.
\end{equation*}
Fix a reference probability measure $\P$ on $(\Omega,\SgAlg{F}^0_T)$, subject to the condition
\begin{equation*}
\P\bigl(X(0)^{-1}(U)\bigr)=\P(\{\omega\in\Omega\,|\,\omega(0)\in U\})=
\begin{cases}
1&\text{if $0\in U$};\\
0&\text{if $0\notin U$},
\end{cases}	
\end{equation*}
for all $U\in\Borel(\R^d)$, which is to say that almost all paths start at zero under $\P$. Note that this condition ensures that $\P(A)=0$ or $\P(A)=1$, for all $A\in\SgAlg{F}^0_0$.

To be consistent with the notation in \citet{Con16}, we shall write $\omega_t\coloneqq\omega(t\wedge\cdot)\in\Omega$ to denote the path $\omega\in\Omega$ stopped at time $t\in[0,T]$. We impose an equivalence relation $\sim$ on $[0,T]\times\Omega$, by specifying that
\begin{equation*}
(t,\omega)\sim(t',\omega')\qquad\text{if and only if}\qquad t=t'\quad\text{and}\quad\omega_t=\omega'_{t'},
\end{equation*}
for all $(t,\omega),(t',\omega')\in[0,T]\times\Omega$. That is to say, two pairs, each consisting of a time and a path, are equivalent if the times are equal and the corresponding stopped paths are the same. The quotient set $\Lambda_T^d\coloneqq[0,T]\times\Omega\,/\!\sim$ forms a complete metric space, when endowed with the metric $\fnc{d_\infty}{(\Lambda_T^d)^2}{\R_+}$, defined by
\begin{equation*}
d_\infty\bigl((t,\omega),(t',\omega')\bigr)\coloneqq\sup_{s\in[0,T]}|\omega(s\wedge t)-\omega'(s\wedge t')|+|t-t'|
=\|\omega_t-\omega'_{t'}\|_\infty+|t-t'|,
\end{equation*}
for all $(t,\omega),(t',\omega')\in\Lambda_T^d$. We refer to $(\Lambda_T^d,d_\infty)$ as the \emph{space of stopped paths}.

A measurable function $\fnc{F}{\Lambda_T^d}{\R}$ is called a \emph{non-anticipative functional}, where $\Lambda_T^d$ is endowed with the Borel sigma-algebra generated by $d_\infty$ and $\R$ is endowed with the Borel sigma-algebra generated by the usual Euclidean metric. Since $(t,\omega)\sim(t,\omega_t)$, for all $(t,\omega)\in[0,T]\times\Omega$, we may regard a non-anticipative functional $\fnc{F}{\Lambda_T^d}{\R}$ as an appropriately measurable function $\fnc{F}{[0,T]\times\Omega}{\R}$ that satisfies the condition $F(t,\omega)=F(t,\omega_t)$. That is to say, the value of a non-anticipative functional, when applied to a particular time and path, depends only on the behaviour of the path up to the time. 
Note that $({F}(t,\,\cdot\,))_{t\in[0,T]}$ is a progressively measurable process, adapted to the filtration $\Fltrn{F}^0$.

Let $\M$ denote the family of (right-continuous versions of) martingales on the filtered probability space $(\Omega,\SgAlg{F}^0_T,\Fltrn{F}^0,\P)$, over the compact time-interval $[0,T]$, and let
\begin{equation*}
\M_+(1)\coloneqq\{Z\in\M\,|\,\text{$Z\geqslant 0$ and $Z(0)=1$}\}
\end{equation*}
denote the sub-family of non-negative martingales starting at one. Each $Z\in\M_+(1)$ defines a probability measure $\Pr{Q}_Z$ on $(\Omega,\SgAlg{F}^0_T)$ satisfying $\Pr{Q}_Z\ll\P$ (i.e. $\Pr{Q}_Z$ is absolutely continuous w.r.t $\P$), according to the recipe $\Pr{Q}_Z(A)\coloneqq\E\bigl(\ind{A}Z(T)\bigr)$, for all $A\in\SgAlg{F}^0_T$. Conversely, each probability measure $\Pr{Q}$ on $(\Omega,\SgAlg{F}^0_T)$ satisfying $\Pr{Q}\ll\P$ can be written as $\Pr{Q}=\Pr{Q}_Z$, where $Z\in\M_+(1)$ is determined by
\begin{equation*}
Z(t)\coloneqq\E\biggl(\frac{\d\Pr{Q}}{\d\P}\,\biggl|\,\SgAlg{F}^0_t\biggr), 	
\end{equation*}
for all $t\in[0,T]$.

Consider a twice-differentiable strictly convex function $\fnc{f}{\R_+}{\R}$ satisfying $f(1)=0$. 
For any probability measure $\Pr{Q}$ on $(\Omega,\SgAlg{F}^0_T)$ satisfying $\Pr{Q}\ll\P$, the $f$-divergence of $\Pr{Q}$ with respect to $\P$ is defined by
\begin{equation}\label{eq:defdf}
D_f(\Pr{Q}\|\P)\coloneqq\E\biggl(f\biggl(\frac{\d\Pr{Q}}{\d\P}\biggr)\biggr)
\end{equation}
\citep[see][Section~2]{Bas13}. Intuitively, $f$-divergence provides a measure of the distance between two probability measures. Hence, the set
\begin{equation*}
\mathcal{Z}_\eta\coloneqq\{Z\in\M_+(1)\,|\,D_f(\Pr{Q}_Z\|\P)\leqslant\eta\},
\end{equation*}
where $\eta\geqslant 0$, corresponds to the family of absolutely continuous probability measures that are close to the reference probability measure $\P$. 

Finally, fix a non-anticipative functional $\fnc{\ell}{\Lambda_T^d}{\R}$ satisfying $\ell(0,0)=0$. We shall interpret $\ell(t,\omega)$ as the cumulative realized loss  up to time $t$, incurred by a portfolio of financial securities. The state of the portfolio is completely determined by the path $\omega\in\Omega$. The condition of the reference probability measure guarantees 
\begin{equation*}
\P\bigl(X(0)^{-1}\{0\}\bigr)=\P(\{\omega\in\Omega\,|\,\omega(0)=0\})=1, 	
\end{equation*}
It follows that $\ell(0,\,\cdot\,)=0$ $\P$-a.s. That is to say, the initial realized loss incurred by the portfolio is zero under the reference probability measure. If we interpret $\P$ as the probability measure associated with a nominal model for the dynamics of the portfolio, then $\E\bigl(\ell(T,\,\cdot\,)\bigr)$ gives the expected total loss under the nominal model. In financial applications, we usually set the terminal time $T$ as the point when the entire portfolio gets liquidated, thus realizing the cumulative loss.

Suppose, now, that there is some uncertainty about which model best describes the portfolio. In particular, suppose that each probability measure determined by a member of $\mathcal{Z}_\eta$, for some $\eta\geqslant 0$, corresponds to a plausible model for the dynamics of the portfolio.\footnote{The idea here is that all absolutely continuous probability measures close enough to the reference measure (in the sense of $f$-divergence) correspond with models that are plausibly close to the reference model.} In that case, a risk manager would be interested in the following quantities:
\begin{equation}
\label{eqSec2:ModelRisk}
\sup_{Z\in\mathcal{Z}_\eta}\E^{\Pr{Q}_Z}\bigl(\ell(T,\,\cdot\,)\bigr)
\qquad\text{and}\qquad
\sup_{Z\in\mathcal{Z}_\eta}\E^{\Pr{Q}_Z}\bigl(\ell(T,\,\cdot\,)\bigr)-\E\bigl(\ell(T,\,\cdot\,)\bigr).
\end{equation}
The former expression may be regarded as the worst-case expected loss suffered by the portfolio under all plausible models, while the latter expression quantifies the difference between the worst-case expected loss and the expected loss under the default model. As such, it serves as a measure of model risk.

Problem defined in \eqref{eqSec2:ModelRisk} may be formulated in a dual form \cite[]{glasserman2014robust}. 
We first define the Lagrangian $\mathcal{L}:\M_+(1)\times(0,\infty)\times(0,\infty)\to\mathbb{R}$ by
\begin{equation*}
\mathcal{L}(Z,\vartheta, \eta)\coloneqq\E^{\Pr{Q}_Z}\bigl(\ell(T,\,\cdot\,)\bigr)-\frac{D_f(\Pr{Q}_Z\|\P)-\eta}{\vartheta}
=\E^{\Pr{Q}_Z}\biggl(\ell(T,\,\cdot\,)-\frac{f\bigl(Z(T)\bigr)}{\vartheta Z(T)}\biggr)+\frac{\eta}{\vartheta},
\end{equation*}
The Lagrangian leads to a dual function defined by
\begin{align*}
d(\vartheta, \eta)\coloneqq&\sup_{Z\in\M_+(1)}\mathcal{L}(Z,\vartheta, \eta)
=\sup_{Z\in\M_+(1)}\mathsf{E}^{\mathsf{Q}_Z}\bigl(\widehat\ell(T,Z)\bigr)+\frac{\eta}{\vartheta}
\end{align*}
Given $t\in[0,T]$ and $Z\in\M_+(1)$, 
\begin{equation}
\label{eqSec3:defhatl}
\widehat{\ell}_{\vartheta}(t,Z)\coloneqq\ell(t,\,\cdot\,)-\frac{f\bigl(Z(t)\bigr)}{\vartheta Z(t)}
\end{equation}
defines a $\SgAlg{F}^0_t$-measurable function $\widehat{\ell}_{\vartheta}(t,Z):\Omega\to\mathbb{R}$. 
As with $\ell:[0,T]\times\Omega\to\mathbb{R}$, $\widehat{\ell}_{\vartheta}( \cdot,Z)$ may be regarded as a non-anticipative functional.
 
If the primal problem is convex and the constraint satisfies Slater's condition \cite[]{slater2014lagrange}, then strong duality holds, giving
\begin{align}\label{eq:strong}
\sup_{Z\in\mathcal{Z}_\eta}\mathsf{E}^{\mathsf{Q}_Z}\left(\ell(T,\cdot)\right)=\inf_{\vartheta\in(0,\infty)}d(\vartheta, \eta)
\end{align}
This is proved in the following lemma.
\begin{lemma}\label{le:0}
The following statements are true:
\begin{enumerate}
\item
The set $\mathcal{Z}_\eta$ is convex.
\item
The function $\mathcal{Z}_\eta\ni Z\mapsto\E^{\Pr{Q}_z}\bigl(\ell(T,\,\cdot\,)\bigr)$ is convex.
\item
Strong duality Eq.~\ref{eq:strong} holds.
\item
Given $\vartheta^*\in(0,\infty)$, and suppose that $Z^*\in\M_+(1)$ satisfies
\begin{align*}
Z^*=\arg\max_{Z\in\M_+(1)}
\mathsf{E}^{\mathsf{Q}_Z}\bigl(\widehat\ell_{\vartheta^*}(T,Z)\bigr)
\end{align*}
then 
\begin{align*}
Z^*=\arg\max_{Z\in\mathcal{Z}_\eta}\E^{\mathsf{Q}_Z}\bigl(\ell(T,\cdot)\bigr)
\end{align*}
with $\eta:=\E\left(f\left(Z^*(T)\right)\right)$. 
\end{enumerate}
\end{lemma}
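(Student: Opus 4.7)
The plan is to handle the four claims in sequence; parts (1) and (2) are elementary convexity checks that set up Slater's condition for part (3), and part (4) is a Lagrangian sufficiency argument built from weak duality.

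For (1), I first observe that $\M_+(1)$ is itself convex, since a convex combination of non-negative càdlàg martingales starting at one is again such. Given $Z_1,Z_2\in\mathcal{Z}_\eta$ and $\lambda\in[0,1]$, set $Z\coloneqq\lambda Z_1+(1-\lambda)Z_2\in\M_+(1)$. Pointwise convexity of $f$ gives $f(Z(T))\leqslant\lambda f(Z_1(T))+(1-\lambda)f(Z_2(T))$, and taking $\P$-expectations yields $D_f(\Pr{Q}_Z\|\P)\leqslant\eta$, so $Z\in\mathcal{Z}_\eta$. For (2), the change-of-measure identity $\E^{\Pr{Q}_Z}(\ell(T,\cdot))=\E(Z(T)\ell(T,\cdot))$ exposes the map as linear in $Z$, hence simultaneously convex and concave (I interpret the stated claim as this linearity, which is what the duality argument actually uses).

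For (3), I apply Slater's theorem. The constraint qualification is verified by the reference martingale $Z\equiv 1$, which corresponds to $\Pr{Q}_Z=\P$ and gives $D_f(\P\|\P)=f(1)=0<\eta$ whenever $\eta>0$; the degenerate case $\eta=0$ forces $\Pr{Q}_Z=\P$ and reduces \eqref{eq:strong} to a tautology. Combined with the convex feasible set from (1) and the concave (indeed linear) objective from (2), the standard Slater strong-duality theorem for convex programs yields \eqref{eq:strong}.

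For (4), with $\eta\coloneqq\E(f(Z^*(T)))=D_f(\Pr{Q}_{Z^*}\|\P)$ by construction, complementary slackness holds automatically at $Z^*$. For any $Z\in\mathcal{Z}_\eta$, the chain
\begin{align*}
\E^{\Pr{Q}_Z}(\ell(T,\cdot)) &= \L(Z,\vartheta^*,\eta)+\frac{D_f(\Pr{Q}_Z\|\P)-\eta}{\vartheta^*} \\
&\leqslant \L(Z,\vartheta^*,\eta)\leqslant\L(Z^*,\vartheta^*,\eta)=\E^{\Pr{Q}_{Z^*}}(\ell(T,\cdot))
\end{align*}
holds, where the first inequality uses $\vartheta^*>0$ together with $D_f(\Pr{Q}_Z\|\P)\leqslant\eta$, the second uses the assumed optimality of $Z^*$ for the unconstrained Lagrangian problem, and the final equality uses $\eta=D_f(\Pr{Q}_{Z^*}\|\P)$. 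Since $Z^*\in\mathcal{Z}_\eta$ and attains the upper bound, it solves the constrained problem.

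The main obstacle I anticipate is the rigorous infinite-dimensional invocation of Slater in (3): the feasible set sits inside $\Lp{1}(\P)$ via terminal densities, and applying an off-the-shelf Lagrangian duality result requires verifying lower semicontinuity of $Z\mapsto D_f(\Pr{Q}_Z\|\P)$ (which follows from Fatou applied to the convex integrand $f$) and choosing a topology in which the Slater point $Z\equiv 1$ lies in the relative interior of the sublevel set. Everything else is bookkeeping with the Lagrangian.
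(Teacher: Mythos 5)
Your parts (1)--(3) follow the paper's proof essentially verbatim: pointwise convexity of $f$ for (1), linearity under the change of measure for (2), and Slater's condition with the reference density $Z\equiv 1$ for (3); your flagged worry about the infinite-dimensional Slater invocation is the same worry the paper addresses only lightly, by declaring a metric on $\M_+(1)$ and observing $Z\equiv 1$ is interior.

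Part (4) is where you take a genuinely different route, and a better one. The paper proves (4) by sandwiching $\E^{\Pr{Q}_{Z^*}}(\ell(T,\cdot))$ between $\inf_\vartheta d(\vartheta,\eta)$ and $\sup_{Z\in\mathcal{Z}_\eta}\E^{\Pr{Q}_Z}(\ell(T,\cdot))$ and then invoking strong duality from part (3) to collapse the chain to an equality; in other words, (4) is logically downstream of (3) in the paper. Your argument is a self-contained Lagrangian sufficiency (weak-duality) chain: for every feasible $Z\in\mathcal{Z}_\eta$,
\begin{align*}
\E^{\Pr{Q}_Z}(\ell(T,\cdot))
=\L(Z,\vartheta^*,\eta)+\frac{D_f(\Pr{Q}_Z\|\P)-\eta}{\vartheta^*}
\leqslant \L(Z,\vartheta^*,\eta)
\leqslant \L(Z^*,\vartheta^*,\eta)
=\E^{\Pr{Q}_{Z^*}}(\ell(T,\cdot)),
\end{align*}
with the first inequality from $\vartheta^*>0$ and feasibility, the second from optimality of $Z^*$ for the unconstrained Lagrangian maximisation (which is equivalent, up to the additive constant $\eta/\vartheta^*$, to the stated $\arg\max$ over $\widehat\ell_{\vartheta^*}$), and the final equality from complementary slackness $D_f(\Pr{Q}_{Z^*}\|\P)=\eta$. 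Since $Z^*\in\mathcal{Z}_\eta$ attains the upper bound, it is optimal. This does not use (3) at all. What your approach buys is a cleaner dependency graph (part (4) survives even if one is sceptical of the rigor of the Slater application) and a more elementary argument; what the paper's approach buys is the bonus observation that the dual optimum is attained at $\vartheta^*$ (i.e.\ $\inf_\vartheta d(\vartheta,\eta)=d(\vartheta^*,\eta)$), which your chain does not establish, though the lemma as stated does not require it.
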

\begin{proof}
(1)~Given $Z_1,Z_2\in\mathcal{Z}_\eta$, observe that
\begin{align*}\label{eq:convex}
D_f(\Pr{Q}_{\lambda Z_1+(1-\lambda)Z_2}\|\P)
=D_f\bigl(\lambda\Pr{Q}_{Z_1}+(1-\lambda)\Pr{Q}_{Z_2}\bigl\|\P\bigr)
&=\E\bigl(f\bigl(\lambda Z_1(T)+(1-\lambda)Z_2(T)\bigr)\bigr)\\
&\leqslant\E\bigl(\lambda f\bigl(Z_1(T)\bigr)+(1-\lambda) f\bigl(Z_2(T)\bigr)\bigr)\\
&=\lambda\E\bigl(f\bigl(Z_1(T)\bigr)\bigr)+(1-\lambda)\E\bigl(f\bigl(Z_2(T)\bigr)\bigr)\\
&=\lambda D_f(\Pr{Q}_{Z_1}\|\P)+(1-\lambda)D_f(\Pr{Q}_{Z_2}\|\P),
\end{align*}
for all $\lambda\in[0,1]$, by virtue of the convexity of $f$ and Jensen's inequality. Since $D_f(\Pr{Q}_{Z_1}\|\P)\leqslant\eta$ and $D_f(\Pr{Q}_{Z_1}\|\P)\leqslant\eta$, the inequality above leads to $D_f(\Pr{Q}_{\lambda Z_1+(1-\lambda)Z_2}\|\P)\leqslant\eta$. This implies that $\lambda Z_1+(1-\lambda)Z_2\in\mathcal{Z}_\eta$, by virtue of the fact that $\lambda Z_1+(1-\lambda)Z_2\in\M_+(1)$.
\newline\vspace{2mm}\noindent
(2)~Given $Z_1,Z_2\in\mathcal{Z}_\eta$, observe that
\begin{align*}
\E^{\Pr{Q}_{\lambda Z_1+(1-\lambda)Z_2}}\bigl(\ell(T,\,\cdot\,)\bigr)	
&=\E\bigl(\bigl(\lambda Z_1(T)+(1-\lambda)Z_2(T)\bigr)\ell(T,\,\cdot\,)\bigr)\\
&=\lambda\E\bigl(Z_1(T)\ell(T,\,\cdot\,)\bigr)+(1-\lambda)\E\bigl(Z_2(T)\ell(T,\,\cdot\,)\bigr)\\
&=\lambda\E^{\Pr{Q}_{Z_1}}\bigl(\ell(T,\,\cdot\,)\bigr)+(1-\lambda)\E^{\Pr{Q}_{Z_2}}\bigl(\ell(T,\,\cdot\,)\bigr),
\end{align*}
for all $\lambda\in[0,1]$. 
Hence, the function $\mathcal{Z}_\eta\ni Z\mapsto\E^{\Pr{Q}_Z}\bigl(\ell(T,\,\cdot\,)\bigr)$ is linear and therefore also convex.\\
(3)~For a given $\eta\in(0,\infty)$, the constant process $Z=1$ satisfies $D_f(\mathsf{Q}_Z||P)=D_f(P||P)=0<\eta$. It is also an interior point of the subset $\mathcal{Z}_\eta\subseteq\M_+(1)$.\footnote{To see this point, consider the continuous function $\mathcal{H}:\M_+(1)\to\mathbb{R}$ defined by $\mathcal{H}(Z)=\E\left(f\left(Z(T)\right)\right)$ (we endow $\M_+(1)$ with the topology induced by the metric $d(Z_1,Z_2)=\E(|f(Z_1(T))-f_2(Z_2(T))|)$. The continuity ensures that $S:=\mathcal{H}^{-1}\left((-\eta, \eta)\right)$ is an open subset of $\M_+(1)$. Furthermore,
\begin{align*}
S\subseteq\{Z\in\M_+(1)|D_f(\Pr{Q}_Z||\Pr{P})<\eta\}\subseteq\mathcal{Z}_\eta
\end{align*}
suggesting that $S\subseteq \mathsf{int}(\mathcal{Z}_\eta)$. As an element in $S$, the constant process $Z=1$ is an interior point of $\mathcal{Z}_\eta$.}
According to Slater's condition \cite[]{slater2014lagrange}, the strong duality holds.
\\
(4)~Let $\eta:=\E\left(f\left(Z^*(T)\right)\right)$, and observe that
\begin{align*}
\inf_{\vartheta\in(0,\infty)}d(\vartheta, \eta)\leqslant&\, d(\vartheta^*, \eta)\\
=&\,\sup_{Z\in\M_+(1)} 
\mathsf{E}^{\mathsf{Q}_{Z}}\bigl(\widehat\ell_{\vartheta^*}(T,Z)\bigr)+\frac{\eta}{\vartheta^*}\\
=&\,\mathsf{E}^{\mathsf{Q}_{Z^*}}\bigl(\widehat\ell_{\vartheta^*}(T,Z^*)\bigr)+\frac{\eta}{\vartheta^*}\\
=&\,\mathsf{E}^{\mathsf{Q}_{Z^*}}\bigl(\ell(T,\cdot)\bigr)-\frac{1}{\vartheta^*}\E^{\mathsf{Q}_{Z^*}}\left(\frac{f(Z^*(T))}{Z^*(T)}\right)+\frac{\eta}{\vartheta^*}\\
=&\,\mathsf{E}^{\mathsf{Q}_{Z^*}}\bigl(\ell(T,\cdot)\bigr)-\frac{1}{\vartheta^*}\E\left({f(Z^*(T))}\right)+\frac{\eta}{\vartheta^*}\\
=&\,\mathsf{E}^{\mathsf{Q}_{Z^*}}\bigl(\ell(T,\cdot)\bigr)\\
\leqslant&\,
\sup_{Z\in\mathcal{Z}_\eta}\mathsf{E}^{\mathsf{Q}_{Z}}\bigl(\ell(T,\cdot)\bigr)
\end{align*}
Lemma \ref{le:0}(3) then ensures that
\begin{align*}
\inf_{\vartheta\in(0,\infty)}d(\vartheta, \eta)
=\mathsf{E}^{\mathsf{Q}_{Z^*}}\bigl(\ell(T,\cdot)\bigr)
=\sup_{Z\in\mathcal{Z}_\eta}\mathsf{E}^{\mathsf{Q}_{Z}}\bigl(\ell(T,\cdot)\bigr)
\end{align*}
and the result follows.
\end{proof}
For the primal problem formulated in Eq.~\ref{eqSec2:ModelRisk}, Lemma.~\ref{le:0}(4) implies the existence of a solution $Z^*$ that lies on the boundary of $\mathcal{Z}_\eta$ given $\eta>0$ (i.e. $\E\left(f\left(Z^*(T)\right)\right)=\eta$), as long as $Z^*$ solves 
\begin{equation}
\label{eqSec3:IntProb}
\max_{Z\in\M_+(1)}\E^{\Pr{Q}_Z}\bigl(\widehat{\ell}_{\vartheta}(T,Z)\bigr)
\end{equation}
for some $\vartheta\in(0,\infty)$. In the following context, we will consider the dual problem formulated in Eq.~\ref{eqSec3:IntProb} instead of the primal problem. For simplicity, we will regard $\theta>0$ as given and express $\widehat{\ell}_{\vartheta}$ by $\widehat{\ell}$.

\section{Characterising the Worst-Case Expected Loss}
This section provides implicit characterisation of the solution to the worst-case expected loss problem formulated in \eqref{eqSec2:ModelRisk}. 

Given $t\in[0,T]$ and $\bar{Z}\in\M_+(1)$, define the family of $\bar{Z}$-consistent martingale densities up to time $t$ by
\begin{equation*}
\mathcal{Z}(t,\bar{Z})\coloneqq\{Z\in\M_+(1)\,|\,Z(t)=\bar{Z}(t)\}.
\end{equation*}
Note that $\mathcal{Z}(0,\bar{Z})=\M_+(1)$, since $Z(0)=1=\bar{Z}(0)$ for all $Z\in\M_+(1)$. Note that the martingale property of the members of $\mathcal{Z}(t,\bar{Z})$ ensures that
\begin{equation*}
Z(s)=\E\bigl(Z(t)\,|\,\SgAlg{F}^0_s\bigr)=\E\bigl(\bar{Z}(t)\,|\,\SgAlg{F}^0_s\bigr)=\bar{Z}(s),
\end{equation*}
for all $Z\in\mathcal{Z}(t,\bar{Z})$ and all $s\in[0,t]$. In other words, $\mathcal{Z}(t,\bar{Z})$ is the set of processes in $\M_+(1)$ that are consistent with $\bar{Z}$ over the interval $[0,t]$. Moreover, we observe that
\begin{align*}
\Pr{Q}_{Z}(A)
=\E\bigl(\ind{A}Z(T)\bigr)
=\E\bigl(\E\bigl(\ind{A}Z(T)\,|\,\SgAlg{F}^0_t\bigr)\bigr)
=\E\bigl(\ind{A}Z(t)\bigr)
&=\E\bigl(\ind{A}\bar{Z}(t)\bigr)\\
&=\E\bigl(\E\bigl(\ind{A}\bar{Z}(T)\,|\,\SgAlg{F}^0_t\bigr)\bigr)\\
&=\E\bigl(\ind{A}\bar{Z}(T)\bigr)\\
&=\Pr{Q}_{\bar{Z}}(A),
\end{align*}
for all $Z\in\mathcal{Z}(t,\bar{Z})$ and all $A\in\SgAlg{F}^0_t$. That is to say, the probability measures associated with members of $\mathcal{Z}(t,\bar{Z})$ agree with each other on all $\SgAlg{F}^0_t$-measurable events. This is the set of feasible alternative measures by looking forward (from time $t$).

Given $\bar{Z}\in\M_+(1)$, we now define the $\Fltrn{F}^0$-adapted process $(\widehat{L}(t,\bar{Z}))_{t\in[0,T]}$ by 
\begin{align}
\label{eqSec3:CondIntProb}
\widehat{L}(t,\bar{Z})&\coloneqq\max_{Z\in\mathcal{Z}(t,\bar{Z})}\E^{\Pr{Q}_{Z}}\bigl(\widehat{\ell}(T,Z)-\widehat{\ell}(t,Z)\,\bigl|\,\SgAlg{F}^0_t\bigr)
\end{align}
for all $t\in[0,T]$, assuming the maximum always exists. Since $\widehat{\ell}(\,\cdot\,,Z)$ is a non-anticipative functional satisfying $\widehat{\ell}(0,Z)=0$ $\P$-a.s. and $Z(0)=1$ implies that $\Pr{Q}_Z|_{\SgAlg{F}^0_0}=\P|_{\SgAlg{F}^0_0}$, it follows that $\widehat{\ell}(0,Z)=0$ $\Pr{Q}_Z$-a.s. as well. Consequently,
\begin{equation}\label{eq:LoZ}
\widehat{L}(0,\bar{Z})=\max_{Z\in\M_+(1)}\E^{\Pr{Q}_{Z}}\bigl(\widehat{\ell}(T,Z)\,\bigl|\,\SgAlg{F}^0_0\bigr)
=\max_{Z\in\M_+(1)}\E^{\Pr{Q}_{Z}}\bigl(\widehat{\ell}(T,Z)\bigr),
\end{equation}
where the second equality follows from the fact that $\SgAlg{F}^0_0$ and $\SgAlg{F}^0_T$ are independent sigma-algebras, with respect to $\Pr{Q}_Z$.\footnote{First observe that $Z(0)=1$ implies that $\Pr{Q}_Z(A)=\P(A)=0$ or $\Pr{Q}_Z(A)=\P(A)=1$, for all $A\in\SgAlg{F}^0_0$. Consequently, given $A\in\SgAlg{F}^0_0$ and $B\in\SgAlg{F}^0_T$, we obtain
\begin{equation*}
0\leqslant\Pr{Q}_Z(A\cap B)\leqslant\Pr{Q}_Z(A)=0=\Pr{Q}_Z(A)\Pr{Q}_Z(B),
\end{equation*}
in the case when $\Pr{Q}_Z(A)=0$, while
\begin{align*}
\Pr{Q}_Z(A)\Pr{Q}_Z(B)=\Pr{Q}_Z(B)\geqslant\Pr{Q}_Z(A\cap B)
=\Pr{Q}_Z\bigl(\cmpl{(\cmpl{A}\cup\cmpl{B})}\bigr)
=1-\Pr{Q}_Z(\cmpl{A}\cup\cmpl{B})
&\geqslant 1-\bigl(\Pr{Q}_Z(\cmpl{A})+\Pr{Q}_Z(\cmpl{B})\bigr)\\
&=1-\Pr{Q}_Z(\cmpl{B})\\
&=\Pr{Q}_Z(B)\\
&=\Pr{Q}_Z(A)\Pr{Q}_Z(B),
\end{align*}
in the case when $\Pr{Q}_Z(A)=1$.} This is simply the problem given in Eq.~\ref{eqSec3:IntProb}.

\begin{definition} \label{def:z*}
A worst-case density process is some $Z^*\in\M_+(1)$ that solves the maximisation problem \eqref{eqSec3:CondIntProb} w.r.t the family of $Z^*$-consistent martingale densities:
\begin{equation}\label{eq:def0}
\E^{\Pr{Q}_{Z^*}}\bigl(\widehat{\ell}(T,Z^*)-\widehat{\ell}(t,Z^*)\,\bigl|\,\SgAlg{F}^0_t\bigr)=\widehat{L}(t,Z^*)
\end{equation}
for each $t\in[0,T]$.
\end{definition}
Suppose $Z^*\in\M_+(1)$ is a worst-case martingale density according to the definition above, then $Z^*$ solves the problem formulated in Eq.~\ref{eqSec3:IntProb}. This is confirmed by substituting Eq.~\ref{eq:LoZ} into Eq.~\ref{eq:def0} which leads to $
\E^{\Pr{Q}_{Z^*}}\bigl(\widehat{\ell}(T,Z^*)\bigr)=\max_{Z\in\M_+(1)}\E^{\Pr{Q}_{Z}}\bigl(\widehat{\ell}(T,Z)\bigr)$. 
In the proposition below, we characterizes such worst-case density by its martingale property.

\begin{proposition}\label{pro:1}
Fix $\bar{Z}\in\M_+(1)$ and suppose 
the maximum in \eqref{eqSec3:CondIntProb} exists
 for each $t\in[0,T]$. Then the process $[0,T]\ni t\mapsto\widehat{L}(t,\bar{Z})+\widehat{\ell}(t,\bar{Z})$ is a $\Pr{Q}_{\bar{Z}}$-supermartingale. 
It is a $\Pr{Q}_{\bar{Z}}$-martingale iff $\bar{Z}$ is a worst-case density process.
\end{proposition}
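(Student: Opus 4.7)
The plan is to establish the supermartingale property by a Bellman-type dynamic-programming argument leaning on two structural facts already exposed in the excerpt: (i)~$\mathcal{Z}(t,\bar{Z})\subseteq\mathcal{Z}(s,\bar{Z})$ whenever $s\leqslant t$, since any $Z\in\mathcal{Z}(t,\bar{Z})$ is a martingale pinned to $\bar{Z}$ at $t$ and hence also at $s$; and (ii)~$\Pr{Q}_Z$ and $\Pr{Q}_{\bar{Z}}$ agree on $\SgAlg{F}^0_t$ for such $Z$, so that $\SgAlg{F}^0_t$-measurable functionals (in particular $\widehat{\ell}(u,Z)$ and $\widehat{L}(u,\bar{Z})$ for $u\leqslant t$) do not distinguish $Z$ from $\bar{Z}$. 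The iff clause then drops out of the tower property once I observe that $\mathcal{Z}(T,\bar{Z})=\{\bar{Z}\}$ (terminal pinning of a martingale in $\M_+(1)$ determines it everywhere), whence $\widehat{L}(T,\bar{Z})=0$.

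For the supermartingale direction I fix $s\leqslant t$ and, for arbitrary $Z\in\mathcal{Z}(t,\bar{Z})$, use (i)--(ii) to write $\widehat{\ell}(u,Z)=\widehat{\ell}(u,\bar{Z})$ for $u\in\{s,t\}$ and decompose
\begin{equation*}
\E^{\Pr{Q}_Z}\bigl(\widehat{\ell}(T,Z)-\widehat{\ell}(s,Z)\,\bigl|\,\SgAlg{F}^0_s\bigr)
=\E^{\Pr{Q}_Z}\bigl(\widehat{\ell}(T,Z)-\widehat{\ell}(t,Z)\,\bigl|\,\SgAlg{F}^0_s\bigr)
+\E^{\Pr{Q}_{\bar{Z}}}\bigl(\widehat{\ell}(t,\bar{Z})\,\bigl|\,\SgAlg{F}^0_s\bigr)-\widehat{\ell}(s,\bar{Z}).
\end{equation*}
Next, for each $\epsilon>0$, I construct $Z^{(\epsilon)}\in\mathcal{Z}(t,\bar{Z})$ by splicing $\bar{Z}$ on $[0,t]$ with an $\epsilon$-optimal continuation for the inner problem defining $\widehat{L}(t,\bar{Z})$, so that $\E^{\Pr{Q}_{Z^{(\epsilon)}}}(\widehat{\ell}(T,Z^{(\epsilon)})-\widehat{\ell}(t,Z^{(\epsilon)})\,|\,\SgAlg{F}^0_t)\geqslant\widehat{L}(t,\bar{Z})-\epsilon$ almost surely. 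Applying the tower property for $\SgAlg{F}^0_s\subseteq\SgAlg{F}^0_t$, combined with the $\SgAlg{F}^0_t$-measurability of $\widehat{L}(t,\bar{Z})$ and fact (ii), turns the decomposition into
\begin{align*}
\widehat{L}(s,\bar{Z})
&\geqslant \E^{\Pr{Q}_{Z^{(\epsilon)}}}\bigl(\widehat{\ell}(T,Z^{(\epsilon)})-\widehat{\ell}(s,Z^{(\epsilon)})\,\bigl|\,\SgAlg{F}^0_s\bigr) \\
&\geqslant \E^{\Pr{Q}_{\bar{Z}}}\bigl(\widehat{L}(t,\bar{Z})+\widehat{\ell}(t,\bar{Z})\,\bigl|\,\SgAlg{F}^0_s\bigr) - \widehat{\ell}(s,\bar{Z}) - \epsilon,
\end{align*}
and sending $\epsilon\downarrow 0$ yields the supermartingale inequality.

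For the equivalence, if $\bar{Z}=Z^*$ is worst-case then Definition~\ref{def:z*} gives $\widehat{L}(t,Z^*)+\widehat{\ell}(t,Z^*)=\E^{\Pr{Q}_{Z^*}}(\widehat{\ell}(T,Z^*)\,|\,\SgAlg{F}^0_t)$, which is manifestly a $\Pr{Q}_{Z^*}$-martingale. Conversely, if the process is a $\Pr{Q}_{\bar{Z}}$-martingale, then applying the martingale identity between $t$ and $T$ with $\widehat{L}(T,\bar{Z})=0$ forces $\widehat{L}(t,\bar{Z})=\E^{\Pr{Q}_{\bar{Z}}}(\widehat{\ell}(T,\bar{Z})-\widehat{\ell}(t,\bar{Z})\,|\,\SgAlg{F}^0_t)$; since $\bar{Z}\in\mathcal{Z}(t,\bar{Z})$ trivially, this exhibits $\bar{Z}$ as the maximizer in \eqref{eqSec3:CondIntProb}, so $\bar{Z}$ is worst-case. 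The main obstacle is the $\epsilon$-optimal gluing step: I need a measurable selection of near-optimal continuations after time $t$ that, once concatenated with $\bar{Z}$ on $[0,t]$, still lies in $\M_+(1)$; sufficient integrability of $\widehat{L}(t,\bar{Z})$ under $\Pr{Q}_{\bar{Z}}$ also needs to be verified so that the conditional expectations appearing above are well defined.
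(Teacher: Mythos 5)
Your proof is correct and follows essentially the same decomposition-and-tower-property route as the paper's. The one avoidable detour is the $\epsilon$-optimal splicing step and the attendant worries in your final paragraph about measurable selection and gluing: the proposition's hypothesis is precisely that the maximum in \eqref{eqSec3:CondIntProb} is attained, so you can simply take $Z'\in\mathcal{Z}(t,\bar{Z})$ to be the maximizer and note that it automatically lies in $\mathcal{Z}(s,\bar{Z})$ for $s\leqslant t$ (a martingale pinned to $\bar{Z}$ at time $t$ is pinned at all earlier times); no concatenation is required, and $Z'$ is already a bona fide element of $\M_+(1)$. This is exactly what the paper does, and your $\epsilon=0$ case recovers it. Two points of yours are actually slight improvements in exposition: you make explicit that $\mathcal{Z}(T,\bar{Z})=\{\bar{Z}\}$, hence $\widehat{L}(T,\bar{Z})=0$, which the paper uses silently when it sets $t=T$; and for the forward implication you observe directly that $\widehat{L}(t,Z^*)+\widehat{\ell}(t,Z^*)=\E^{\Pr{Q}_{Z^*}}(\widehat{\ell}(T,Z^*)\,|\,\SgAlg{F}^0_t)$ is a conditional expectation of a fixed integrable random variable and hence trivially a $\Pr{Q}_{Z^*}$-martingale, whereas the paper reaches the same conclusion more circuitously by substituting $Z'=\bar{Z}$ into its inequality chain. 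The integrability concern you flag at the end is already implicit in the well-posedness of \eqref{eqSec3:CondIntProb} and is not addressed by the paper either.
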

\begin{proof}
Given an arbitrary $t\in[0,T]$, we suppose $Z'\in\mathcal{Z}(t,\bar{Z})$ solves the maximisation problem (Eq.~\ref{eqSec3:CondIntProb}). 
Applying the law of iterated expectation, we have
\begin{align}\label{eq:iterated}
\E^{\Pr{Q}_{Z'}}\left(\left.\widehat\ell(T, Z')-\widehat\ell(t, Z')\,\right|\SgAlg{F}^0_s\right)
=&\E^{\Pr{Q}_{Z'}}\biggl(\biggl.\E^{\Pr{Q}_{Z'}}\left(\left.\widehat\ell(T, Z')-\widehat\ell(t, Z')\,\right|\SgAlg{F}^0_t\right)\biggr|\SgAlg{F}^0_s\biggr)\nonumber\\
=&\E^{\Pr{Q}_{Z'}}\left(\left.\widehat{L}(t,\bar{Z})\,\right|\SgAlg{F}^0_s\right)
\end{align}
for all $s\in[0, t]$. 
By virtue of $Z'(s)=\bar{Z}(s)$, $\widehat\ell(s, Z')=\widehat\ell(s, \bar{Z})$ for all $s\in[0, t]$. The same condition also leads to $Z'\in\mathcal{Z}(s,\bar{Z})$. 
According to the definition of $\widehat{L}$ (Eq.~\ref{eqSec3:CondIntProb}), we have the following inequality
\begin{align}\label{eq:jnequal}
\widehat{L}(s,\bar{Z})&\geqslant \E^{\Pr{Q}_{Z'}}\left(\left.\widehat\ell(T, Z')-\widehat\ell(s, Z')\,\right|\SgAlg{F}^0_s\right)\nonumber\\
&=\E^{\Pr{Q}_{Z'}}\left(\left.\widehat\ell(t, Z')-\widehat\ell(s, Z')\,\right|\SgAlg{F}^0_s\right)+
\E^{\Pr{Q}_{Z'}}\left(\left.\widehat\ell(T, Z')-\widehat\ell(t, Z')\,\right|\SgAlg{F}^0_s\right)\nonumber\\
&=\E^{\Pr{Q}_{\bar{Z}}}\left(\left.\widehat\ell(t, \bar{Z})-\widehat\ell(s, \bar{Z})+\widehat{L}(t,\bar{Z})\,\right|\SgAlg{F}^0_s\right)
\end{align}
for all $s\in[0, t]$. 
In the last equality, we replace $\Pr{Q}_{Z'}$ by $\Pr{Q}_{\bar{Z}}$ because $\widehat{\ell}(t,\bar{Z})$, $\widehat{\ell}(s,\bar{Z})$ and $\widehat{L}(t,\bar{Z})$ are all $\SgAlg{F}^0_t$-measurable.\footnote{The conditional expectation of a $\SgAlg{F}^0_t$-measurable function $X:\Omega\to\mathbb{R}$ w.r.t a sub-$\sigma$-algebra $\SgAlg{F}^0_s\subseteq\SgAlg{F}^0_t$ is
\begin{align*}
\E^{\Pr{Q}_{Z'}}\left(\left.X\,\right|\SgAlg{F}^0_s\right)
=\E\left(\left.\frac{Z'(T)}{Z'(s)}X\,\right|\SgAlg{F}^0_s\right)
=\E\left(\left.\frac{Z'(t)}{Z'(s)}\E\left(\left.\frac{Z'(T)}{Z'(t)}X\,\right|\SgAlg{F}^0_t\right)\,\right|\SgAlg{F}^0_s\right)
=&\,\E\left(\left.\frac{Z'(t)}{Z'(s)}\E^{\Pr{Q}_{Z'}}\left(\left.X\,\right|\SgAlg{F}^0_t\right)\,\right|\SgAlg{F}^0_s\right)\\
=&\,\E\left(\left.\frac{\bar{Z}(t)}{\bar{Z}(s)}X\,\right|\SgAlg{F}^0_s\right)\\
=&\,\E^{\Pr{Q}_{\bar{Z}}}\left(\left.X\,\right|\SgAlg{F}^0_s\right)
\end{align*}
}
Since $t\in[0,T]$ is chosen arbitrarily, Eq.~\ref{eq:jnequal} holds for any $s$ and $t$ that satisfies $0\leqslant s\leqslant t\leqslant T$. 

By re-arranging Eq.~\ref{eq:jnequal}, we obtain the supermartingale property of the $\Fltrn{F}^0$-adapted process $[0,T]\ni t\mapsto\widehat{L}(t,\bar{Z})+\widehat{\ell}(t,\bar{Z})$:
\begin{align}\label{eq:Ll}
\widehat{L}(s,\bar{Z})+\widehat\ell(s, \bar{Z})\geqslant \E^{\Pr{Q}_{\bar{Z}}}\left(\left.\widehat{L}(t,\bar{Z})+\widehat\ell(t, \bar{Z})\,\right|\SgAlg{F}^0_s\right)
\end{align}
The process is a $\Pr{Q}_{\bar{Z}}$-martingale iff the equality holds for all $0\leqslant s\leqslant t\leqslant T$. If $\bar{Z}$ is a worst-case density process, then according to 
Definition~\ref{def:z*} $\bar{Z}$ solves Eq.~\ref{eqSec3:CondIntProb} for all $t\in[0,T]$. We may set $Z'=\bar{Z}$ in Eq.~\ref{eq:jnequal} so that the first line takes the equal sign for all $s\in[0,t]$.  Conversely, if the equality holds for all $0\leqslant s\leqslant t\leqslant T$, then it holds for all $0\leqslant s\leqslant t=T$. 
By taking the equal sign in Eq.~\ref{eq:Ll} and replacing $t$ by $T$, we get
\begin{align*}
\widehat{L}(s,\bar{Z})
= \E^{\Pr{Q}_{\bar{Z}}}\left(\left.\widehat\ell(T, \bar{Z})-\widehat\ell(s, \bar{Z})\,\right|\SgAlg{F}^0_s\right)
\end{align*} 
for all $s\in[0,T]$, confirming that $\bar{Z}$ is a worst-case density process by 
Definition~\ref{def:z*}. 
\end{proof}
Proposition.~\ref{pro:1} can be regarded as generalization of the dynamic programming equation. In fact, given an optimal martingale density $Z^*\in\M_+(1)$, we take an arbitrary $\bar{Z}\in\mathcal{Z}(s,Z^*)$ and substitute it into Eq.~\ref{eq:Ll}. By observing that $\bar{Z}\in\mathcal{Z}(s,Z^*)$ matches $Z^*$ up to time $s$, we transform Eq.~\ref{eq:Ll} into
\begin{align*}
\widehat{L}(s,Z^*)+\widehat\ell(s, Z^*)\geqslant \E^{\Pr{Q}_{\bar{Z}}}\left(\left.\widehat{L}(t,\bar{Z})+\widehat\ell(t, \bar{Z})\,\right|\SgAlg{F}^0_s\right)
\end{align*}
The inequality holds for all $\bar{Z}\in\mathcal{Z}(s,Z^*)$. It takes the equal sign when $\bar{Z}=Z^*$. This leads to the following dynamic programming equation with respect to the density process,
\begin{align*}
\widehat{L}(s,Z^*)+\widehat\ell(s, Z^*)=\max_{Z\in\mathcal{Z}(s,Z^*)} \E^{\Pr{Q}_{Z^*}}\left(\left.\widehat{L}(t,Z)+\widehat\ell(t, Z)\,\right|\SgAlg{F}^0_s\right)
\end{align*}
for all $s$ and $t$ that satisfies $0\leqslant s\leqslant t\leqslant T$. 

\section{General Result of Model Risk Measurement}
We have shown in Proposition.~\ref{pro:1} that the $\Fltrn{F}^0$-adapted process $[0,T]\ni t\mapsto\widehat{L}(t,{Z}^*)+\widehat{\ell}(t,{Z}^*)$ is a $\Pr{Q}_{{Z}^*}$-martingale iff ${Z}^*$ is a worst-case density process. In this section, we will show that such $Z^*$ indeed exists under certain conditions and is characterized by an equation. This leads to a complete solution to the problem formulated in Eq.~\ref{eqSec2:ModelRisk}. First we prove a lemma.
\begin{lemma}\label{le:1}
Fix a martingale density $\bar{Z}\in\M_+(1)$. A measurable process $C:[0,T]\times\Omega\to\mathbb{R}$, satisfying
\begin{equation}\label{eq:le21}
\E^{\Pr{Q}_{Z}}\left(\left.C(t,\,\cdot\,)\right|\SgAlg{F}^0_t\right)\leqslant \E^{\Pr{Q}_{\bar{Z}}}\left(\left.C(t,\,\cdot\,)\right|\SgAlg{F}^0_t\right) 
\end{equation}
for all $t\in[0,T]$ and all $Z\in\mathcal{Z}(t,\bar{Z})$, admits a progressively measurable modification, i.e. there exists a progressively measurable process $\tilde{C}:[0,T]\times\Omega\to\mathbb{R}$, regarded as a non-anticipative functional, satisfying $\Pr{Q}_{\bar{Z}}\bigl(\{\omega\in\Omega\,|\,C(t,\omega)=\tilde{C}(t,\omega)\}\bigr)=1$ for every $t\in[0,T]$. 
\end{lemma}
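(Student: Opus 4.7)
The plan is to break the argument into two stages. First, I would show that for each fixed $t$ the slice $C(t,\cdot)$ coincides $\Pr{Q}_{\bar{Z}}$-almost surely with an $\SgAlg{F}^0_t$-measurable random variable. Second, I would pass from these per-slice statements to a single progressively measurable modification by means of the optional projection.

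For the first stage, fix $t\in[0,T]$ and set $Y:=C(t,\cdot)-\E^{\Pr{Q}_{\bar{Z}}}\bigl(C(t,\cdot)\bigl|\SgAlg{F}^0_t\bigr)$, so that $\E^{\Pr{Q}_{\bar{Z}}}(Y|\SgAlg{F}^0_t)=0$ by construction. I would then parametrise perturbations of $\bar{Z}$ on $[t,T]$: given any bounded nonnegative $\SgAlg{F}^0_T$-measurable $\rho$ with $\E^{\Pr{Q}_{\bar{Z}}}(\rho|\SgAlg{F}^0_t)=1$, the process $Z(s):=\E(\rho\bar{Z}(T)|\SgAlg{F}^0_s)$ lies in $\M_+(1)$ and satisfies $Z(s)=\bar{Z}(s)$ for $s\leqslant t$, so $Z\in\mathcal{Z}(t,\bar{Z})$. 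A Bayes-rule computation gives
\begin{equation*}
\E^{\Pr{Q}_Z}\bigl(C(t,\cdot)\bigl|\SgAlg{F}^0_t\bigr)-\E^{\Pr{Q}_{\bar{Z}}}\bigl(C(t,\cdot)\bigl|\SgAlg{F}^0_t\bigr)=\E^{\Pr{Q}_{\bar{Z}}}(\rho Y|\SgAlg{F}^0_t).
\end{equation*}
Suppose, for contradiction, that $Y$ is not $\Pr{Q}_{\bar{Z}}$-a.s.\ zero. Because $\E^{\Pr{Q}_{\bar{Z}}}(Y|\SgAlg{F}^0_t)=0$, the set $A:=\{\E^{\Pr{Q}_{\bar{Z}}}(Y^+|\SgAlg{F}^0_t)>0\}\in\SgAlg{F}^0_t$ has positive $\Pr{Q}_{\bar{Z}}$-measure (otherwise $Y^+\equiv 0$ would force $Y\leqslant 0$ and then $Y\equiv 0$). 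Taking
\begin{equation*}
\rho:=1+\epsilon\bigl(\ind{\{Y>0\}}-\E^{\Pr{Q}_{\bar{Z}}}(\ind{\{Y>0\}}|\SgAlg{F}^0_t)\bigr)
\end{equation*}
with $\epsilon\in(0,1)$ small enough to keep $\rho$ nonnegative, one obtains $\E^{\Pr{Q}_{\bar{Z}}}(\rho Y|\SgAlg{F}^0_t)=\epsilon\E^{\Pr{Q}_{\bar{Z}}}(Y^+|\SgAlg{F}^0_t)$, which is strictly positive on $A$. This violates the hypothesised inequality, so $Y=0$ $\Pr{Q}_{\bar{Z}}$-a.s., i.e.\ $C(t,\cdot)=\E^{\Pr{Q}_{\bar{Z}}}(C(t,\cdot)|\SgAlg{F}^0_t)$ $\Pr{Q}_{\bar{Z}}$-a.s.

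For the second stage, I would invoke the classical optional-projection theorem under $\Pr{Q}_{\bar{Z}}$. Since $C$ is jointly measurable and, by the first stage, adapted to $\Fltrn{F}^0$ modulo $\Pr{Q}_{\bar{Z}}$-null sets, it admits an $\Fltrn{F}^0$-optional projection $\tilde{C}$, which is in particular progressively measurable and satisfies $\tilde{C}(t,\cdot)=\E^{\Pr{Q}_{\bar{Z}}}(C(t,\cdot)|\SgAlg{F}^0_t)=C(t,\cdot)$ $\Pr{Q}_{\bar{Z}}$-a.s.\ for every $t\in[0,T]$. Because $\SgAlg{F}^0_t$ is generated by the canonical coordinate process stopped at time $t$, the progressively measurable $\tilde{C}$ identifies canonically with a non-anticipative functional on $\Lambda_T^d$.

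The main obstacle is the first stage: the perturbation $\rho$ must simultaneously (i) induce a density process lying in $\mathcal{Z}(t,\bar{Z})$, which is handled by the martingale construction $Z(s)=\E(\rho\bar{Z}(T)|\SgAlg{F}^0_s)$ together with the normalisation $\E^{\Pr{Q}_{\bar{Z}}}(\rho|\SgAlg{F}^0_t)=1$; (ii) remain nonnegative, which forces the small-$\epsilon$ tilt; and (iii) correlate with $Y$ so that the expected deviation is strictly positive on a non-negligible $\SgAlg{F}^0_t$-event, which is achieved by tilting along the sign of $Y$. Once these three constraints are reconciled, the second stage reduces to a routine application of the optional-projection machinery.
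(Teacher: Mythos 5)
Your argument is correct and reaches the same conclusion as the paper, but the contradiction step at a fixed $t$ is set up quite differently. The paper perturbs $\bar{Z}$ multiplicatively by the exponential weight $e^{C}\ind{C<u}+e^{u}\ind{C\geqslant u}$ (with $u:=\E^{\Pr{Q}_{\bar{Z}}}(C(t,\cdot)\,|\,\SgAlg{F}^0_t)$), normalises it by its $\SgAlg{F}^0_t$-conditional expectation, and then needs Chebyshev's sum inequality together with two applications of Jensen's inequality to show the perturbed conditional mean strictly exceeds $u$. You instead centre the slice, set $Y:=C(t,\cdot)-u$, and perturb $\bar{Z}$ affinely through $\rho=1+\epsilon\bigl(\ind{Y>0}-\E^{\Pr{Q}_{\bar{Z}}}(\ind{Y>0}\,|\,\SgAlg{F}^0_t)\bigr)$ with $\epsilon\in(0,1)$; a one-line expansion then yields $\E^{\Pr{Q}_{\bar{Z}}}(\rho Y\,|\,\SgAlg{F}^0_t)=\epsilon\,\E^{\Pr{Q}_{\bar{Z}}}(Y^{+}\,|\,\SgAlg{F}^0_t)$, which is strictly positive on a non-null $\SgAlg{F}^0_t$-event unless $Y$ vanishes $\Pr{Q}_{\bar{Z}}$-a.s. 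Your tilt is bounded, hence free of the integrability concerns latent in the paper's $e^{C}$, and it dispenses with Chebyshev and Jensen entirely, so the first stage is both more elementary and somewhat more robust. Your second stage---passing from the per-$t$ identities $C(t,\cdot)=u(t,\cdot)$ $\Pr{Q}_{\bar{Z}}$-a.s.\ to a single progressively measurable modification---is essentially the step the paper performs by citing the Chung--Doob measurable-modification theorem for the adapted process $\bigl(u(t,\cdot)\bigr)_{t\in[0,T]}$; invoking the optional projection is an equivalent device, though since $\Fltrn{F}^0$ is the raw canonical filtration (not augmented to satisfy the usual conditions) the direct Chung--Doob route the paper takes is marginally cleaner.
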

\begin{proof}
The $\SgAlg{F}^0_t$-measurable function $u(t,\cdot):=\E^{\Pr{Q}_{\bar{Z}}}\bigl(C(t,\cdot)\,|\,\SgAlg{F}^0_t\bigr)$ 
forms a $\Fltrn{F}^0$-adapted process $\left(u(t,\cdot)\right)_{t\in[0,T]}$. It admits a progressively measurable modification $\bigl(\tilde{C}(t,\cdot)\bigr)_{t\in[0,T]}$ \cite[]{karatzas1991}. We would like to show that $\Pr{Q}_{\bar{Z}}\bigl(\{\omega\in\Omega\,|\,C(t,\omega)=\tilde{C}(t,\omega)\}\bigr)=1$ for every $t\in[0,T]$. 

We prove this lemma by contradiction. Suppose there exists a $t\in[0,T]$ such that $\Pr{Q}_{\bar{Z}}\bigl(\{\omega\in\Omega|C(t,\omega)=\tilde{C}(t,\omega)\}\bigr)<1$, then $\Pr{Q}_{\bar{Z}}\bigl(\{\omega\in\Omega\,|\,C(t,\omega)=u(t,\omega)\}\bigr)<1$.\footnote{We only need to prove $\Pr{Q}_{\bar{Z}}\bigl(\{\omega\in\Omega\,|\,C(t,\omega)={u}(t,\omega)\}\bigr)=1$ leads to $\Pr{Q}_{\bar{Z}}\bigl(\{\omega\in\Omega\,|\,C(t,\omega)=\tilde{C}(t,\omega)\}\bigr)=1$. In fact, assuming $\Pr{Q}_{\bar{Z}}\bigl(\{\omega\in\Omega\,|\,C(t,\omega)={u}(t,\omega)\}\bigr)=1$ we have
\begin{align*}
\Pr{Q}_{\bar{Z}}\bigl(\{\omega\in\Omega\,|\,C(t,\omega)=\tilde{C}(t,\omega)\}\bigr)
=&\,1-
\Pr{Q}_{\bar{Z}}\bigl(\{\omega\in\Omega\,|\,C(t,\omega)\neq\tilde{C}(t,\omega)\}\bigr)\\
\geqslant&\,
1-
\Pr{Q}_{\bar{Z}}\bigl(\{\omega\in\Omega\,|\,C(t,\omega)\neq u(t,\omega)\}\cup 
\{\omega\in\Omega\,|\,u(t,\omega)\neq\tilde{C}(t,\omega)\}\bigr)\\
\geqslant&\,
1-
\Pr{Q}_{\bar{Z}}\bigl(\{\omega\in\Omega\,|\,C(t,\omega)\neq u(t,\omega)\}\bigr)
-\Pr{Q}_{\bar{Z}}\bigl(\{\omega\in\Omega\,|\,u(t,\omega)\neq\tilde{C}(t,\omega)\}\bigr)\\
=&\,1
\end{align*}
}
This implies that either $\Pr{Q}_{\bar{Z}}\bigl(\{\omega\in\Omega\,|\,C(t,\omega)<{u}(t,\omega)\}\bigr)>0$ or $\Pr{Q}_{\bar{Z}}\bigl(\{\omega\in\Omega\,|\,C(t,\omega)>{u}(t,\omega)\}\bigr)>0$. Without losing generality, we assume $\Pr{Q}_{\bar{Z}}\bigl(\{\omega\in\Omega\,|\,C(t,\omega)<{u}(t,\omega)\}\bigr)>0$.

For notational simplicity, in the rest of the proof we use $C$ to denote the random variable $C(t,\cdot)$ and $u$ to denote the $\SgAlg{F}^0_t$-measurable function $u(t,\cdot)$. 
We construct an alternative martingale density $Z'\in\mathcal{Z}(t,{\bar{Z}})$ by
\begin{align}
Z'(s)=
	\begin{dcases}
	{\bar{Z}}(s)&~~~~~~~~~~~~s\in[0,t]\\
\frac{\E^{\Pr{Q}_{\bar{Z}}}\left(\left.e^C\ind{C<u}+e^u\ind{C\geq u}\,\right|\SgAlg{F}^0_s\right)}{\E^{\Pr{Q}_{\bar{Z}}}\left(\left.e^C\ind{C<u}+e^u\ind{C\geq u}\,\right|\SgAlg{F}^0_t\right)}{\bar{Z}}(s)
&~~~~~~~~~~~~s\in(t,T]
	\end{dcases}
\end{align}
To show that indeed $Z'\in\mathcal{Z}(t,{\bar{Z}})$, we need to prove that $Z'(0)=1$, $Z'\geqslant 0$, $Z'(t)={\bar{Z}}(t)$ and $Z'$ is a $\Pr{P}$-martingale. The first three conditions are obvious from the definition. 
The martingale property of $\left(Z'(s)\right)_{s\in[0,t]}$ is clear. The martingale property of $\left(Z'(s)\right)_{s\in[t,T]}$ is confirmed by 
\begin{align*}
\E\left(Z'(r)\,|\,\SgAlg{F}^0_s\right) =&\,
\E\left(\left.
\frac{\E\left(\left.{\bar{Z}}(T)\left(e^C\ind{C<u}+e^u\ind{C\geq u}\right)\,\right|\SgAlg{F}^0_r\right)}{\E^{\Pr{Q}_{{\bar{Z}}}}\left(\left.e^C\ind{C<u}+e^u\ind{C\geq u}\,\right|\SgAlg{F}^0_t\right)}\,\right|\SgAlg{F}^0_s\right)\\
=&\,
\frac{\E\left(\left.{\bar{Z}}(T)\left(e^C\ind{C<u}+e^u\ind{C\geq u}\right)\,\right|\SgAlg{F}^0_s\right)}{\E^{\Pr{Q}_{\bar{Z}}}\left(\left.e^C\ind{C<u}+e^u\ind{C\geq u}\,\right|\SgAlg{F}^0_t\right)}
=Z'(s)
\end{align*}
for all $s\in[t,T]$ and $r\in[t,T]$ satisfying $s\leqslant r$.

Because $\E^{\Pr{Q}_{\bar{Z}}}\bigl(\E^{\Pr{Q}_{\bar{Z}}}\left(\left.\ind{C<u}\,\right|\SgAlg{F}^0_t\right)\bigr)=\Pr{Q}_{\bar{Z}}(C<u)>0$, there exists a $\omega\in\Omega$ such that $\E^{\Pr{Q}_{\bar{Z}}}\left(\left.\ind{C<u}\,\right|\SgAlg{F}^0_t\right)(\omega)>0$. We define 
\begin{align*}
w_l\coloneqq&\,{\E^{\Pr{Q}_{\bar{Z}}}\left(\left.\ind{C<u}\,\right|\SgAlg{F}^0_t\right)(\omega)}\\
w_u\coloneqq&\,1-w_l=\E^{\Pr{Q}_{\bar{Z}}}\left(\left.\ind{C\geq u}\,\right|\SgAlg{F}^0_t\right)(\omega)\\
L\coloneqq&\,\ln\frac{\E^{\Pr{Q}_{\bar{Z}}}\left(\left.e^C\ind{C<u}\,\right|\SgAlg{F}^0_t\right)(\omega)}{\E^{\Pr{Q}_{\bar{Z}}}\left(\left.\ind{C<u}\,\right|\SgAlg{F}^0_t\right)(\omega)}\\
c_l\coloneqq&\,\frac{\E^{\Pr{Q}_{\bar{Z}}}\left(\left.Ce^C\ind{C< u}\,\right|\SgAlg{F}^0_t\right)(\omega)}{\E^{\Pr{Q}_{\bar{Z}}}\left(\left.e^C\ind{C< u}\,\right|\SgAlg{F}^0_t\right)(\omega)}\\
c_u\coloneqq&\,\frac{\E^{\Pr{Q}_{\bar{Z}}}\left(\left.C\ind{C\geq u}\,\right|\SgAlg{F}^0_t\right)(\omega)}{\E^{\Pr{Q}_{\bar{Z}}}\left(\left.\ind{C\geq u}\,\right|\SgAlg{F}^0_t\right)(\omega)}
\end{align*}
then the LHS of Eq.~\ref{eq:le21} (with $Z$ replaced by $Z'$) satisfies
\begin{samepage}
\begin{align}\label{eq:m0}
\E^{\Pr{Q}_{Z'}}\left(\left.C\,\right|\SgAlg{F}^0_t\right)(\omega)=
\frac{\E^{\Pr{Q}_{\bar{Z}}}\left(\left.Ce^C\ind{C<u}+Ce^u\ind{C\geq u}\,\right|\SgAlg{F}^0_t\right)(\omega)}{\E^{\Pr{Q}_{\bar{Z}}}\left(\left.e^C\ind{C<u}+e^u\ind{C\geq u}\,\right|\SgAlg{F}^0_t\right)(\omega)}
=&
\frac{w_lc_le^L+w_u c_u e^u}{w_le^L+w_u e^u }\nonumber\\
>&w_l c_l+w_uc_u
\end{align}
\end{samepage}
Note that the inequality is given by the Chebyshev's sum inequality, which states that $w_1, w_2>0$ and $w_1+w_2=1$, one have
$(w_1a_1+w_2a_2)(w_1b_1+w_2b_2)< w_1a_1b_1+w_2a_2b_2$ if $a_1< a_2$ and $b_1< b_2$. This inequality can be easily proved by expanding the left-hand side.
In Eq.~\ref{eq:m0}, we have $w_l>0$, $w_u>0$\,\footnote{$w_u=0$ would lead to $\E^{\Pr{Q}_{\bar{Z}}}(C|\SgAlg{F}^0_t)(\omega)=\E^{\Pr{Q}_{\bar{Z}}}(C\ind{C<u}|\SgAlg{F}^0_t)(\omega)<u(\omega)$ in contradiction with the definition of $u$.} and
\begin{align*}
c_l=\frac{\E^{\Pr{Q}_{\bar{Z}}}\left(\left.Ce^C\ind{C< u}\right|\SgAlg{F}^0_t\right)(\omega)}{\E^{\Pr{Q}_{\bar{Z}}}\left(\left.e^C\ind{C< u}\right|\SgAlg{F}^0_t\right)(\omega)}
<\frac{\E^{\Pr{Q}_{\bar{Z}}}\left(\left.ue^C\ind{C< u}\right|\SgAlg{F}^0_t\right)(\omega)}{\E^{\Pr{Q}_{\bar{Z}}}\left(\left.e^C\ind{C< u}\right|\SgAlg{F}^0_t\right)(\omega)}=u\leqslant
c_u
\end{align*}
and 
\begin{align*}
e^L=\frac{\E^{\Pr{Q}_{\bar{Z}}}\left(\left.e^C\ind{C<u}\right|\SgAlg{F}^0_t\right)(\omega)}{\E^{\Pr{Q}_{\bar{Z}}}\left(\left.\ind{C<u}\right|\SgAlg{F}^0_t\right)(\omega)}
<\frac{\E^{\Pr{Q}_{\bar{Z}}}\left(\left.e^u\ind{C<u}\right|\SgAlg{F}^0_t\right)(\omega)}{\E^{\Pr{Q}_{\bar{Z}}}\left(\left.\ind{C<u}\right|\SgAlg{F}^0_t\right)(\omega)}=e^u
\end{align*}
Therefore Chebyshev's sum inequality is applicable.

We further apply Jensen's inequality to the following expression twice ($x\ln x$ is a convex function while $\ln x$ is a concave function),
\begin{align}\label{eq:jensen}
\E(x\ln x)\geqslant \E(x)\ln \E(x)\geqslant \E(x)\E(\ln x)
\end{align}
Following the inequality above, we take expectation w.r.t $\SgAlg{F}^0_t$ and under the alternative measure generated by the Radon-Nikodym derivative 
\begin{equation*}
\frac{{\bar{Z}}(T)}{{\bar{Z}}(t)}\frac{\ind{C<u}}{\E^{\Pr{Q}_{\bar{Z}}}\left(\ind{C<u}\,|\,\SgAlg{F}^0_t\right)}
\end{equation*}
By further assigning $x=e^C$, we get the following inequality
\begin{align*}
&\frac{\E\left(\left.{\bar{Z}}(T)Ce^C \ind{C<u}\,\right|\SgAlg{F}^0_t\right)(\omega)}{\E\left(\left.{\bar{Z}}(T)\ind{C<u}\,\right|\SgAlg{F}^0_t\right)(\omega)}\geqslant \frac{\E\left(\left. {\bar{Z}}(T)e^C\ind{C<u}\,\right|\SgAlg{F}^0_t\right)(\omega)}{\E\left(\left.{\bar{Z}}(T)\ind{C<u}\,\right|\SgAlg{F}^0_t\right)(\omega)} \times \frac{\E\left(\left. {\bar{Z}}(T)C\ind{C<u}\,\right|\SgAlg{F}^0_t\right)(\omega)}{\E\left(\left.{\bar{Z}}(T)\ind{C<u}\,\right|\SgAlg{F}^0_t\right)(\omega)}\\
\Rightarrow\,&
\frac{\E^{\Pr{Q}_{\bar{Z}}}\left(\left.Ce^C \ind{C<u}\,\right|\SgAlg{F}^0_t\right)(\omega)}{\E^{\Pr{Q}_{\bar{Z}}}\left(\left.e^C \ind{C<u}\,\right|\SgAlg{F}^0_t\right)(\omega)}\geqslant \frac{\E^{\Pr{Q}_{\bar{Z}}}\left(\left.C\ind{C<u}\,\right|\SgAlg{F}^0_t\right)(\omega)}{\E^{\Pr{Q}_{\bar{Z}}}\left(\left.\ind{C<u}\,\right|\SgAlg{F}^0_t\right)(\omega)}
\end{align*}
The LHS is simply $c_l$. Substituting the inequality into Eq.~\ref{eq:m0} one gets
\begin{align*}
\E^{\Pr{Q}_{Z'}}\left(\left.C\,\right|\SgAlg{F}^0_t\right)(\omega)>
\E^{\Pr{Q}_{Z'}}\left(\left.C\ind{C<u}\,\right|\SgAlg{F}^0_t\right)(\omega)+
\E^{\Pr{Q}_{\bar{Z}}}\left(\left.C\ind{C\geq u}\,\right|\SgAlg{F}^0_t\right)(\omega)
=
\E^{\Pr{Q}_{\bar{Z}}}\left(\left.C\,\right|\SgAlg{F}^0_t\right)(\omega)
\end{align*}
This violates the condition stated in Eq.~\ref{eq:le21}. We therefore conclude that $\bigl(C(t,\cdot)\bigr)_{t\in[0,T]}$ admits a progressively measurable modification $\bigl(\tilde{C}(t,\cdot)\bigr)_{t\in[0,T]}$. 
\end{proof}
A process $C$ that satisfies the conditions in Lemma \ref{le:1} admits a progressively measurable modification $\tilde{C}$ w.r.t $\Pr{Q}_{\bar{Z}}$, but not necessarily w.r.t the reference measure $\Pr{P}$. However, if it also holds w.r.t $\Pr{P}$, then we get the converse of Lemma~\ref{le:1}. In fact, for $\bar{Z}\in\M_+(1)$ and any $Z\in\mathcal{Z}(t,\bar{Z})$, both $\Pr{Q}_{\bar{Z}}$ and $\Pr{Q}_{Z}$ are absolutely continuous w.r.t $\Pr{P}$, implying $\tilde{C}$ is a modification of $C$ w.r.t $\Pr{Q}_{\bar{Z}}$ and $\Pr{Q}_{Z}$. This results in
\begin{equation*}
\E^{\Pr{Q}_{\bar{Z}}}\left(\left.C(t,\,\cdot\,)\,\right|\SgAlg{F}^0_t\right) 
= \E^{\Pr{Q}_{\bar{Z}}}\bigl(\tilde{C}(t,\,\cdot\,)\,|\SgAlg{F}^0_t\bigr) 
\quad\text{and}\quad
\E^{\Pr{Q}_{Z}}\left(\left.C(t,\,\cdot\,)\,\right|\SgAlg{F}^0_t\right)=
\E^{\Pr{Q}_{Z}}\bigl(\tilde{C}(t,\,\cdot\,)\,|\SgAlg{F}^0_t\bigr)
\end{equation*}
The progressively measurable process $\tilde{C}$ is adapted to the filtration $\Fltrn{F}^0$. Therefore
\begin{align*}
\E^{\Pr{Q}_{\bar{Z}}}\left(\left.C(t,\,\cdot\,)\,\right|\SgAlg{F}^0_t\right) 
=\tilde{C}(t,\,\cdot\,)=
\E^{\Pr{Q}_{Z}}\left(\left.C(t,\,\cdot\,)\,\right|\SgAlg{F}^0_t\right)
\end{align*}
for all $t\in[0,T]$ and $Z\in\mathcal{Z}(t,{\bar{Z}})$. 
We use Lemma~\ref{le:1} to prove the following proposition.
\begin{proposition}\label{pro:2}
$Z^*\in\M_+(1)$ is a worst-case martingale density iff the random variable 
\begin{align*}
\Omega\ni \omega\mapsto\ell(T,\omega)-\frac{f'(Z^*(T)(\omega))}{\vartheta}
\end{align*}
equals constant $\Pr{Q}_{Z^*}$-a.s., and is dominated by the same constant $\Pr{P}$-a.s.
\end{proposition}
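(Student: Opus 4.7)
The strategy is to use a first-order variational argument together with Lemma~\ref{le:1} to pin down $R:=\ell(T,\,\cdot\,)-f'(Z^*(T))/\vartheta$ pointwise from the dynamic optimality condition of Definition~\ref{def:z*}. A preliminary reduction is the equivalence ``$Z^*$ is a worst-case density iff $Z^*$ maximises the unconditional problem at $t=0$'': the forward direction is immediate, and the converse follows from a straightforward integration that uses $\widehat{\ell}(t,Z')=\widehat{\ell}(t,Z^*)$ and $Z'(t)=Z^*(t)$ for any $Z'\in\mathcal{Z}(t,Z^*)$. With this reduction in hand, the proof of the iff splits into necessity and sufficiency.

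For necessity, given $t\in[0,T]$ and $Z\in\mathcal{Z}(t,Z^*)$, I would introduce the convex-combination perturbation $Z_\epsilon:=(1-\epsilon)Z^*+\epsilon Z$, which lies in $\mathcal{Z}(t,Z^*)$ for every $\epsilon\in[0,1]$, and differentiate $\E^{\Pr{Q}_{Z_\epsilon}}(\widehat{\ell}(T,Z_\epsilon)-\widehat{\ell}(t,Z_\epsilon)\,|\,\SgAlg{F}^0_t)$ in $\epsilon$ at $0^+$. The maximality of $Z^*$ forces this right-derivative to be non-positive, which after writing the inner expectation as $(1/Z^*(t))\E(Z_\epsilon(T)\ell(T)-f(Z_\epsilon(T))/\vartheta\,|\,\SgAlg{F}^0_t)$ and using $\widehat{\ell}(t,Z_\epsilon)=\widehat{\ell}(t,Z^*)$ becomes
\begin{equation*}
\E^{\Pr{Q}_Z}\bigl(R\,|\,\SgAlg{F}^0_t\bigr)\leqslant\E^{\Pr{Q}_{Z^*}}\bigl(R\,|\,\SgAlg{F}^0_t\bigr)
\qquad\text{for every }Z\in\mathcal{Z}(t,Z^*).
\end{equation*}
This is precisely the hypothesis of Lemma~\ref{le:1} with $C(t,\,\cdot\,):=R$ and $\bar{Z}:=Z^*$, so $R$ admits a $\Pr{Q}_{Z^*}$-progressively-measurable modification $\tilde{R}$. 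Since $\tilde{R}(0,\,\cdot\,)$ is $\SgAlg{F}^0_0$-measurable and $\SgAlg{F}^0_0$ is $\Pr{Q}_{Z^*}$-trivial (because $Z^*(0)=1$), $\tilde{R}(0,\,\cdot\,)$ equals some constant $c$ $\Pr{Q}_{Z^*}$-a.s., and hence $R=c$ $\Pr{Q}_{Z^*}$-a.s. For the $\P$-a.s.\ domination, I specialise the unconditional inequality $\E(Z(T)R)\leqslant\E(Z^*(T)R)=c$ (the case $t=0$) to $Z(T):=\ind{B}/\P(B)$ for an arbitrary $B\in\SgAlg{F}^0_T$ of positive $\P$-measure, extended to a martingale via conditional expectations; choosing $B:=\{R>c\}$ forces $\P(R>c)=0$.

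For sufficiency, assume $R=c$ $\Pr{Q}_{Z^*}$-a.s.\ and $R\leqslant c$ $\P$-a.s. The tangent inequality $f(Z(T))\geqslant f(Z^*(T))+f'(Z^*(T))(Z(T)-Z^*(T))$ for the convex $f$ rearranges to
\begin{equation*}
Z(T)\ell(T)-\frac{f(Z(T))}{\vartheta}\leqslant Z(T)R+\frac{Z^*(T)f'(Z^*(T))-f(Z^*(T))}{\vartheta}
\end{equation*}
for every $Z\in\M_+(1)$. Taking expectations and using $Z(T)\geqslant 0$, $\E(Z(T))=1$ together with $R\leqslant c$ $\P$-a.s.\ gives $\E(Z(T)R)\leqslant c=\E(Z^*(T)R)$, whence $\E^{\Pr{Q}_Z}(\widehat{\ell}(T,Z))\leqslant\E^{\Pr{Q}_{Z^*}}(\widehat{\ell}(T,Z^*))$. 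By the equivalence recorded at the outset, this makes $Z^*$ a worst-case density.

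The main obstacle is the differentiation at $\epsilon=0^+$: I have to interchange $\partial_\epsilon$ with $\E(\cdot\,|\,\SgAlg{F}^0_t)$, which needs appropriate integrability of $R$ and uniform control on the Taylor remainder $\epsilon^{-1}(f(Z_\epsilon(T))-f(Z^*(T)))-f'(Z^*(T))(Z(T)-Z^*(T))$ near $\epsilon=0$; the $C^2$ strict convexity of $f$ plus modest integrability of $\ell(T)$ and $f'(Z^*(T))$ under $\P$ and $\Pr{Q}_{Z^*}$ should suffice. A second subtle point is that Lemma~\ref{le:1} only delivers a $\Pr{Q}_{Z^*}$-modification, so the $\P$-a.s.\ domination cannot be extracted from Lemma~\ref{le:1} alone and genuinely requires the separate indicator-perturbation step, which relies on the fact that at $t=0$ every non-negative $\zeta$ with $\E(\zeta)=1$ is the terminal value of some $Z\in\M_+(1)$.
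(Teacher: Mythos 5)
Your proof is correct. The necessity half (convex-combination perturbation $Z_\epsilon=(1-\epsilon)Z^*+\epsilon Z$, right-derivative $\leqslant 0$, invoking Lemma~\ref{le:1}, $\SgAlg{F}^0_0$-triviality for the $\Pr{Q}_{Z^*}$-a.s.\ constancy, and the indicator density $\ind{B}/\P(B)$ with $B=\{R>c\}$ for the $\P$-a.s.\ domination) is essentially identical to the paper's argument. Where you diverge is sufficiency: the paper re-derives the conditional inequality for every $t$ and then argues $K(0)\geqslant K(1)$ from $K'(0)\leqslant 0$ and the strict concavity $K''<0$, whereas you first reduce the definition of worst-case density to unconditional optimality at $t=0$ and then apply the supporting-line inequality $f(Z(T))\geqslant f(Z^*(T))+f'(Z^*(T))(Z(T)-Z^*(T))$ directly. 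Your tangent-line step is in effect the same convexity input that makes $K$ concave, so nothing stronger is being used; but packaging it this way removes the parametrised calculus and makes the inequality $\E^{\Pr{Q}_Z}(\widehat\ell(T,Z))\leqslant\E^{\Pr{Q}_{Z^*}}(\widehat\ell(T,Z^*))$ visible in one line, which is arguably cleaner. The price is that you must justify the preliminary reduction. Your ``straightforward integration'' does work, but it is worth spelling out why: it relies on the standing assumption that the maximum in \eqref{eqSec3:CondIntProb} is attained by some $Z'\in\mathcal{Z}(t,Z^*)$; then the iterated-expectation identity $\E^{\Pr{Q}_{Z'}}(\widehat\ell(T,Z'))=\E^{\Pr{Q}_{Z^*}}(\widehat\ell(t,Z^*))+\E^{\Pr{Q}_{Z^*}}(\widehat L(t,Z^*))$ (valid because $\Pr{Q}_{Z'}$ and $\Pr{Q}_{Z^*}$ coincide on $\SgAlg{F}^0_t$ and $\widehat\ell(t,Z')=\widehat\ell(t,Z^*)$) compared with the analogous decomposition for $Z^*$ shows that the nonnegative random variable $\widehat L(t,Z^*)-\E^{\Pr{Q}_{Z^*}}(\widehat\ell(T,Z^*)-\widehat\ell(t,Z^*)\,|\,\SgAlg{F}^0_t)$ has nonpositive $\Pr{Q}_{Z^*}$-expectation, hence vanishes a.s. Without the attainment hypothesis this step would require a measurable-selection or splicing argument, so you should flag that dependence. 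Finally, your closing observations about the needed interchange of differentiation and conditional expectation and about Lemma~\ref{le:1} only producing a $\Pr{Q}_{Z^*}$-modification (so the $\P$-a.s.\ bound genuinely needs the separate indicator perturbation) are exactly right and match the structure of the paper's proof.
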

\begin{proof}
Suppose ${Z^*}\in\M_+(1)$ is a worst-case martingale density. According to Definition.~\ref{def:z*}, 
\begin{equation}\label{eq:maxsolve}
{Z^*}=\arg\max_{Z\in\mathcal{Z}(t,{Z^*})}\E^{\Pr{Q}_{Z}}\bigl(\widehat{\ell}(T,Z)-\widehat{\ell}(t,Z)\,\bigl|\,\SgAlg{F}^0_t\bigr)
\end{equation}
for all $t\in[0,T]$. 
Given any $t\in[0,T]$ and any $Z\in\mathcal{Z}(t,{Z^*})$, we construct a new martingale density that lies between ${Z^*}$ and $Z$ by
\begin{align*}
Z_\lambda=(1-\lambda){Z^*}+\lambda Z
\end{align*}
where $\lambda\in[0,1]$. 
$Z_\lambda\in\mathcal{Z}(t,{Z^*})$ for all $\lambda\in[0,1]$ due to the convexity of $\mathcal{Z}(t,{Z^*})$. 
Since ${Z^*}$ solves Eq.~\ref{eq:maxsolve}, the maximum value of 
\begin{align}\label{eq:K0}
K(\lambda)\coloneqq\E^{\Pr{Q}_{Z_\lambda}}\bigl(\widehat{\ell}(T,Z_\lambda)-\widehat{\ell}(t,Z_\lambda)\,\bigl|\,\SgAlg{F}^0_t\bigr)
\end{align}
is reached when $\lambda=0$. 
Taking the first and second derivatives with respect to $\lambda$, we get
\begin{align}\label{eq:Kl}
K'(\lambda)=&\,
\E\biggl(\left.\frac{Z(T)-{Z^*}(T)}{{Z^*}(t)}\left(\ell(T,{Z^*})-\ell(t,\cdot)-\frac{f'\left(Z_\lambda(T)\right)}{\vartheta}\right)\,\right|\SgAlg{F}^0_t\biggr)\\
K''(\lambda)=&-
\E\left(\left.\frac{\left(Z(T)-{Z^*}(T)\right)^2}{{Z^*}(t)}\frac{f''\left(Z_\lambda(T)\right)}{\vartheta}\,\right|\SgAlg{F}^0_t\right)\label{eq:Kl1}
\end{align}
Notice that the twice-differentiable function $f:\mathbb{R}_+\to\mathbb{R}$ is convex as required by the non-negativity of the $f$-divergence \cite[]{ali1966general}. This implies that $f''(z)>0$ for all $z\in\mathbb{R}_+$. Combined with Eq.~\ref{eq:Kl1}, this condition leads to $K''(\lambda)<0$ for all $\lambda\in[0,1]$. For $K(0)=\max_{\lambda\in[0,1]}K(\lambda)$ to hold, the first derivative at $\lambda=0$ must satisfy
\begin{align}\label{eq:ineq}
K'(0)\leqslant 0\Leftrightarrow \E^{\Pr{Q}_{Z}}\left(\left.C_{Z^*}(t,\,\cdot\,)\,\right|\SgAlg{F}^0_t\right)\leqslant \E^{\Pr{Q}_{Z^*}}\left(\left.C_{Z^*}(t,\,\cdot\,)\,\right|\SgAlg{F}^0_t\right) 
\end{align}
where the process $C_{Z^*}:[0,T]\times\Omega\to\mathbb{R}$ is defined by
\begin{align*}
C_{Z^*}(t,\,\cdot\,)\coloneqq\ell(T,\,\cdot\,)-\ell(t,\,\cdot\,)-\frac{f'({Z^*}(T))}{\vartheta}
\end{align*}
The inequality above holds for all $t\in[0,T]$ and all $Z\in\mathcal{Z}(t,{Z^*})$. 
According to Lemma.~\ref{le:1}, $C_{Z^*}$ admits a progressively measurable modification, say $\tilde{C}_{Z^*}$. In particular, at $t=0$
\begin{align*}
C_{Z^*}(0,\,\cdot\,)=\ell(T,\,\cdot\,)-\frac{f'({Z^*}(T))}{\vartheta}
\end{align*}
takes a constant value $c:=\tilde{C}_{Z^*}(0,0)$, $\Pr{Q}_{Z^*}$-a.s. In fact, $\tilde{C}_{Z^*}$ is regarded as a non-anticipative functional so that $\tilde{C}_{Z^*}(0,\omega)=\tilde{C}_{Z^*}(0,0)=c$ for all $\omega\in\Omega$ satisfying $(0,\omega)\sim(0,0)$. As a result, 
\begin{samepage}
\begin{align}\label{eq:QZ1}
\Pr{Q}_{Z^*}\bigl(C_{Z^*}(0,\cdot)=c\bigr)
\geqslant&\,
\Pr{Q}_{Z^*}\bigl(\tilde{C}_{Z^*}(0,\cdot)= c\bigr)
-\Pr{Q}_{Z^*}\bigl(C_{Z^*}(0,\cdot)\neq\tilde{C}_{Z^*}(0,\cdot)\bigr)\nonumber\\
=&\,\Pr{Q}_{Z^*}\bigl(\tilde{C}_{Z^*}(0,\cdot)= c\bigr)-0
\geqslant
\Pr{Q}_{Z^*}\bigl((0,\cdot)\sim(0,0)\bigr)\nonumber\\
=&\,\Pr{Q}_{Z^*}(\{\omega\in\Omega\,|\,\omega(0)=0\})=1
\end{align}
\end{samepage}
Next we prove $\Pr{P}\bigl(C_{Z^*}(0,\cdot)\leqslant c\bigr)=1$ by contradiction. Suppose on the contrary that $\Pr{P}\bigl(C_{Z^*}(0,\cdot)>c\bigr)>0$. We construct a martingale density $Z'\in\mathcal{Z}(0,{Z^*})=\M_+(1)$ by setting
\begin{align*}
Z'(t)=\frac{\E\left(\left.\ind{C_{Z^*}(0,\cdot)>c}\,\right|\SgAlg{F}^0_t\right)}{\Pr{P}\bigl(C_{Z^*}(0,\cdot)>c\bigr)}
\end{align*}
for all $t\in[0,T]$. This leads to
\begin{align*}
\E^{\Pr{Q}_{Z'}}\left(\left.C_{Z^*}(0,\,\cdot\,)\,\right|\SgAlg{F}^0_0\right)
=
\E\left(Z'(T)C_{Z^*}(0,\,\cdot\,)\right)
=\frac{\E\left(\ind{C_{Z^*}(0,\cdot)>c}C_{Z^*}(0,\,\cdot\,)\right)}{\Pr{P}\bigl(C_{Z^*}(0,\cdot)>c\bigr)}>\frac{c\E\left(\ind{C_{Z^*}(0,\cdot)>c}\right)}{\Pr{P}\bigl(C_{Z^*}(0,\cdot)>c\bigr)}=c
\end{align*}
Because we have already shown that $C_{Z^*}(0,\,\cdot\,)=c$, $\Pr{Q}_{Z^*}$-a.s. (Eq.~\ref{eq:QZ1}),
\begin{align*}
\E^{\Pr{Q}_{Z^*}}\left(\left.C_{Z^*}(0,\,\cdot\,)\,\right|\SgAlg{F}^0_0\right) 
= c<\E^{\Pr{Q}_{Z'}}\left(\left.C_{Z^*}(0,\,\cdot\,)\,\right|\SgAlg{F}^0_0\right)
\end{align*}
According to Eq.~\ref{eq:ineq}, $K'(0)> 0$ (where the generic density process $Z$ is replaced by the constructed process $Z'\in\mathcal{Z}(0,Z^*)$). This contradicts the assumption that $Z^*$ is a worst-case martingale density. 

Conversely, given a process $Z^*\in\M_+(1)$, suppose $C_{Z^*}(0,\,\cdot\,):\Omega\to\mathbb{R}$ takes a constant value, say $c$, $\Pr{Q}_{Z^*}$-a.s., and $C_{Z^*}(0,\,\cdot\,)\leqslant c$ $\Pr{P}$-a.s. Given any $t\in[0,T]$ and any $Z\in\mathcal{Z}(t,{Z^*})$, $C_{Z^*}(0,\,\cdot\,)\leqslant c$ $\Pr{Q}_{Z}$-a.s. due to the absolute continuity of $\Pr{Q}_{Z}$ w.r.t. $\Pr{P}$. These properties lead to conditional expectations 
\begin{equation*}
\E^{\Pr{Q}_{Z^*}}\left(\left.C_{Z^*}(0,\,\cdot\,)\,\right|\SgAlg{F}^0_t\right) 
= c
\quad\text{and}\quad
\E^{\Pr{Q}_{Z}}\left(\left.C_{Z^*}(0,\,\cdot\,)\,\right|\SgAlg{F}^0_t\right)\leqslant
c
\end{equation*}
Noticing that $C_{Z^*}(t,\,\cdot\,)=C_{Z^*}(0,\,\cdot\,)-\ell(t,\cdot)$ where $\ell(t,\cdot)$ is $\SgAlg{F}^0_t$-measurable, We have
\begin{equation*}
 \E^{\Pr{Q}_{Z}}\left(\left.C{Z^*}(t,\,\cdot\,)\,\right|\SgAlg{F}^0_t\right)
 \leqslant c-\ell(t,\,\cdot\,)=
\E^{\Pr{Q}_{Z^*}}\left(\left.C_{Z^*}(t,\,\cdot\,)\,\right|\SgAlg{F}^0_t\right)
\end{equation*}
According to Eq.~\ref{eq:ineq}, $K'(0)\leqslant0$. Because $K''(\lambda)<0$ (Eq.~\ref{eq:Kl1}) for all $\lambda\in[0,1]$, $K(0)\geqslant K(1)$. According to the definition of $K(\lambda)$ (Eq.~\ref{eq:K0}), we have
\begin{equation*}
\E^{\Pr{Q}_{Z^*}}\bigl(\widehat{\ell}(T,{Z^*})-\widehat{\ell}(t,{Z^*})\,\bigl|\,\SgAlg{F}^0_t\bigr)=K(0)\geqslant K(1)=\E^{\Pr{Q}_{Z}}\bigl(\widehat{\ell}(T,Z)-\widehat{\ell}(t,Z)\,\bigl|\,\SgAlg{F}^0_t\bigr)
\end{equation*}
This inequality applies to every $t\in[0,T]$ and every $Z\in\mathcal{Z}(t,{Z^*})$. As a result, ${Z^*}$ solves Eq.~\ref{eq:maxsolve} for all $t\in[0,T]$ and is indeed a worst-case martingale density. 
\end{proof}

It is noted that Proposition.~\ref{pro:1} is a general result that works for any $\Fltrn{F}^0$-adapted process $(\widehat{\ell}(t,Z))_{t\in[0,T]}$, irrespective of its actual formulation (Eq.~\ref{eqSec3:defhatl}). On the other hand, Proposition.~\ref{pro:2} makes use of the formulation, thus specifying the condition of a worst-case martingale density w.r.t the function $f(x)$. 
Note that any worst-case density process ${Z^*}\in\M_+(1)$ solves the original problem formulated in Eq.~\ref{eqSec3:IntProb}. Assuming the existence of such ${Z^*}$, we regard Eq.~\ref{eqSec3:IntProb} as the initial value (at $t=0$) of a particular process, termed as the value process. In general, we define three $\Fltrn{F}^0$-adapted processes as below.

\begin{definition}\label{def:4}
Given $\vartheta\in(0,\infty)$ and a worst-case martingale density ${Z^*}\in\M_+(1)$, 
the value process, $U:[0,T]\times\Omega\to\mathbb{R}$, the worst-case risk, $V:[0,T]\times\Omega\to\mathbb{R}$, and the budget process $\eta:[0,T]\times\Omega\to\mathbb{R}$,\footnote{We name it the budget process as it measures the remaining budget of the fictitious adversary \cite[]{glasserman2014robust}. $\eta(0,\cdot)$ is referred as the relative entropy budget in \cite{glasserman2014robust}.} regarded as non-anticipative functionals, are defined by
\begin{samepage}
\begin{align*}
U(t,\cdot)\coloneqq&\widehat{L}(t,{Z^*})+\ell(t,\cdot)\\
V(t,\cdot)\coloneqq&\widehat{L}(t,{Z^*})+\widehat\ell(t,{Z^*})+F(t,{Z^*})\\
\eta(t,\cdot)\coloneqq&\vartheta\left(V(t,\cdot)-U(t,\cdot)\right)
\end{align*}
\end{samepage}
where $\left(F(t,{Z^*})\right)_{t\in[0,T]}$ is the $\Pr{Q}_{Z^*}$-martingale that satisfies $F(T,{Z^*})=f({Z^*}(T))/ {Z^*}(T)$.
\end{definition}
Intuitively, $U(t,\cdot)$ gives the worst-case expected loss, subtracting the on-going cost of perturbing the nominal model from time $t$ to $T$. According to the definition of the worst-case martingale density (Eq.~\ref{eq:def0}),
\begin{align}\label{eq:Utdef}
U(t,\cdot)=&\,\E^{\Pr{Q}_{Z^*}}\bigl(\widehat{\ell}(T,{Z^*})-\widehat{\ell}(t,{Z^*})\,\bigl|\,\SgAlg{F}^0_t\bigr)+\ell(t,\cdot)\nonumber\\
=&\,\E^{\Pr{Q}_{Z^*}}\bigl({\ell}(T,\cdot)\,\bigl|\,\SgAlg{F}^0_t\bigr)-
\vartheta^{-1} {Z^*}(t)^{-1}
{\E\bigl(f\left({Z^*}(T)\right)-f\left({Z^*}(t)\right)\bigl|\,\SgAlg{F}^0_t\bigr)} \ind{{Z^*}(t)>0}
\end{align}
The second term is the penalization term for perturbing the nominal model from time $t$ onwards. For continuity it is defined to be zero in the limiting case of ${Z^*}(t)=0$. 
According to Definition~\ref{def:4},
$V(t,\cdot)$ is the worst-case expected loss, 
\begin{align*}
V(t,\cdot)=\E^{\Pr{Q}_{Z^*}}\bigl(\widehat{\ell}(T,{Z^*})\,\bigl|\,\SgAlg{F}^0_t\bigr)
+\vartheta^{-1}{Z^*}(t)^{-1}\E^{\Pr{Q}}\bigl(f({Z^*}(T))\,\bigl|\,\SgAlg{F}^0_t\bigr)\ind{{Z^*}(t)>0}
=\E^{\Pr{Q}_{Z^*}}\bigl({\ell}(T,\cdot)\,\bigl|\,\SgAlg{F}^0_t\bigr)
\end{align*}
The difference between $V(t,\cdot)$ and $U(t,\cdot)$ gives the cost for perturbing the nominal model (measured by the $f$-divergence), characterized by the process $\eta$:
\begin{align*}
\eta(t,\cdot)= {Z^*}(t)^{-1}
{\E\bigl(f\left({Z^*}(T)\right)-f\left({Z^*}(t)\right)\bigl|\,\SgAlg{F}^0_t\bigr)}\ind{{Z^*}(t)>0}
\end{align*}

We may further consider the terminal and initial values of the three processes. 
The value process, $U(t,\cdot)$, measures the target formulated in Eq.~\ref{eqSec3:IntProb} from backwards, in the sense that
\begin{align}\label{eq:UT}
U(T)={\ell}(T,\cdot)
\quad\text{and}\quad
U(0)=\E^{\Pr{Q}_{Z^*}}\left(\widehat{\ell}(T,{Z^*})\right)
=\max_{Z\in\M_+(1)}\E^{\Pr{Q}_Z}\left(\widehat{\ell}(T,Z)\right)
\end{align}
The worst-case risk process measures the model risk, Eq.~\ref{eqSec2:ModelRisk}, from backwards. According to Lemma.~\ref{le:0}(4), the worst-case density $Z^*$ solves the primal problem with $\eta:=\eta(0,\cdot)=\E\left(f({Z^*}(T))\right)$. Therefore
\begin{align*}
V(T)={\ell}(T,\cdot)
\quad\text{and}\quad
V(0)=\E^{\Pr{Q}_{Z^*}}\left(\ell(T,\cdot)\right)=
\sup_{Z\in\mathcal{Z}_\eta}\E^{\Pr{Q}_{Z^*}}\bigl(\ell(T,\,\cdot\,)\bigr)
\end{align*}
The cumulative budget $\eta$ (i.e. relative entropy budget in \cite{glasserman2014robust}) is measured by the budget process from backwards,
\begin{align*}
\eta(T)=0
\quad\text{and}\quad
\eta(0)=\E\left(f({Z^*}(T))\right)=\eta
\end{align*} 
To solve the problem formulated in Eq.~\ref{eqSec3:IntProb}, Eq.~\ref{eq:UT} suggests solving the process $U$ by backward induction. In a similar way, the model risk, Eq.~\ref{eqSec2:ModelRisk}, and its corresponding cumulative budget, $\eta$, may be quantified by solving the processes $V$ and $\eta$ by backward induction. The full procedure is given by the following theorem. 
\begin{theorem}\label{th:0}
Given $\vartheta\in(0,\infty)$, suppose there exists a function $z:\mathbb{R}\to \mathbb{R}_+$ that satisfies
\begin{align}\label{eq:zx}
x-\frac{f'(z(x))}{\vartheta}=c&\qquad\text{if}\quad x\in I_c\coloneqq\{\vartheta^{-1}y+c\,|\,y\in \text{range}(f')\}\\
z(x)=0&\qquad\text{if}\quad x\notin I_c
\end{align}
where $c\in\mathbb{R}$ is a constant such that $\E\bigl(z\circ\ell(T,\cdot)\bigr)=1$ and $\Pr{P}\left(\ell(T,\cdot)<\sup I_c\right) =1$. Then the value process, $U$, the worst-case risk, $V$, and the budget process, $\eta$, satisfy the following equations
\begin{align}\label{eq:uveta}
U(t,\cdot)=&\,\frac{M(t)+f(Z(t))}{\vartheta Z(t)}+c\nonumber\\
V(t,\cdot)=&\,\frac{W(t)}{Z(t)}\\
\eta(t,\cdot)=&\frac{\vartheta W(t)- M(t)-f(Z(t))}{Z(t)}-\vartheta c\nonumber
\end{align}
for all $t\in[0,T]$ and a.a. $\omega\in\{Z(t)>0\}$, where $(Z,\,M,\,W)$ is a $\Fltrn{F}^0$-adapted $\Pr{P}$-martingale that satisfies the following terminal condition:
\begin{align}\label{eq:vec}
\begin{pmatrix}
Z\\
M\\W
\end{pmatrix}(T)=
\begin{pmatrix}
z\circ\ell(T,\cdot)\\
f'\circ z\circ\ell(T,\cdot)\times {z\circ\ell(T,\cdot)}-{f\circ z\circ\ell(T,\cdot)}
\\
z\circ\ell(T,\cdot)\times\ell(T,\cdot)
\end{pmatrix}
\end{align}
\end{theorem}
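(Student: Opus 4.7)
The plan is to construct an explicit candidate worst-case martingale density out of the function $z$, verify its optimality via Proposition~\ref{pro:2}, and then derive the three formulas by expressing all $\Pr{Q}_{Z^*}$-conditional expectations as ratios of $\P$-martingales through the Bayes rule. I would set $Z^*(t)\coloneqq\E\bigl(z\circ\ell(T,\,\cdot\,)\mid\SgAlg{F}^0_t\bigr)$, which is an $\Fltrn{F}^0$-adapted $\P$-martingale. Non-negativity follows from $z\geqslant 0$, and $Z^*(0)=\E(z\circ\ell(T,\,\cdot\,))=1$ by the normalisation imposed on $c$, so $Z^*\in\M_+(1)$ with $Z^*(T)=z\circ\ell(T,\,\cdot\,)$ $\P$-a.s.

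To invoke Proposition~\ref{pro:2} I must check that $\ell(T,\,\cdot\,)-f'(Z^*(T))/\vartheta$ equals $c$ $\Pr{Q}_{Z^*}$-a.s.\ and is dominated by $c$ $\P$-a.s. On $\{Z^*(T)>0\}$ the defining equation for $z$ places $\ell(T,\,\cdot\,)\in I_c$ and gives $\ell(T,\,\cdot\,)-f'(Z^*(T))/\vartheta=c$; since $\Pr{Q}_{Z^*}\bigl(\{Z^*(T)=0\}\bigr)=\E\bigl(\ind{Z^*(T)=0}Z^*(T)\bigr)=0$, the constancy holds $\Pr{Q}_{Z^*}$-a.s. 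The assumption $\P(\ell(T,\,\cdot\,)<\sup I_c)=1$ rules out $\ell$ values above $\sup I_c$; on $\{\ell(T,\,\cdot\,)\in I_c\}$ equality to $c$ holds, while on $\{\ell(T,\,\cdot\,)<\inf I_c\}$ one has $Z^*(T)=0$, and using $\inf I_c=\vartheta^{-1}f'(0)+c$ (with $f'(0)$ the right derivative, equal to $\inf\mathrm{range}(f')$ by convexity of $f$) one obtains $\ell(T,\,\cdot\,)-f'(0)/\vartheta<c$. Proposition~\ref{pro:2} then delivers that $Z^*$ is a worst-case density process.

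The central algebraic identity, valid $\P$-a.s.\ on $\Omega$ (with the standard cancellation convention making $Z(T)\cdot f(Z(T))/Z(T)=f(Z(T))$ on $\{Z(T)=0\}$), is
\begin{equation*}
Z^*(T)\,\ell(T,\,\cdot\,)-\vartheta^{-1}f(Z^*(T))=c\,Z^*(T)+\vartheta^{-1}\bigl(f'(Z^*(T))\,Z^*(T)-f(Z^*(T))\bigr),
\end{equation*}
which on $\{Z^*(T)>0\}$ follows from $\ell(T,\,\cdot\,)=c+f'(Z^*(T))/\vartheta$ and on $\{Z^*(T)=0\}$ degenerates to $-\vartheta^{-1}f(0)$ on both sides. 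Setting $Z\coloneqq Z^*$ and defining the $\P$-martingales $M$, $W$ by the terminal conditions in \eqref{eq:vec}, on $\{Z(t)>0\}$ the Bayes rule yields $V(t,\,\cdot\,)=\E^{\Pr{Q}_{Z^*}}\bigl(\ell(T,\,\cdot\,)\mid\SgAlg{F}^0_t\bigr)=W(t)/Z(t)$. Conditioning the displayed identity on $\SgAlg{F}^0_t$ and dividing by $Z(t)$ produces $\E^{\Pr{Q}_{Z^*}}\bigl(\widehat\ell(T,Z^*)\mid\SgAlg{F}^0_t\bigr)=c+M(t)/(\vartheta Z(t))$; substituting this into $U=\E^{\Pr{Q}_{Z^*}}\bigl(\widehat\ell(T,Z^*)\mid\SgAlg{F}^0_t\bigr)+f(Z(t))/(\vartheta Z(t))$ (a direct rewriting of $U=\widehat L+\ell$ using the definition of $\widehat\ell$) yields $U(t,\,\cdot\,)=c+(M(t)+f(Z(t)))/(\vartheta Z(t))$. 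The formula $\eta=\vartheta(V-U)$ is then immediate by subtraction.

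The principal obstacle is the boundary analysis in the verification of Proposition~\ref{pro:2}: one must carefully describe the set $\{Z^*(T)=0\}$ in terms of the behaviour of $z$ outside $I_c$ and at $\inf I_c$ (distinguishing whether $f'(0)$ is attained at zero or is merely a right limit), and confirm that the strict upper bound $\P(\ell(T,\,\cdot\,)<\sup I_c)=1$ closes off the only case where the $\P$-a.s.\ inequality could fail. Beyond this, the remaining steps reduce to routine applications of the tower property, Bayes' rule, and the cancellation convention for $\widehat\ell$ at $\{Z=0\}$.
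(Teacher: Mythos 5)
Your proposal is correct and follows essentially the same route as the paper: define $Z^*(t)=\E(z\circ\ell(T,\cdot)\mid\SgAlg{F}^0_t)$, verify $C_{Z^*}(0,\cdot)=c$ $\Pr{Q}_{Z^*}$-a.s.\ and $C_{Z^*}(0,\cdot)\leqslant c$ $\P$-a.s.\ so that Proposition~\ref{pro:2} certifies $Z^*$ as a worst-case density, and then obtain the three formulas by Bayes' rule and the tower property (your single ``central algebraic identity'' packages the same $U$-computation that the paper carries out in several displayed steps, and your use of $f'(0_+)=\inf\mathrm{range}(f')$ mirrors the paper's explicit extension of $f'$ to zero). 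The only cosmetic gap is writing $\{\ell<\inf I_c\}$ where $\{\ell\leqslant\inf I_c\}$ is needed to cover the boundary point $\inf I_c\notin I_c$, but there $\ell-f'(0)/\vartheta=c$ so the inequality still holds and the argument closes.
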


\begin{proof}
The function $z$ defined by Eq.~\ref{eq:zx} provides a martingale density $Z\in\M_+(1)$ by composition:
\begin{align}\label{eq:Zdef}
Z(t)=\E\left(\left.z\circ\ell(T,\cdot)\,\right|\SgAlg{F}^0_t\right)
\end{align}
for all $t\in[0,T]$. $Z$ is exactly the first element of the vectorized process defined in Eq.~\ref{eq:vec}. 
It is indeed an element of $\M_+(1)$, for $Z(T)=z\circ\ell(T,\cdot)\geqslant0$ and $Z(0)=\E\bigl(z\circ\ell(T,\cdot)\bigr)=1$. 
The random variable 
\begin{align}\label{eq:compos}
C_Z(0,\cdot)\coloneqq
\ell(T,\cdot)-\frac{f'(Z(T,\cdot))}{\vartheta}=\ell(T,\cdot)-\frac{f'\circ z\circ\ell(T,\cdot)}{\vartheta}
\end{align}
is equal to the constant $c$ $\Pr{Q}_Z$-a.s. In fact, $c\in\mathbb{R}$ is selected such that
\begin{align*}
1=\E\bigl(z\circ\ell(T,\cdot)\bigr)
=\E\left(z\circ\ell(T,\cdot)\ind{\ell(T,\cdot)\in I_c}\right)+\E\left(z\circ\ell(T,\cdot)\ind{\ell(T,\cdot)\notin I_c}\right)
=\E\left(z\circ\ell(T,\cdot)\ind{\ell(T,\cdot)\in I_c}\right)
\end{align*}
by virtue of $z(x)=0$ for all $x\notin I_c$. Since $C_Z(0,\omega)=c$ for all $\omega\in\Omega$ satisfying $\ell(T,\omega)\in I_c$, we have
\begin{align*}
\Pr{Q}_Z\bigl(C_Z(0,\cdot)=c\bigr)=
\E\left(Z(T)\ind{C_Z(0,\cdot)=c}\right)
\geqslant&\,\E\left(z\circ\ell(T,\cdot)\ind{\ell(T,\cdot)\in I_c}\ind{C_Z(0,\cdot)=c}\right)\\
=&\,\E\left(z\circ\ell(T,\cdot)\ind{\ell(T,\cdot)\in I_c}\right)=1
\end{align*}

Next we need to show that $C_Z(0,\cdot)\leqslant c$ $\Pr{P}$-a.s. 
Notice that the function $f':(0,\infty)\to\mathbb{R}$ is continuous and strictly increasing due to the convexity of $f$, implying that $\text{range}(f')=(f'(0_+),f'(\infty_-))$. 
We conclude that $ \text{range}(f')$ is an open interval and denote it by $(a,b)$, where $a$ and $b$ can be either real numbers or $\pm\infty$. 
According to the assumption, we have
\begin{align*}
1=\Pr{P}\left(\ell(T,\cdot)<\sup I_c\right) 
=\Pr{P}\left(\ell(T,\cdot)\in I_c\bigcup\ell(T,\cdot)\leqslant \{\vartheta^{-1}a+c\}\right)
\end{align*}
We extend the function $f'$ continuously to zero by assigning $f'(0)=a$. 
\begin{align*}
\Pr{P}\bigl(C_Z(0,\cdot)\leqslant c\bigr)
=&\,\E\left(\ind{\ell(T,\cdot)\in I_c}\ind{C_Z(0,\cdot)\leqslant c}\right)+\E\left(\ind{\ell(T,\cdot)\notin I_c}\ind{C_Z(0,\cdot)\leqslant c}\right)\\
=&\,E\left(\ind{\ell(T,\cdot)\in I_c}\right)+\E\left(\ind{\ell(T,\cdot)\notin I_c}\ind{\ell(T,\cdot)-\vartheta^{-1}f'(0)\leqslant c}\right)\\
=&\,\E\left(\ind{\ell(T,\cdot)\in I_c}\right)+\E\left(\ind{\ell(T,\cdot)\notin (\vartheta^{-1}a+c,\,\vartheta^{-1}b+c)}\ind{\ell(T,\cdot)\leqslant  \vartheta^{-1}a+c}\right)\\
=&\,\Pr{P}\left(\ell(T,\cdot)\in I_c\bigcup\ell(T,\cdot)\leqslant \vartheta^{-1}a+c\right)=1
\end{align*}
We conclude that $C_Z(0,\cdot)=c$ $\Pr{Q}_Z$-a.s. and $C_Z(0,\cdot)\leqslant c$ $\Pr{P}$-a.s.  
According to Proposition.~\ref{pro:2}, $Z$ defined in Eq.~\ref{eq:Zdef} is a worst-case density process.

The second component of Eq.~\ref{eq:vec} is a $\Pr{P}$-martingale given by
\begin{align*}
M(t)=&\,\E\left(\left.f'\circ z\circ\ell(T,\cdot)\times {z\circ\ell(T,\cdot)}-{f\circ z\circ\ell(T,\cdot)}\,\right|\SgAlg{F}^0_t\right)\\
=&\,\E\left(\left.Z(T)f'(Z(T))-{f(Z(T))}\,\right|\SgAlg{F}^0_t\right)
\end{align*}
for all $t\in[0,T]$. Substituting Eq.~\ref{eq:compos} into Eq.~\ref{eq:Utdef}, we have
\begin{align*}
U(t,\cdot)
=&\,\E^{\Pr{Q}_{Z}}\left(\left.
C_Z(0,\cdot)+\frac{f'(Z(T,\cdot))}{\vartheta}\,\right|\,\SgAlg{F}^0_t\right)-\E\left(\left.\frac{f(Z(T))-f(Z(t))}{\vartheta Z(t)}\,\right|\SgAlg{F}^0_t\right)\\
=&\,c+\E\left(\left.\frac{Z(T)f'(Z(T))-f(Z(T))+f(Z(t))}{\vartheta  Z(t)}\,\right|\SgAlg{F}^0_t\right)\\
=&\,\frac{M(t)+{f(Z(t))}}{\vartheta Z(t)}+c
\end{align*}
By virtue of $C_Z(0,\cdot)=c$ $\Pr{Q}_Z$-a.s., the equation above holds $\Pr{Q}_Z$-a.s. More precisely, it holds for a.a. $\omega\in\{Z(t)>0\}$.\footnote{\,
According to the definition of $C_Z(0,\cdot)$ (Eq.~\ref{eq:compos}), $C_Z(0,\omega)=c$ for all $\omega\in\Omega$ satisfying $\ell(T,\omega)\in I_c$. It follows from $Z(T)(\omega)=z\circ\ell(T,\omega)=0$ for a.a $\omega\in\{\omega\in\Omega\,|\,\ell(T,\omega)\notin I_c\}$ that
\begin{align*}
\E^{\Pr{Q}_Z}(C_Z(0,\cdot)\,|\,\SgAlg{F}^0_t)=&\,\E^{\Pr{Q}_Z}(C_Z(0,\cdot)\ind{\ell(T,\cdot)\in I_c}\,|\,\SgAlg{F}^0_t)+\E^{\Pr{Q}_Z}(C_Z(0,\cdot)\ind{\ell(T,\cdot)\notin I_c}\,|\,\SgAlg{F}^0_t)\\
=&\,Z(t)^{-1}\E(Z(T,\cdot)C_Z(0,\cdot)\ind{\ell(T,\cdot)\in I_c}\,|\,\SgAlg{F}^0_t)+Z(t)^{-1}\E(Z(T,\cdot)C_Z(0,\cdot)\ind{\ell(T,\cdot)\notin I_c}\,|\,\SgAlg{F}^0_t)\\
=&\,cZ(t)^{-1}\E(Z(T,\cdot)\ind{\ell(T,\cdot)\in I_c}\,|\,\SgAlg{F}^0_t)\\
=&\,cZ(t)^{-1}\E(Z(T,\cdot)\,|\,\SgAlg{F}^0_t)=c
\end{align*}
for a.a. $\omega\in\{Z(t)>0\}$.
}
The third element of Eq.~\ref{eq:vec}, $W(t)=\\ \E\bigl(\left.z\circ\ell(T,\cdot)\times\ell(T,\cdot)\,\right|\SgAlg{F}^0_t\bigr)
=\E\bigl(\left.Z(T)\ell(T,\cdot)\,\right|\SgAlg{F}^0_t\bigr)$, characterizes the worst-case risk by
\begin{align*}
V(t,\cdot)=\E^{\Pr{Q}_Z}\left(\ell(T,\cdot)\,\bigr|\SgAlg{F}^0_t\right)
=\E\left(\left.\frac{Z(T)}{Z(t)}\ell(T,\cdot)\,\right|\SgAlg{F}^0_t\right)
=\frac{W(t)}{Z(t)}
\end{align*}
for all $\omega\in\{\omega\in\Omega\,|\,Z(t)(\omega)>0\}$. Thus the equation above holds $\Pr{Q}_Z$-a.s. 
Following the expressions for $U(t,\cdot)$ and $V(t,\cdot)$, we get the formula for the budget process 
\begin{equation*}
\eta(t,\cdot)=\vartheta\bigl(V(t,\cdot)-U(t,\cdot)\bigr)=\frac{\vartheta W(t)-M(t)-{f(Z(t))}}{Z(t)}-\vartheta c
\end{equation*}
\end{proof}

In the proof above, we propose the inverse of the function $f'$, denoted by $g: \text{range}(f')\to(0,\infty)$. Using this inverse function, we have the following proposition which states that certain integrability conditions guarantee the existence of the solution, given by Theorem ~\ref{th:0}, to the problem of model risk quantification. 
\begin{proposition}\label{pro:45}
Denote $g:(a,b)\to(0,\infty)$ as the inverse function of $f'$. If $f'(\infty_-)=\infty$ and for every $c\in\mathbb{R}$ $g\bigl(\vartheta(\ell(T,\cdot)-c)\bigr)\ind{\ell(T,\cdot)\in I_c}$ is integrable under the reference measure $\Pr{P}$, then the assumptions in Theorem ~\ref{th:0} hold.
\end{proposition}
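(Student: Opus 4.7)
The plan is to verify the three hypotheses of Theorem~\ref{th:0}: the existence of $z:\R\to\R_+$ satisfying~\eqref{eq:zx}, the existence of a constant $c$ for which $\E\bigl(z\circ\ell(T,\cdot)\bigr)=1$, and the condition $\Pr{P}\bigl(\ell(T,\cdot)<\sup I_c\bigr)=1$. The last of these is immediate: since $f'$ is continuous and strictly increasing with $f'(\infty_-)=\infty$, we have $\mathrm{range}(f')=(a,\infty)$ (with possibly $a=-\infty$), hence $\sup I_c=\infty$, while $\ell(T,\cdot)$ is real-valued.

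For the first hypothesis, using the continuous strictly-increasing inverse $g=(f')^{-1}:(a,\infty)\to(0,\infty)$, I would define, for an as-yet-unspecified $c\in\R$,
\begin{align*}
z(x):=g\bigl(\vartheta(x-c)\bigr)\ind{x\in I_c}.
\end{align*}
For $x\in I_c=(\vartheta^{-1}a+c,\infty)$ we have $\vartheta(x-c)\in\mathrm{range}(f')$, so $f'\bigl(z(x)\bigr)=\vartheta(x-c)$ and hence $x-f'(z(x))/\vartheta=c$; on the complement $z(x)=0$ by construction. Measurability is automatic from the continuity of $g$.

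The bulk of the work is in producing a $c$ for which
\begin{align*}
\varphi(c):=\E\bigl(g(\vartheta(\ell(T,\cdot)-c))\ind{\ell(T,\cdot)\in I_c}\bigr)=1,
\end{align*}
which I would attack by an intermediate-value argument. The integrability hypothesis yields $\varphi(c)<\infty$ for every $c\in\R$. As $c$ increases, both $g(\vartheta(\ell(T,\cdot)-c))$ (since $g$ is increasing) and the indicator $\ind{\ell(T,\cdot)>\vartheta^{-1}a+c}$ decrease pointwise, so $\varphi$ is non-increasing. For continuity at a point $c_0$, I would extend $g$ continuously by $g(a):=0$ and check that, as $c_n\to c_0$, the integrands converge $\Pr{P}$-a.s.\ to $g(\vartheta(\ell(T,\cdot)-c_0))\ind{\ell(T,\cdot)>\vartheta^{-1}a+c_0}$; monotone or dominated convergence then yields $\varphi(c_n)\to\varphi(c_0)$.

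Finally, since $f'(\infty_-)=\infty$ we have $g(y)\to\infty$ as $y\to\infty$, so for each $\omega$ the integrand tends to $+\infty$ as $c\to-\infty$, and Fatou's lemma gives $\varphi(c)\to\infty$. Conversely, as $c\to\infty$ the indicator vanishes pointwise (because $\ell(T,\cdot)$ is finite $\Pr{P}$-a.s.) and the integrand is dominated by its value at any fixed reference $c_0$, so dominated convergence yields $\varphi(c)\to 0$. Combining continuity, monotonicity, and these limits, the intermediate value theorem produces a $c^*\in\R$ with $\varphi(c^*)=1$, completing the verification of the hypotheses of Theorem~\ref{th:0}. The delicate step I expect is the continuity check at values $c_0$ with $\Pr{P}(\ell(T,\cdot)=\vartheta^{-1}a+c_0)>0$: the convention $g(a):=0$ is precisely what neutralises the indicator's jump across the boundary of $I_{c_0}$.
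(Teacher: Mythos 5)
Your proposal is correct and follows the same overall strategy as the paper: define $\varphi(c)=\E\bigl(g(\vartheta(\ell(T,\cdot)-c))\ind{\ell(T,\cdot)\in I_c}\bigr)$, establish continuity in $c$, show $\varphi$ takes values on both sides of $1$, and invoke the intermediate value theorem; the observation that $\sup I_c=\infty$ (so $\Pr{P}(\ell(T,\cdot)<\sup I_c)=1$ is automatic) is identical. Where you genuinely diverge is in the supporting estimates. The paper establishes continuity by a hands-on two-sided modulus argument, splitting off the boundary contribution $\ind{\ell(T,\cdot)-\vartheta^{-1}a\in(c,c_0]}$ and bounding it via a carefully chosen $\delta_-\leqslant(f'(\varepsilon/2)-a)/\vartheta$ so that $g(a+\vartheta\delta_-)\leqslant\varepsilon/2$; and it produces explicit $c_\pm$ with $K(c_-)\geqslant1$, $K(c_+)\leqslant1$ by choosing $\xi>1$, a set of measure at least $1/\xi$, and shifting $c$ by $\max(0,f'(\xi)-a)/\vartheta$. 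You instead extend $g$ continuously by $g(a):=0$ so that a single dominated-convergence application handles continuity (correctly identifying the boundary atom $\Pr{P}(\ell(T,\cdot)=\vartheta^{-1}a+c_0)>0$ as the delicate case, which the convention neutralises), and you obtain $\varphi(c)\to\infty$ as $c\to-\infty$ via Fatou (using $g(y)\to\infty$ from $f'(\infty_-)=\infty$) and $\varphi(c)\to0$ as $c\to\infty$ via dominated convergence. Your route is more economical, leaning on standard convergence theorems rather than bespoke quantitative estimates; the paper's is more explicit and elementary in that it exhibits concrete witnesses. Both rest on the same integrability hypothesis to supply the dominating function, and both are sound.
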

\begin{proof}
We need to prove the existence of $c\in\mathbb{R}$ and $z:\mathbb{R}\to \mathbb{R}_+$, such that Eq.~\ref{eq:zx} for all $x\in I_c$ and $z(x)=0$ for all $x\notin I_c$,  $\E\bigl(z\circ\ell(T,\cdot)\bigr)=1$ and $\Pr{P}\left(\ell(T,\cdot)<\sup I_c\right) =1$. 

We have shown in the proof of Theorem ~\ref{th:0} that $ \text{range}(f')=(a,b)$. Here $b$ takes $\infty$ as the strictly increasing function $f'$ diverges at infinity. 
For a given $c\in\mathbb{R}$, the implicit equation Eq.~\ref{eq:zx} gives  
\begin{align*}
z(x)=g\bigl(\vartheta(x-c)\bigr)
\end{align*}
for all $x\in I_c=(\vartheta^{-1}a+c,\infty)$. 
For all $x\notin I_c$, $z(x)=0$ which gives
\begin{align*}
\E\bigl(z\circ\ell(T,\cdot)\bigr)
=&\,\E\bigl(z\circ\ell(T,\cdot)\ind{\ell(T,\cdot)>\vartheta^{-1}a+c}\bigr)+\E\bigl(z\circ\ell(T,\cdot)\ind{\ell(T,\cdot)\leqslant\vartheta^{-1}a+c}\bigr)\\
=&\,\E\bigl(g\bigl(\vartheta(\ell(T,\cdot)-c)\bigr)\ind{\ell(T,\cdot)>\vartheta^{-1}a+c}\bigr)
\end{align*}
We would like to show that the function $K:\mathbb{R}\to\mathbb{R}$ defined by
\begin{align}\label{eq:contfunc}
K(c):=\E\left(g\bigl(\vartheta(\ell(T,\cdot)-c)\bigr)\ind{\ell(T,\cdot)>\vartheta^{-1}a+c}\right)
\end{align}
takes value of one for some $c\in\mathbb{R}$. 

First we will show that $K$
is continuous. Fix an arbitrary $c_0\in\mathbb{R}$ and $\varepsilon\in(0,\infty)$. Resulted from the continuity of $g$, the function $y(\cdot,\omega):(-\infty,c_0\,]\to\mathbb{R}$ defined by
\begin{align*}
y(c,\omega):=\left(g\bigl(\vartheta(\ell(T,\omega)-c)\bigr)-g\bigl(\vartheta(\ell(T,\omega)-c_0)\bigr)\right)\ind{\ell(T,\omega)>\vartheta^{-1}a+c_0}
\end{align*}
is continuous for every $\omega\in\Omega$. Therefore, the function $Y:(-\infty,c_0]\to\mathbb{R}$, defined by $Y(c):=\E\left(y(c,\cdot)\right)$, is continuous at $c_0$.\footnote{
It follows from the dominated convergence theorem that $Y$ is continuous at $c_0$. In fact, the sequence, $\{y(c_0-1/n,\cdot)\}_{n=1}^\infty$, of real-valued measurable functions
converges pointwise to $y(c_0,\cdot)$ by virtue of its continuity. The sequence is dominated by $y(c_0-1,\cdot)$ due to the fact that $g$ increases monotonically. $y(c_0-1,\cdot)$ is integrable as
\begin{align*}
\E\bigl(|y(c_0-1,\cdot)|\bigr)\leqslant \E\left(g\bigl(\vartheta(\ell(T,\cdot)-c_0+1\bigr)\ind{\ell(T,\omega)>\vartheta^{-1}a+c_0}\right)
\leqslant \E\bigl(g\bigl(\vartheta(\ell(T,\cdot)-c_0+1\bigr)\ind{\ell(T,\omega)>I_{c_0-1}}\bigr)<\infty
\end{align*}
The dominated convergence theorem guarantees the convergence of the expectation 
\begin{align*}
\lim_{n\to\infty}\E\bigl(y(c_0-1/n,\cdot)\bigr)=\E\bigl(y(c_0,\cdot)\bigr)=0
\end{align*}
This means that given an arbitrary $\varepsilon>0$, there exists $n_0\in\mathbb{N}$ such that $
\bigl|\E\bigl(y(c_0-1/n,\cdot)\bigr)\bigr|<\varepsilon$ 
for all $n\geqslant n_0$. Due to the fact that $g$ increases monotonically, for every $c\in[c_0-1/n_0,c_0]$ we have
\begin{align*}
0\leqslant\E\bigl(y(c,\cdot)\bigr)-\E\bigl(y(c_0,\cdot)\bigr)=\E\bigl(y(c,\cdot)\bigr)\leqslant
\E\bigl(y(c_0-1/n,\cdot)\bigr)<\varepsilon
\end{align*}
This proves that $Y$ is continuous at $c_0$.
}
Its continuity implies the existence of $\delta>0$ such that $|Y(c)|=|Y(c)-Y(c_0)|<\varepsilon/2$ for all $c_0\in\mathbb{R}$ satisfying $c_0-\delta<c\leqslant c_0$. 
Let 
\begin{align*}
\delta_-:=\min\left(\delta, \frac{f'(\varepsilon/2)-a}{\vartheta}\right)
\end{align*}
Then for all $c_0-\delta_-<c\leqslant c_0$ we have 
\begin{align}
0\leqslant K(c)-K(c_0)
=\,&\E\left(g\bigl(\vartheta(\ell(T,\cdot)-c)\bigr)\ind{\ell(T,\cdot)-\vartheta^{-1}a\in(c, c_0]}\right)+Y(c)\nonumber\\
\leqslant\,&\E\left(g\bigl(\vartheta(\vartheta^{-1}a+c_0-c)\bigr)\ind{\ell(T,\cdot)-\vartheta^{-1}a\in(c, c_0]}\right)+Y(c)\nonumber\\
<\,&g(a+\vartheta\delta_-)+\varepsilon/2\nonumber\\
\leqslant\,& g\bigl({f'(\varepsilon/2)}\bigr)+\varepsilon/2\nonumber\\
=\,&\varepsilon\nonumber
\end{align}
We may prove in a similar way that there exists $\delta_+>0$ such that $K(c)-K(c_0)\in(-\varepsilon,0\,]$ for all $c_0<c<c_0+\delta_+$. Combining the two arguments, $|K(c)-K(c_0)|$ is less than $\varepsilon$ for all $c\in\mathbb{R}$ satisfying $|c-c_0|<\min(\delta_+,\delta_-)$. This proves that the function $K$, defined in Eq.~\ref{eq:contfunc}, is continuous.

Next we need to prove that there exist $c_+, c_-\in\mathbb{R}$ such that $K(c_+)\leqslant1$ and $K(c_-)\geqslant1$. In fact, the limit $\lim_{c\to -\infty}\Pr{P}\bigl(\ell(T,\cdot)>\vartheta^{-1}a+c\bigr)=1$ implies the existence of $c\in\mathbb{R}$ such that 
$\Pr{P}\bigl(\ell(T,\cdot)>\vartheta^{-1}a+c\bigr)\geqslant 1/\xi$ for some $\xi>1$. Defining 
\begin{align*}
c_-:=c-\frac{\max\bigl(0, f'(\xi)-a\bigr)}{\vartheta}\leqslant c
\end{align*}
we have
\begin{align*}
K(c_-)\geqslant
\E\bigl(g\bigl(\vartheta(\ell(T,\cdot)-c_-)\bigr)\ind{\ell(T,\cdot)>\vartheta^{-1}a+ c}\bigr)
\geqslant&\,
g\bigl(\vartheta(\vartheta^{-1}a+ c-c_-)\bigr)\E\bigl(\ind{\ell(T,\cdot)>\vartheta^{-1}a+ c}\bigr)\\
\geqslant&\,
g\bigl(\vartheta(\vartheta^{-1}a+ \vartheta^{-1}(f'(\xi)-a)\bigr)\E\bigl(\ind{\ell(T,\cdot)>\vartheta^{-1}a+ c}\bigr)\\
=&\,
\xi\Pr{P}\bigl(\ell(T,\cdot)>\vartheta^{-1}a+c\bigr)\\
\geqslant&\, 1
\end{align*}
On the other hand, the following limit\footnote{\,The convergence is guaranteed by the dominated convergence theorem. See the footnote in the last page.}
\begin{align*}
\lim_{c\to \infty}\E\left( g(\vartheta\ell(T,\cdot))\ind{\ell(T,\cdot)-\vartheta^{-1}a\in(0,c)}\right)
=\E\bigl(g(\vartheta\ell(T,\cdot))\ind{\ell(T,\cdot)>\vartheta^{-1}a}\bigr)<\infty
\end{align*}
implies the existence of $c\in\mathbb{R}$ such that 
\begin{align*}
\E\left( g(\vartheta\ell(T,\cdot))\ind{\ell(T,\cdot)-\vartheta^{-1}a\in(0,c)}\right)
\geqslant
\E\bigl(g(\vartheta\ell(T,\cdot))\ind{\ell(T,\cdot)>\vartheta^{-1}a}\bigr)-1
\end{align*}
Letting $c_+=\max(0,c)$, we have
\begin{align*}
K(c_+)
\leqslant&\,\E\left(g(\vartheta\ell(T,\cdot))\ind{\ell(T,\cdot)>\vartheta^{-1}a+c_+}\right)\\
\leqslant&\,\E\left(g(\vartheta\ell(T,\cdot))\ind{\ell(T,\cdot)>\vartheta^{-1}a+c}\right)\\
=&\,\E\bigl(g(\vartheta\ell(T,\cdot))\ind{\ell(T,\cdot)>\vartheta^{-1}a}\bigr)-
\E\left(g(\vartheta\ell(T,\cdot))\ind{\ell(T,\cdot)-\vartheta^{-1}a\in(0,c)}\right)\\
\leqslant&\, 1
\end{align*}
According to the intermediate value theorem, there exists $c\in\mathbb{R}$ such that the continuous function $K$, defined in Eq.~\ref{eq:contfunc}, takes the value of one. 
\footnote{\,Such $c\in\mathbb{R}$ is also unique by noticing that the function $K$ is strictly decreasing.}

The condition $\Pr{P}\left(\ell(T,\cdot)<\sup I_c\right) =1$ holds irrespective of the actual measure $\Pr{P}$, for 
\begin{align*}
\bigcup_{x\in I_c}\ell(T,\cdot)\leqslant x
=\bigcup_{x> \vartheta^{-1}a+c}\{\omega\in\Omega\,|\,\ell(T,\omega)\leqslant x\}=\{\omega\in\Omega\,|\,\ell(T,\omega)\in\mathbb{R}\}
\end{align*}
has probability one. As a result, the assumptions stated in Theorem \ref{th:0} are valid, which guarantees the existence of the worst-case solution provided by the theorem. 
\end{proof}

We consider a special class of $f$-divergence, including the renowned Kullback-Leibler divergence, of which the function $\mathbb{R}\ni x\mapsto xf'(x)-f(x)$ is linear (or equivalently $x\mapsto xf''(x)$ is constant). This type of $f$-divergence has a particular advantage on applying Theorem.~\ref{th:0}, because the process
\begin{align}\label{eq:simple}
M(t)=&\,\E\left(\left.Z(T)f'(Z(T))-{f(Z(T))}\,\right|\SgAlg{F}^0_t\right)\nonumber\\
=&\,
\E(Z(T)\,|\,\SgAlg{F}^0_t)\times f'\left(\E(Z(T)\,|\,\SgAlg{F}^0_t)\right)-f\left(\E(Z(T)\,|\,\SgAlg{F}^0_t)\right)\nonumber\\
=&\,Z(t)f'(Z(t))-{f(Z(t))}
\end{align}
can be calculated directly from $Z(t)$. Therefore in practice we only need to apply backward induction to the two-dimensional $\Pr{P}$-martingale $(Z(t),W(t))_{t\in[0,T]}$. By substituting Eq.~\ref{eq:simple} into Eq.~\ref{eq:uveta}, we have the following proposition. 
\begin{corollary}\label{cor:46}
Suppose in Theorem \ref{th:0} there exists $d\in(0,\infty)$ such that $xf''(x)=d$ for all $x\in\mathbb{R}_+$. 
Then the value process, $U$, the worst-case risk, $V$, and the budget process, $\eta$, satisfy the following equations 
\begin{align}\label{eq:uveta1}
U(t,\cdot)=&\,\frac{f'(Z(t))}{\vartheta}+c\nonumber\\
V(t,\cdot)=&\,\frac{W(t)}{Z(t)}\\
\eta(t,\cdot)=&\frac{\vartheta W(t)}{Z(t)}-f'(Z(t))-\vartheta c\nonumber
\end{align}
for all $t\in[0,T]$ and all $\omega\in\Omega$ such that $Z(t)(\omega)>0$, where $(Z,\,W)$ is a $\Fltrn{F}^0$-adapted $\Pr{P}$-martingale that satisfies the following terminal condition:
\begin{align*}
\begin{pmatrix}
Z\\W
\end{pmatrix}(T)=
\begin{pmatrix}
z\circ\ell(T,\cdot)\\
z\circ\ell(T,\cdot)\times\ell(T,\cdot)
\end{pmatrix}
\end{align*}
\end{corollary}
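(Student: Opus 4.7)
The plan is to invoke Theorem~\ref{th:0} and exploit the hypothesis $xf''(x) = d$ to collapse the middle component of the martingale triple $(Z, M, W)$ into an explicit function of $Z(t)$ alone, thereby reducing the system to the pair $(Z, W)$ and simplifying the three formulas in \eqref{eq:uveta}.

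First I would observe that the map $\mathbb{R}_+\ni x \mapsto h(x) := x f'(x) - f(x)$ is affine. Differentiation gives $h'(x) = f'(x) + x f''(x) - f'(x) = d$, so $h(x) = dx + h(1) - d$ on $\mathbb{R}_+$. Since Theorem~\ref{th:0} exhibits $Z$ as a non-negative $\Pr{P}$-martingale with $Z(T) = z\circ\ell(T,\cdot)$, affineness of $h$ together with the martingale property yields
\begin{equation*}
M(t) = \E\bigl(h(Z(T)) \,\bigl|\, \SgAlg{F}^0_t\bigr)
= h\bigl(\E(Z(T)\,|\,\SgAlg{F}^0_t)\bigr)
= h\bigl(Z(t)\bigr)
= Z(t)f'(Z(t)) - f(Z(t))
\end{equation*}
for every $t \in [0,T]$. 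This is exactly the display \eqref{eq:simple} already anticipated in the paragraph preceding the corollary, and it shows that $M$ is a deterministic function of $Z$, so no separate backward induction on $M$ is required.

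Next I would substitute the identity $M(t) = Z(t)f'(Z(t)) - f(Z(t))$ into the three formulas of Theorem~\ref{th:0}. In the expression for $U$, the numerator $M(t) + f(Z(t))$ collapses to $Z(t) f'(Z(t))$, giving $U(t,\cdot) = f'(Z(t))/\vartheta + c$. The expression $V(t,\cdot) = W(t)/Z(t)$ carries over unchanged. In the expression for $\eta$, the numerator $\vartheta W(t) - M(t) - f(Z(t))$ simplifies to $\vartheta W(t) - Z(t) f'(Z(t))$, yielding $\eta(t,\cdot) = \vartheta W(t)/Z(t) - f'(Z(t)) - \vartheta c$. The terminal conditions for $Z$ and $W$ are inherited verbatim from \eqref{eq:vec}, so the $(Z, W)$ martingale in the corollary is precisely the projection of the triple in Theorem~\ref{th:0} onto its first and third coordinates.

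No real obstacle stands in the way: the content is an algebraic simplification powered by the single observation that $h$ is affine. The only subtlety worth flagging is that the divisions by $Z(t)$ inherit the qualification already present in Theorem~\ref{th:0}, so the identities in \eqref{eq:uveta1} hold for a.a.\ $\omega \in \{Z(t) > 0\}$.
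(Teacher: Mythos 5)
Your proof is correct and follows essentially the same route as the paper: observe that $h(x)=xf'(x)-f(x)$ is affine (equivalently $h'(x)=xf''(x)=d$), so conditional expectation passes through it to give $M(t)=h(Z(t))$, then substitute into \eqref{eq:uveta} of Theorem~\ref{th:0}. One minor point in your favour: you correctly attribute the commutation with $\E(\cdot\,|\,\SgAlg{F}^0_t)$ to the affineness of the \emph{difference} $h$, whereas the intermediate line of the paper's display \eqref{eq:simple} is written as though expectation commutes with $f'$ and $f$ term by term, which is not literally true.
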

Corollary \ref{cor:46} applies to the Kullback-Leibler divergence. In particular, the calculation of the constant $c$ is pretty straightforward. We illustrate this in the following corollary.
\begin{corollary}\label{co:47}
Under the Kullback-Leibler divergence, suppose $\E\left(e^{\vartheta\ell(T,\cdot)}\right)<\infty$. Then there exists an unique solution to the problem of model risk quantification, given by
\begin{align*}
U(t,\cdot)=&\,\frac{\ln \tilde{Z}(t)}{\vartheta}\\
V(t,\cdot)=&\,\frac{\tilde{W}(t)}{\tilde{Z}(t)}\\
\eta(t,\cdot)=&\,\frac{\vartheta \tilde{W}(t)}{\tilde{Z}(t)}-\ln \tilde{Z}(t)
\end{align*}
where $\bigl(\tilde{Z},\,\tilde{W}\bigr)$ is a $\Fltrn{F}^0$-adapted $\Pr{P}$-martingale that satisfies the terminal condition:
\begin{align*}
\begin{pmatrix}
\tilde{Z}\\
\tilde{W}
\end{pmatrix}(T)=
\begin{pmatrix}
\exp\left(\vartheta \ell(T,\cdot)\right)\\ 
\ell(T,\cdot)\exp\left(\vartheta \ell(T,\cdot)\right)
\end{pmatrix}
\end{align*}
\end{corollary}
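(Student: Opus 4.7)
The plan is to recognize that Corollary~\ref{co:47} is simply an instantiation of Corollary~\ref{cor:46} for the Kullback--Leibler case, so I would begin by verifying the hypotheses. Taking $f(x)=x\ln x$, one computes $f'(x)=1+\ln x$ and $f''(x)=1/x$, so $xf''(x)=1$ is constant, placing us in the setting of Corollary~\ref{cor:46} with $d=1$. Moreover $f'(\infty_-)=\infty$, and since $f'$ is a bijection from $(0,\infty)$ onto $\R$, we have $\text{range}(f')=\R$ with inverse $g(y)=e^{y-1}$. Consequently $I_c=\R$ for every $c$, and the integrability condition of Proposition~\ref{pro:45} reduces to $\E\bigl(e^{\vartheta(\ell(T,\cdot)-c)-1}\bigr)<\infty$ for every $c\in\R$, which is equivalent to the standing assumption $\E\bigl(e^{\vartheta\ell(T,\cdot)}\bigr)<\infty$.

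Next I would solve Eq.~\eqref{eq:zx} explicitly. The equation $x-f'(z(x))/\vartheta=c$ yields $z(x)=e^{\vartheta(x-c)-1}$, and the normalisation $\E\bigl(z\circ\ell(T,\cdot)\bigr)=1$ determines the constant uniquely as
\begin{equation*}
c=\frac{\ln\E\bigl(e^{\vartheta\ell(T,\cdot)}\bigr)-1}{\vartheta},
\end{equation*}
which is finite by hypothesis. Uniqueness of $c$ also follows from the strict monotonicity of the function $K$ in the proof of Proposition~\ref{pro:45}. The condition $\Pr{P}(\ell(T,\cdot)<\sup I_c)=1$ is automatic because $\sup I_c=+\infty$. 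Hence Theorem~\ref{th:0}, and in particular its specialisation Corollary~\ref{cor:46}, applies and delivers a worst-case martingale density $Z\in\M_+(1)$ together with the $\Pr{P}$-martingale pair $(Z,W)$.

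The final step is a change of normalisation to absorb $c$. Define the $\Pr{P}$-martingales
\begin{equation*}
\tilde Z(t)\coloneqq e^{\vartheta c+1}Z(t),\qquad \tilde W(t)\coloneqq e^{\vartheta c+1}W(t),
\end{equation*}
so that the terminal values $Z(T)=e^{\vartheta(\ell(T,\cdot)-c)-1}$ and $W(T)=Z(T)\ell(T,\cdot)$ become $\tilde Z(T)=e^{\vartheta\ell(T,\cdot)}$ and $\tilde W(T)=\ell(T,\cdot)e^{\vartheta\ell(T,\cdot)}$, matching the stated terminal condition. Substituting $Z(t)=e^{-\vartheta c-1}\tilde Z(t)$ into the formulas of Corollary~\ref{cor:46} gives
\begin{equation*}
U(t,\cdot)=\frac{1+\ln Z(t)}{\vartheta}+c=\frac{\ln\tilde Z(t)-\vartheta c-1+1}{\vartheta}+c=\frac{\ln\tilde Z(t)}{\vartheta},
\end{equation*}
while $V(t,\cdot)=W(t)/Z(t)=\tilde W(t)/\tilde Z(t)$ is immediate, and $\eta(t,\cdot)=\vartheta V(t,\cdot)-\vartheta U(t,\cdot)=\vartheta\tilde W(t)/\tilde Z(t)-\ln\tilde Z(t)$.

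The only real pitfall is bookkeeping: one must carefully track how the shift constant $c$ interacts with the rescaling from $(Z,W)$ to $(\tilde Z,\tilde W)$, and verify that all three formulas simplify in a way independent of $c$. Once this is confirmed, uniqueness of the solution follows from the strict monotonicity of $K$ noted in Proposition~\ref{pro:45} (so the normalising $c$ is unique), together with the uniqueness of the martingales $\tilde Z$ and $\tilde W$ determined by their $\SgAlg{F}^0_T$-measurable terminal values.
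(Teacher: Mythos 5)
Your proposal is correct and follows essentially the same route as the paper: verify the hypotheses of Proposition~\ref{pro:45} and Corollary~\ref{cor:46} for $f(x)=x\ln x$, solve Eq.~\eqref{eq:zx} to get $z(x)=e^{\vartheta(x-c)-1}$ and the unique normalising constant $c=(\ln\E(e^{\vartheta\ell(T,\cdot)})-1)/\vartheta$, then rescale $(Z,W)$ into $(\tilde Z,\tilde W)$ to absorb $c$. The only cosmetic difference is that you define $\tilde Z=e^{\vartheta c+1}Z$ whereas the paper defines $\tilde Z(t)=\E(e^{\vartheta\ell(T,\cdot)}\,|\,\SgAlg{F}^0_t)$ directly and observes $Z=\tilde Z/\tilde Z(0)$; these are identical since $e^{\vartheta c+1}=\tilde Z(0)$.
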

\begin{proof}
The Kullback-Leibler divergence adopts $f'(x)=(x\ln x)'=\ln x+1$ for all $x\in(0,\infty)$. $f'$ diverges at $\infty$. 
The inverse function $g:\mathbb{R}\to(0,\infty)$ is given by $g(x)=e^{x-1}$. Since $\E\left(e^{\vartheta\ell(T,\cdot)}\right)<\infty$, we have
\begin{align*}
\E\left(\bigl|g\bigl(\vartheta(\ell(T,\cdot)-c)\bigr)\ind{\ell(T,\cdot)\in I_c}\bigr|\right)
=e^{-\vartheta c-1}\E\left(e^{\vartheta\ell(T,\cdot)}\right)<\infty
\end{align*}
for all $c\in\mathbb{R}$. Proposition \ref{pro:45} guarantees the existence of a unique $c\in\mathbb{R}$ and $z:\mathbb{R}\to \mathbb{R}_+$ satisfying $\E\bigl(z\circ\ell(T,\cdot)\bigr)=1$, therefore a unique solution to the problem of model risk quantification. 

More specifically, we calculate the function $z:\mathbb{R}\to\mathbb{R}_+$ from Eq.~\ref{eq:zx}:
\begin{align*}
z(x)=e^{\vartheta(x-c)-1}
\end{align*}
for all $x\in\mathbb{R}$. 
The constant $c\in\mathbb{R}$ is given by
\begin{align*}
1=\E\bigl(z\circ\ell(T,\cdot)\bigr)=\E\left(e^{\vartheta\left(\ell(T,\cdot)-c\right)-1}\right)
\Leftrightarrow
c=\frac{1}{\vartheta}\ln\E\left(e^{\vartheta\ell(T,\cdot)-1}\right)=\frac{\ln\tilde{Z}(0)-1}{\vartheta}
\end{align*}

The corollary defines two $\Pr{P}$-martingales by
\begin{align*}
\tilde{Z}(t)=&\,\E\left(\left.e^{\vartheta\ell(T,\cdot)}\,\right|\SgAlg{F}^0_t\right)\\
\tilde{W}(t)=&\,\E\left(\left.\ell(T,\cdot)e^{\vartheta\ell(T,\cdot)}\,\right|\SgAlg{F}^0_t\right)
\end{align*}
The process $Z$ and $W$ in Corollary \ref{cor:46} are simply normalized versions of $\tilde{Z}$ and $\tilde{W}$,
\begin{align*}
Z(t)=&\,\E\left(\left.z\circ\ell(T,\cdot)\,\right|\SgAlg{F}^0_t\right)=\E\left(\left.e^{\vartheta\left(\ell(T,\cdot)-c\right)-1}\,\right|\SgAlg{F}^0_t\right)
=\frac{\tilde{Z}(t)}{\tilde{Z}(0)}
\\
W(t)=&\,\E\left(\left.z\circ\ell(T,\cdot)\times\ell(T,\cdot)\,\right|\SgAlg{F}^0_t\right)=\E\left(\left.\ell(T,\cdot)e^{\vartheta\left(\ell(T,\cdot)-c\right)-1}\,\right|\SgAlg{F}^0_t\right)
=\frac{\tilde{W}(t)}{\tilde{Z}(0)}
\end{align*}
Substituting the equations above into Eq.~\ref{eq:uveta1}, we have
\begin{align*}
U(t,\cdot)=&\,\frac{\ln(Z(t))+1}{\vartheta }+c
=\frac{\ln {\tilde{Z}(t)}}{\vartheta }\\
V(t,\cdot)=&\,\frac{W(t)}{Z(t)}
=\frac{\tilde{W}(t)}{\tilde{Z}(t)}\\
\eta(t,\cdot)=&\,\frac{\vartheta W(t)}{Z(t)}-\bigl(\ln(Z(t))-1\bigr)-\vartheta c
=\frac{\vartheta \tilde{W}(t)}{\tilde{Z}(t)}-\ln \tilde{Z}(t)
\end{align*}
Note that $Z(T)(\omega)=e^{\vartheta(\ell(T,\omega)-c)-1}>0$ for all $\omega\in\Omega$. $Z(t)=\E(Z(T)\,|\SgAlg{F}^0_t)>0$, implying that the equations above hold for all $t\in[0,T]$ and all $\omega\in\Omega$.
\end{proof}

\section{Model Risk Measurement with Continuous Semimartingales}

The last section provides the general theory on quantifying the model risk. In this section, we focus on the class of continuous semimartingales. 
It has an important property formulated by the functional Ito formula. To introduce the formula we need to briefly review the functional Ito calculus \cite[]{bally2016functional}. 
First we define the horizontal derivative and the vertical derivative of a non-anticipative functional $F:\Lambda_T^d\to\mathbb{R}$. Its horizontal derivative at $(t,\omega)\in\Lambda_T^d$ is defined by the limit
\begin{align*}
\mathcal{D}F(t,\omega):=\lim_{h\to0^+}\frac{F(t+h,\omega)-F(t,\omega)}{h}
\end{align*}
if it exists. 
Intuitively, it describes the rate of change w.r.t time, assuming no change of the state variable from $t$ onwards, and conditional to its history up to $t$ given by the stopped path $\omega_t$. 
On the other hand, the vertical derivative describes the rate of change w.r.t the state variable from $t$ onwards. Formally, the vertical derivative at $(t,\omega)\in\Lambda_T^d$, denoted by $\nabla_\omega F(t,\omega)$, is defined as the gradient of the function $\mathbb{R}^d\ni x \mapsto F\bigl(t,\omega_t+x\ind{[t,T]}\bigr)$ at $0$,
assuming its existence. The horizontal and vertical derivatives of a non-anticipative functional are also non-anticipative functionals.

We define the left-continuous non-anticipative functionals by noticing that the space of stopped paths, $\Lambda_T^d$, is endowed with a metric $d_\infty$. Suppose $F:\Lambda_T^d\to\mathbb{R}$ is a non-anticipative functional. $F$ is left-continuous if for every $(t,\omega)\in\Lambda_T^d$ and $\varepsilon>0$, there exists $\delta>0$ such that $|F(t,\omega)-F(t',\omega')|<\varepsilon$ for all $(t',\omega')\in\Lambda_T^d$ satisfying $t'<t$ and $d_\infty((t,\omega),(t',\omega'))<\delta$. 
We may further impose a boundedness condition to a non-anticipative functional $F$. It states that for any compact $K\subset\mathbb{R}^d$ and $t_0<T$, there exists a $C>0$ such that $|F(t,\omega)|\leqslant C$ for all $t\leqslant t_0$ and $\omega\in\Omega$. Suppose a non-anticipative functional $F$ is horizontally differentiable and vertically twice-differentiable for all $(t,\omega)\in\Lambda_T^d$, and $\mathcal{D}F$, $\nabla_\omega F$ and $\nabla_\omega ^2 F$ satisfy the boundedness condition above. In addition, $F$, $\nabla_\omega F$ and $\nabla_\omega ^2 F$ are left-continuous, and $\mathcal{D}F$ is continuous for all $(t,\omega)\in\Lambda_T^d$. Then we call $F$ a regular functional. 

Suppose the canonical process $X$ on $\Omega$ is a continuous semimartingale and $F:\Lambda_T^d\to\mathbb{R}$ is a regular functional. 
The $\mathbb{R}$-valued process $(Y(t))_{t\in[0,T]}$, defined by $Y(t)=F(t,\cdot)$ for all $t\in[0,T]$, follows the functional Ito formula $\Pr{P}$-a.s.\cite[pp. 190--191]{bally2016functional}
\begin{align*}
Y(t)-Y(0)=\int_0^t\mathcal{D}F(u,\cdot)du
+\int_0^t\nabla_\omega F(u,\cdot)dX(u)
+\frac{1}{2}\int_0^t\Tr\left(\nabla_\omega^2F(u,\cdot)d[X](u)\right)
\end{align*}
If we further impose the constraint that $\int_0^T\xi(t)dX(t)=0$ for all bounded predictable processes $\xi$ satisfying $\int_0^T\xi(t)dt=0$, then the canonical process $X$ is a strong solution to the SDE \cite[]{revuz2013continuous}
\begin{align}\label{eq:sde}
dX(t)=\mu(t)dt+\sigma(t)dW(t)
\end{align}
where $(W(t))_{t\in[0,T]}$ is a $\mathbb{R}^d$-valued standard Wiener process on the underlying filtered probability space (assuming its existence). 
$(\mu(t))_{t\in[0,T]}$ is a $\mathbb{R}^d$-valued predictable process, and 
$(\sigma(t))_{t\in[0,T]}$ is a $\mathbb{R}^{d^2}$-valued predictable process. We may identify their elements, say $(\mu_i(t))_{t\in[0,T]}$ and $(\sigma_{ij}(t))_{t\in[0,T]}$, with non-anticipative functionals. 
The SDE Eq.~\ref{eq:sde} may be regarded as a path-dependent generalisation of the renowned Ito diffusion process. The existence and uniqueness of its solutions have been given in the literature by imposing various conditions (e.g. boundedness and Lipschitz properties, see \citet{bally2016functional}). 
Now if $X$ satisfies Eq.~\ref{eq:sde} $\Pr{P}$-a.s., then it follows from the functional Ito formula that the process $Y$ is a strong solution to the SDE
\begin{align*}
dY(t)=\left(\mathcal{D}F(t,\cdot)+\mu(t)\nabla_\omega F(t,\cdot)+\frac{\Tr\bigl(\sigma(t)^2\nabla_\omega^2F(t,\cdot)\bigr)}{2}\right)dt
+\sigma(t)\nabla_\omega F(t,\cdot)dW(t)
\end{align*}
Note that the square of $\sigma(t)$ is in the sense of matrix multiplication, i.e. $\sigma(t)^2=\sigma(t)\sigma(t)^T$. 
For simplicity we may define a nonlinear differential operator $\mathcal{A}$ that sends a regular functional to a non-anticipative functional by
\begin{align}\label{eq:Adef}
\mathcal{A}F:=\mathcal{D}F+\mu(t)\nabla_\omega F+\frac{1}{2}\Tr\bigl(\sigma(t)^2\nabla_\omega^2F\bigr)
\end{align}
Then the process $Y$, defined by $Y(t)=F(t,\cdot)$, is a strong solution to
\begin{align}\label{eq:sde1}
dY(t)=\mathcal{A}F(t,\cdot)dt
+\sigma(t)\nabla_\omega F(t,\cdot)dW(t)
\end{align}
Suppose $Y$ is a $\Pr{P}$-martingale, then the regular functional $F$ satisfies $\mathcal{A}F=0$ $\Pr{P}$-a.s. Applying this property, we may convert the martingale statement in Theorem \ref{th:0} to an analytical statement. This is formulated in the following corollary.

\begin{corollary}\label{co:51}
Given $\vartheta\in(0,\infty)$, suppose there exist $c\in\mathbb{R}$ and $z:\mathbb{R}\to \mathbb{R}_+$ defined in Theorem \ref{th:0}. 
If the canonical process $X$ satisfies Eq.~\ref{eq:sde} for some $\mathbb{R}^d$-valued predictable process $(\mu(t))_{t\in[0,T]}$ and $\mathbb{R}^{d^2}$-valued predictable process 
$(\sigma(t))_{t\in[0,T]}$, then the value process, $U$, the worst-case risk, $V$, and the cost process, $\eta$, satisfy the following equations
\begin{align*}
U(t,\cdot)=&\,\frac{M(t)+f(Z(t))}{\vartheta Z(t)}+c\nonumber\\
V(t,\cdot)=&\,\frac{W(t)}{Z(t)}\\
\eta(t,\cdot)=&\,\frac{\vartheta W(t)- M(t)-f(Z(t))}{Z(t)}-\vartheta c\nonumber
\end{align*}
for all $t\in[0,T]$ and all $\omega\in\Omega$ such that $Z(t)(\omega)>0$, 
where $Z$, $M$ and $W$ are identified by the solutions to the equation $\mathcal{A}F=0$ ($\Pr{P}$-a.s.), subject to their respective terminal conditions:
\begin{align*}
\begin{pmatrix}
Z\\
M\\W
\end{pmatrix}(T)=
\begin{pmatrix}
z\circ\ell(T,\cdot)\\
f'\circ z\circ\ell(T,\cdot)\times {z\circ\ell(T,\cdot)}-{f\circ z\circ\ell(T,\cdot)}
\\
z\circ\ell(T,\cdot)\times\ell(T,\cdot)
\end{pmatrix}
\end{align*}
\end{corollary}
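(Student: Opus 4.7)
The plan is to treat this corollary as a direct translation of Theorem~\ref{th:0} from the abstract martingale formulation into a PDE-style characterization via the functional Ito formula. The expressions for $U$, $V$, and $\eta$ in terms of $Z$, $M$, $W$ are already supplied by Theorem~\ref{th:0}; nothing in the passage from the general setting to the continuous semimartingale setting changes those algebraic relations. So the only genuine content of the corollary is the claim that the three $\Pr{P}$-martingales $Z$, $M$, $W$ produced by Theorem~\ref{th:0} can be identified, under the SDE assumption on $X$, with (regular) non-anticipative functionals satisfying the functional PDE $\mathcal{A}F=0$, subject to the specified terminal conditions.

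First I would invoke Theorem~\ref{th:0} to obtain, under the hypotheses on $c$ and $z$, an $\Fltrn{F}^0$-adapted $\Pr{P}$-martingale $(Z,M,W)$ with the terminal values displayed in~\eqref{eq:vec}, along with the formulas for $U$, $V$, $\eta$ in \eqref{eq:uveta}. These formulas transfer verbatim to the statement of the corollary. It then remains to show that each component of the triple $(Z,M,W)$ arises as $F(t,\cdot)$ for some regular non-anticipative functional $F\colon\Lambda_T^d\to\R$ solving $\mathcal{A}F=0$ $\Pr{P}$-a.s.

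The second step is to apply the functional Ito formula as recalled in the paragraph preceding the corollary. For each of $Z$, $M$, $W$, assume a regular functional representative $F$ so that $F(t,\cdot)$ coincides with the martingale at every $t\in[0,T]$. Because $X$ solves $dX(t)=\mu(t)\,dt+\sigma(t)\,dW(t)$ under $\Pr{P}$, Eq.~\eqref{eq:sde1} yields the semimartingale decomposition
\begin{equation*}
dF(t,\cdot)=\mathcal{A}F(t,\cdot)\,dt+\sigma(t)\nabla_\omega F(t,\cdot)\,dW(t),
\end{equation*}
with $\mathcal{A}$ as in~\eqref{eq:Adef}. Since the left-hand side is a $\Pr{P}$-martingale by construction, uniqueness of the semimartingale decomposition forces the finite-variation part to vanish, i.e.\ $\mathcal{A}F=0$ $\Pr{P}$-a.s. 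The terminal condition in~\eqref{eq:vec} is then inherited directly, giving the boundary-value problem stated in the corollary.

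The main obstacle I would flag is the existence of a \emph{regular} functional representative for each martingale: horizontal differentiability, vertical twice-differentiability, the boundedness condition, and the appropriate left-continuity of $F,\nabla_\omega F,\nabla_\omega^2 F$ together with continuity of $\mathcal{D}F$ are not automatic from the mere fact that $Z$, $M$, $W$ are $\Fltrn{F}^0$-adapted $\Pr{P}$-martingales with the prescribed terminal values. In practice this is a regularity assumption to be layered on top of~\eqref{eq:sde} and on the loss functional $\ell$; I would either state it explicitly as part of the hypotheses or appeal to standard results in functional Ito calculus \citep{Con16,bally2016functional} that guarantee such representatives under smoothness and integrability of the terminal payoffs $z\circ\ell(T,\cdot)$, $f'\circ z\circ\ell(T,\cdot)\times z\circ\ell(T,\cdot)-f\circ z\circ\ell(T,\cdot)$ and $z\circ\ell(T,\cdot)\times\ell(T,\cdot)$. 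Once this regularity is in hand, the rest of the argument is essentially a mechanical application of the functional Ito formula.
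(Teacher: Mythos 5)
Your proposal matches the paper's own (implicit) argument: the paper states Corollary~\ref{co:51} without a separate proof, relying on exactly the two ingredients you use — the algebraic formulas carried over from Theorem~\ref{th:0}, and the observation (made in the paragraph just before the corollary) that a $\Pr{P}$-martingale identified with a regular non-anticipative functional $F$ must satisfy $\mathcal{A}F=0$ $\Pr{P}$-a.s.\ by the functional Ito formula and the vanishing of the finite-variation part. Your caveat about the unstated regularity of the functional representatives of $Z$, $M$, $W$ is a legitimate observation; the paper leaves this as an implicit hypothesis rather than deriving it, so flagging it is appropriate but does not indicate a divergence from the paper's route.
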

In practice, we are more interested in the type of $f$-divergence that gives the constant function $x\mapsto xf''(x)$. Such $f$-divergence allows us to solve $U$ and $V$ directly using path-dependent partial differential equations. 

\begin{proposition}\label{pro:52}
Suppose there exists $d\in(0,\infty)$ such that $xf''(x)=d$ for all $x\in\mathbb{R}_+$, and the function $f'$ diverges at infinity. In addition, the inverse function, $g:\mathsf{Im}f'\to(0,\infty)$, provides a twice-differentiable function $\mathbb{R}\ni x\mapsto g(x)\ind{x\in\mathsf{Im}f'}$. The value process and the worst-case risk, identified with the regular functionals  $U_t:=U(t,\cdot)$ and $V_t:=V(t,\cdot)$, solve the following path-dependent partial differential equations $\Pr{Q}_Z$-a.s.
\begin{align}\label{eq:functionalg}
\begin{split}
\mathcal{A}U_t+\frac{\theta}{2}
\frac{g''\bigl(\vartheta\left(U_t-c\right)\bigr)}{g'\bigl(\vartheta\left(U_t-c\right)\bigr)}
\left(\sigma_t\nabla_\omega U_t\right)^2 &=0\\
\mathcal{A}V_t+\frac{\vartheta g'\bigl(\vartheta\left(U_t-c\right)\bigr)\nabla_\omega U_t\,\sigma_t^2}{g\bigl(\vartheta\left(U_t-c\right)\bigr)}
  \,\nabla_\omega V_t&=0
\end{split}
\end{align}
subject to the terminal condition $U_T=V_T=\ell(T, \cdot)$. 
The cost process $\eta_t=\vartheta(V_t-U_t)$ for all $t\in[0,T]$. Defining $ I_c\coloneqq\{\vartheta^{-1}y+c\,|\,y\in\mathsf{Im}f'\}$, the solution exists if $g\bigl(\vartheta(\ell(T,\cdot)-c)\bigr)\ind{\ell(T,\cdot)\in I_c}$ is integrable for every $c\in\mathbb{R}$. 
\end{proposition}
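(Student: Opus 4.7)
The plan is to derive each of the two PDEs in Eq.~\ref{eq:functionalg} from the algebraic characterizations of $U_t$ and $V_t$ provided by Corollary~\ref{cor:46}, combined with the functional Ito formula Eq.~\ref{eq:sde1} for regular functionals of a continuous semimartingale and a Girsanov change of measure. Since $xf''(x)=d$ is constant, Corollary~\ref{cor:46} gives $U(t,\cdot)=\vartheta^{-1}f'(Z(t))+c$; because $f'$ is strictly increasing with inverse $g$, this inverts to
\begin{equation*}
Z(t)=g\bigl(\vartheta(U_t-c)\bigr)
\end{equation*}
on $\{Z(t)>0\}$, exhibiting the martingale $Z$ as a twice-differentiable composite of the regular functional $U_t$.

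First I would derive the equation for $U$. Since $Z$ is a $\Pr{P}$-martingale, Eq.~\ref{eq:sde1} applied to $Z$ gives $\mathcal{A}Z=0$, $\Pr{P}$-a.s. Using the chain rule for the horizontal and vertical derivatives on the composite $Z_t=g(\vartheta(U_t-c))$,
\begin{align*}
\mathcal{D}Z_t&=\vartheta g'\bigl(\vartheta(U_t-c)\bigr)\mathcal{D}U_t,\\
\nabla_\omega Z_t&=\vartheta g'\bigl(\vartheta(U_t-c)\bigr)\nabla_\omega U_t,\\
\nabla_\omega^2 Z_t&=\vartheta^2 g''\bigl(\vartheta(U_t-c)\bigr)\nabla_\omega U_t(\nabla_\omega U_t)^{\top}+\vartheta g'\bigl(\vartheta(U_t-c)\bigr)\nabla_\omega^2 U_t,
\end{align*}
and substituting into the definition of $\mathcal{A}$ in Eq.~\ref{eq:Adef} yields
\begin{equation*}
\mathcal{A}Z_t=\vartheta g'\bigl(\vartheta(U_t-c)\bigr)\mathcal{A}U_t+\tfrac{\vartheta^2}{2}g''\bigl(\vartheta(U_t-c)\bigr)(\sigma_t\nabla_\omega U_t)^2.
\end{equation*}
Setting the right-hand side to zero and dividing by $\vartheta g'(\vartheta(U_t-c))>0$ recovers the first equation of Eq.~\ref{eq:functionalg}.

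Next I would derive the equation for $V$. By Definition~\ref{def:4} and the identification $Z=Z^*$, $V_t=\E^{\Pr{Q}_Z}(\ell(T,\cdot)\mid\SgAlg{F}^0_t)$ is a $\Pr{Q}_Z$-martingale. From $\mathcal{A}Z=0$, Eq.~\ref{eq:sde1} applied to $Z$ specializes to $dZ_t=\sigma_t\nabla_\omega Z_t\,dW(t)$, so Girsanov's theorem gives that $\tilde W(t):=W(t)-\int_0^t Z_s^{-1}\sigma_s^{\top}\nabla_\omega Z_s\,ds$ is a $\Pr{Q}_Z$-Wiener process and the canonical process $X$ acquires the extra drift $\sigma_t^{2}Z_t^{-1}\nabla_\omega Z_t$ under $\Pr{Q}_Z$. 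Re-applying the functional Ito formula to $V_t$ under $\Pr{Q}_Z$ and equating the drift to zero yields
\begin{equation*}
\mathcal{A}V_t+\sigma_t^{2}\,\frac{\nabla_\omega Z_t}{Z_t}\cdot\nabla_\omega V_t=0,
\end{equation*}
into which substitution of $\nabla_\omega Z_t=\vartheta g'(\vartheta(U_t-c))\nabla_\omega U_t$ and $Z_t=g(\vartheta(U_t-c))$ produces the second equation of Eq.~\ref{eq:functionalg}. The terminal conditions $U_T=V_T=\ell(T,\cdot)$ are immediate from $\widehat{L}(T,Z^*)=0$ and Definition~\ref{def:4}, the identity $\eta_t=\vartheta(V_t-U_t)$ is just Definition~\ref{def:4}, and the existence claim follows from Proposition~\ref{pro:45}.

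The main obstacle I expect is the regularity justification: applying the functional Ito formula and its chain rule to both $Z_t=g(\vartheta(U_t-c))$ and to $V_t$ requires that $U_t$ itself be a regular functional on $\Lambda_T^d$ (horizontally differentiable, vertically twice differentiable, with appropriately left-continuous and bounded derivatives). Twice-differentiability of $g$ is handed to us by hypothesis, but the regularity of $U_t$ must be assumed or deduced separately from smoothness of $(\mu,\sigma)$ and of $\ell$, which is the delicate point; once granted, the remaining work is the chain-rule computation and a standard Girsanov change of measure.
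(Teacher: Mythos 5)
Your proposal reproduces the paper's proof essentially step for step: write $Z(t)=g\bigl(\vartheta(U_t-c)\bigr)$ from Corollary~\ref{cor:46}, apply the functional It\^o chain rule and $\mathcal{A}Z=0$ to extract the PDE for $U$, and then use Girsanov's theorem with $dZ_t=\sigma_t\nabla_\omega Z_t\,dW_t$ to obtain the transformed generator $\mathcal{A}^{\Pr{Q}_Z}$ and the PDE for $V$, with the terminal and existence claims delegated to Definition~\ref{def:4} and Proposition~\ref{pro:45}. The one point you elide, and where the paper is more careful, is the set $\{Z(t)=0\}$: the identity $Z(t)=g\bigl(\vartheta(U_t-c)\bigr)$ is only valid where $Z(t)>0$ (equivalently $U_t\in I_c$), and for those $\omega$ with $U_t\notin I_c$ the argument $\vartheta(U_t-c)$ may lie outside $\mathsf{Im}f'$, so $g$ is undefined there and the chain rule cannot be applied directly. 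The paper circumvents this by working with the global extension $\mathfrak{g}(x)=g(x)\ind{x\in\mathsf{Im}f'}$ (whose twice-differentiability is precisely the extra hypothesis in the statement), applying $\mathcal{A}\mathfrak{g}\bigl(\vartheta(U_t-c)\bigr)=0$ unconditionally, and then restricting to $\{U_t\in I_c\}$, which has $\Pr{Q}_Z$-measure one; this is why the hypothesis about $\mathbb{R}\ni x\mapsto g(x)\ind{x\in\mathsf{Im}f'}$ being twice-differentiable appears in the proposition at all, and your write-up would need this device to be complete. Your regularity concern about $U_t$ being a regular functional is valid and is indeed left implicit in the paper.
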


\begin{proof} 
It follows from Corollary \ref{cor:46} that\footnote{\,We have shown in the proof of Proposition \ref{pro:45} that $f'$ diverges at infinity implies that $\mathsf{Im}f'$ is an open interval in the form of $(a,\infty)$. Then
\begin{align*}
U(t,\omega)=\vartheta^{-1}{f'(Z(t)(\omega))}+c> \vartheta^{-1}a+c\in I_c
\end{align*}
for all $\omega\in\{\omega\in\Omega\,|\,Z(t)(\omega)>0\}$. On the other hand,
for all $\omega\in\{\omega\in\Omega\,|\,Z(t)(\omega)=0\}$,
\begin{align*}
0=Z(t)(\omega)=&\, 
\E^{\Pr{Q}_Z}(Z(T)\ind{\ell(T,\cdot)\leqslant \vartheta^{-1}a+c}\,|\,\SgAlg{F}_t^0)(\omega)+
\E^{\Pr{Q}_Z}(Z(T)\ind{\ell(T,\cdot)> \vartheta^{-1}a+c}\,|\,\SgAlg{F}_t^0)\\
\geqslant&\,\E^{\Pr{Q}_Z}(g\left(\vartheta\left(\ell(T,\cdot)-c\right)\right)\ind{\ell(T,\cdot)> \vartheta^{-1}a+c}\,|\,\SgAlg{F}_t^0)(\omega)
\end{align*}
This implies that $
\E^{\Pr{Q}_Z}(\ind{\ell(T,\cdot)> \vartheta^{-1}a+c}\,|\,\SgAlg{F}_t^0)(\omega)=0$ 
by virtue of 
$\mathsf{Im}g=(0,\infty)$, 
which gives
\begin{align*}
U(t,\omega)=\E^{\Pr{Q}_Z}(\ell(T,\cdot)\,|\,\SgAlg{F}_t^0)(\omega)
=\E^{\Pr{Q}_Z}(\ell(T,\cdot)\ind{\ell(T,\cdot)\leqslant \vartheta^{-1}a+c}\,|\,\SgAlg{F}_t^0)(\omega)
\leqslant \vartheta^{-1}a+c\notin I_c
\end{align*}}
\begin{align*}
Z(t)=g\left(\vartheta\left(U_t-c\right)\right)\ind{Z(t)>0}
=g\left(\vartheta\left(U_t-c\right)\right)\ind{U_t\in I_c}
:=\mathfrak{g}\left(\vartheta\left(U_t-c\right)\right)
\end{align*}
for all $t\in[0,T]$, where $\mathfrak{g}$ denotes the twice-differentiable function 
$\mathbb{R}\ni x\mapsto g(x)\ind{x\in\mathsf{Im}f'}$. Since $(Z(t))_{t\in[0,T]}$ is a $\Pr{P}$-martingale that can be identified with a solution to the equation $\mathcal{A}F=0$ ($\Pr{P}$-a.s.), we have
\begin{align*}
0=\mathcal{A}\mathfrak{g}\left(\vartheta\left(U_t-c\right)\right)=
{\mathfrak{g}'\bigl(\vartheta\left(U_t-c\right)\bigr)}
\mathcal{A}U_t+\frac{\theta}{2}
{\mathfrak{g}''\bigl(\vartheta\left(U_t-c\right)\bigr)}
\left(\sigma_t\nabla_\omega U_t\right)^2 
\end{align*}
For all $\omega\in\Omega$ such that $U(t,\omega)\in I_c$, the equation is equivalent to\footnote{
For all $x\in(a,\infty)$, 
$\mathfrak{g}'(x)={g}'(x)>0$ (due to the convexity of $f$), and for all $x\in(-\infty,a\,]$, 
\begin{align*}
\mathfrak{g}'(x)=\lim_{h\to 0^-}\frac{\mathfrak{g}(x)-\mathfrak{g}(x-h)}{h}=0
\end{align*} 
Therefore,
$\mathfrak{g}(x)=g(x)\ind{x\in(a,\infty)}$ implies that 
$\mathfrak{g}'(x)=g'(x)\ind{x\in(a,\infty)}$, which in turns implies $\mathfrak{g}''(x)=g''(x)\ind{x\in(a,\infty)}$. 
For all $\omega\in\{\omega\in\Omega\,|\,U(t,\omega)\in I_c\}$, $\vartheta U(t,\omega)-c\in(a,\infty)$ and thus
\begin{align*}
\mathfrak{g}'\bigl(\vartheta\left(U(t,\omega)-c\right)\bigr)=
{g}'\bigl(\vartheta\left(U(t,\omega)-c\right)\bigr)>0
\qquad\text{and}\qquad 
\mathfrak{g}''\bigl(\vartheta\left(U(t,\omega)-c\right)\bigr)=
{g}''\bigl(\vartheta\left(U(t,\omega)-c\right)\bigr)
\end{align*} 
}
\begin{align}\label{eq:functionalU}
\mathcal{A}U_t+\frac{\theta}{2}
\frac{{g}''\bigl(\vartheta\left(U_t-c\right)\bigr)}
{{g}'\bigl(\vartheta\left(U_t-c\right)\bigr)}
\left(\sigma_t\nabla_\omega U_t\right)^2 =0
\end{align}
Noticing that $\{\omega\in\Omega\,|\,U(t,\omega)\in I_c\}$ has measure one under $\Pr{Q}_Z$\footnote{\,$
\Pr{Q}_Z(U(t,\cdot)\in I_c)=
\Pr{Q}_Z(Z(t)>0)=
\E(Z(T)\ind{Z(t)>0})=
\E(Z(t)\ind{Z(t)>0})=
\E(Z(t))=1$
}, the equation above holds $\Pr{Q}_Z$-a.s.

It follows from Eq.~\ref{eq:sde1} that the $\Pr{P}$-martingale $(Z(t))_{t\in[0,T]}$ solves the SDE
\begin{align*}
dZ(t)=\mathcal{A}\mathfrak{g}\left(\vartheta\left(U_t-c\right)\right)dt
+
\sigma_t\nabla_\omega \mathfrak{g}\left(\vartheta\left(U_t-c\right)\right)dW(t)
=\vartheta \mathfrak{g}'\left(\vartheta\left(U_t-c\right)\right)\sigma_t\nabla_\omega U_t\,dW(t)
\end{align*}
We may define a process $(Y(t))_{t\in[0,T]}$ by the stochastic integral
\begin{align*}
Y(t):=\int_0^t\left(\frac{\vartheta  {g}'\left(\vartheta\left(U_s-c\right)\right)}{ {g}\left(\vartheta\left(U_s-c\right)\right)}
\ind{U_s\in I_c}
 \sigma_t\nabla_\omega U_s\right)dW(s)
\end{align*}
for all $t\in[0,T]$. This transforms the SDE above into
\begin{align*}
dZ(t)=\vartheta {g}'\left(\vartheta\left(U_t-c\right)\right)\ind{U_t\in I_c}
\sigma_t\nabla_\omega U_t\,dW(t)
={g}\left(\vartheta\left(U_t-c\right)\right)\ind{U_t\in I_c}
dY(t)=Z(t)dY(t)
\end{align*}
suggesting that the process $\left(Z(t)\right)_{t\in[0,T]}$ is a Doleans-Dade exponent, i.e. $Z=\mathcal{E}\left(Y\right)$. 
Note that the SDE above ensures that $\left(Z(t)\right)_{t\in[0,T]}$ is a local martingale. To guarantee that it is indeed a martingale, we assume the Novikov's condition,
\begin{align*}
\E\left(\exp\left(\frac{1}{2}\int_0^T\left(\frac{\vartheta {g}'\left(\vartheta\left(U_t-c\right)\right)}{{g}\left(\vartheta\left(U_t-c\right)\right)}
\ind{U_t\in I_c} \sigma_t\nabla_\omega U_t\right)^2dt\right)\right)<\infty
\end{align*}
According to the Girsanov theorem, the Brownian motion under $\Pr{Q}_Z$ is given by adding an extra drift term. Noticing that $U_t\in I_c$ $\Pr{Q}_Z$-a.s., the Girsanov theorem transforms the SDE of the canonical process under $\Pr{P}$ (Eq.~\ref{eq:sde}) to the following SDE (in the sense that $\left(X(t)\right)_{t\in[0,T]}$ is a strong solution of the following under $\Pr{Q}_Z$), 
\begin{align}\label{eq:extradrift}
dX(t)=\left(\mu_t + \frac{\vartheta {g}'\left(\vartheta\left(U_t-c\right)\right)}{{g}\left(\vartheta\left(U_t-c\right)\right)}
 \sigma_t^2\,\nabla_\omega U_t\right)dt+\sigma_tdW^{\Pr{Q}_Z}(t)
\end{align}
The functional Ito formula, Eq.~\ref{eq:Adef}-\ref{eq:sde1}, applies to the alternative measure $\Pr{Q}_Z$ as well. Following the definition of the operator $\mathcal{A}$, we have
\begin{align*}
\mathcal{A}^{\Pr{Q}_Z}F(t,\cdot)=&\,
\mathcal{D}F(t,\cdot)+\left(\mu_t + \frac{\vartheta {g}'\left(\vartheta\left(U_t-c\right)\right)}{{g}\left(\vartheta\left(U_t-c\right)\right)}
\nabla_\omega U_t\,\sigma_t^2\right)
\nabla_\omega F(t,\cdot)+\frac{1}{2}\Tr\bigl(\sigma(t)^2\nabla_\omega^2F(t,\cdot)\bigr)
\\
=&\,\mathcal{A}F(t,\cdot) + \frac{ {g}'\left(\vartheta\sigma_t^2\left(U_t-c\right)\right)\nabla_\omega U_t\,\sigma_t^2}{{g}\left(\vartheta\left(U_t-c\right)\right)}
\nabla_\omega F(t,\cdot)
\end{align*}
for some regular functional $F:\Lambda_T^d\to\mathbb{R}$ and all $t\in[0,T]$.
The worst-case model risk, $\E^{\Pr{Q}_Z}\left(\ell(T,\cdot)\,|\,\SgAlg{F}_t\right)$, 
is a $\Pr{Q}_Z$-martingale. Identified with the regular functional $V$, it satisfies the following equation $\Pr{Q}_Z$-a.s. 
\begin{align}\label{eq:functionalV}
0=\mathcal{A}^{\Pr{Q}_Z}V_t=\mathcal{A}V_t+\frac{\vartheta g'\bigl(\vartheta\left(U_t-c\right)\bigr)\nabla_\omega U_t\,\sigma_t^2}{g\bigl(\vartheta\sigma_t\left(U_t-c\right)\bigr)}\,
  \nabla_\omega V_t
\end{align}
 
Combined with the terminal condition $U_T=V_T=\ell(T,\cdot)$, 
Eq.~\ref{eq:functionalU} and Eq.~\ref{eq:functionalV} provide the path-dependent partial differential equations that govern the value process and the worst-case risk, respectively. 
It follows from Proposition \ref{pro:45} that the solution indeed exists if $g\bigl(\vartheta(\ell(T,\cdot)-c)\bigr)\ind{\ell(T,\cdot)\in I_c}$ is integrable for every $c\in\mathbb{R}$. 
\end{proof}

The renowned Kullback-Leibler divergence provides us with much convenience on applying 
Proposition \ref{pro:52} into practice. The function $f'(x)=\ln x+1$ diverges at $\infty$, and its inverse $g:\mathbb{R}\to(0,\infty)$ given by $g(x)=e^{x-1}$ is twice-differentiable. In addition, the worst-case martingale density $Z(T)=e^{\vartheta(\ell(T,\cdot)-c)-1}>0$ supplies a measure $\Pr{Q}_Z$ that is equivalent to the reference measure $\Pr{P}$. 
Combining Corollary \ref{co:47} with Proposition \ref{pro:52}, and substituting $g(x)=e^{x-1}$ into Eq.~\ref{eq:functionalg}, we get the following corollary that applies to the Kullback-Leibler divergence. 
\begin{corollary}\label{cor:53}
Under the Kullback-Leibler divergence, suppose $\E\left(e^{\vartheta\ell(T,\cdot)}\right)<\infty$. Then there exists an unique solution to the problem of model risk quantification. The value process and the worst-case risk, identified with regular functionals $U_t:=U(t,\cdot)$ and $V_t:=V(t,\cdot)$, solve the following path-dependent partial differential equations $\Pr{P}$-a.s.
\begin{align}\label{eq:functionalU}
\begin{split}
\mathcal{A}U_t+\frac{\vartheta}{2}\left(\sigma_t\nabla_\omega U_t\right)^2 &=0\\
\mathcal{A}V_t+{\vartheta}\nabla_\omega U_t\,\sigma_t^2\, \nabla_\omega  V_t&=0
\end{split}
\end{align}
subject to the terminal condition $U_T=V_T=\ell(T, \cdot)$. 
The cost process $\eta_t=\vartheta(V_t-U_t)$ for all $t\in[0,T]$.
\end{corollary}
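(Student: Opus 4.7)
The plan is to specialize Proposition~\ref{pro:52} (and Corollary~\ref{co:47}) to the Kullback--Leibler case. Under KL divergence, $f(x) = x\ln x$, so $f''(x) = 1/x$, $xf''(x) = 1$ (meeting the constant-product condition with $d = 1$), and $f'(x) = \ln x + 1$ diverges at infinity with full image $\mathsf{Im}f' = \mathbb{R}$. Consequently the inverse function $g(x) = e^{x-1}$ is automatically defined and twice-differentiable on the whole real line, so the extension-by-zero device of Proposition~\ref{pro:52} is redundant and $g$, $g'$, $g''$ all coincide with $e^{x-1}$.

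First I would note existence and uniqueness: the KL hypothesis $\E(e^{\vartheta\ell(T,\cdot)})<\infty$ is precisely the condition under which Corollary~\ref{co:47} gives a unique worst-case solution (and equivalently the integrability hypothesis of Proposition~\ref{pro:52} holds for every $c \in \mathbb{R}$, since $g(\vartheta(\ell(T,\cdot)-c))\ind{\ell(T,\cdot)\in I_c} = e^{-\vartheta c-1}e^{\vartheta\ell(T,\cdot)}$). I would then substitute the identities $g''/g' = 1$ and $\vartheta g'/g = \vartheta$ directly into the PDE system of Proposition~\ref{pro:52}: the first equation collapses to
\begin{equation*}
\mathcal{A}U_t+\frac{\vartheta}{2}\left(\sigma_t\nabla_\omega U_t\right)^2 =0,
\end{equation*}
while the second collapses to
\begin{equation*}
\mathcal{A}V_t+\vartheta\,\nabla_\omega U_t\,\sigma_t^2\,\nabla_\omega V_t=0,
\end{equation*}
matching the claimed system. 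The terminal condition $U_T = V_T = \ell(T,\cdot)$ is immediate from Corollary~\ref{co:47}, and the identity $\eta_t = \vartheta(V_t - U_t)$ is the definition from Definition~\ref{def:4}.

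The only step that deserves any care is upgrading ``$\Pr{Q}_Z$-a.s.'' in Proposition~\ref{pro:52} to ``$\Pr{P}$-a.s.'' here. This is where KL is special: the worst-case density $Z(T) = e^{\vartheta(\ell(T,\cdot)-c)-1}$ is strictly positive pathwise, so $\Pr{Q}_Z \sim \Pr{P}$ and the two almost-sure qualifiers coincide. There is no substantive obstacle in this proof---it is essentially a substitution argument---the only mild subtlety being that under KL the integrability requirement of Proposition~\ref{pro:52} reduces to the single exponential moment assumption, which then automatically yields integrability for all translation constants $c$ at once.
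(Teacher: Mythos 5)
Your proposal is correct and follows essentially the same route the paper takes: verify that KL divergence satisfies the hypotheses of Proposition~\ref{pro:52} (via $xf''(x)=1$, $f'(x)=\ln x+1$ diverging at infinity, $g(x)=e^{x-1}$ twice-differentiable on all of $\mathbb{R}$), check existence/uniqueness via Corollary~\ref{co:47}, substitute $g''/g'=1$ and $g'/g=1$ into Eq.~\ref{eq:functionalg}, and use the strict positivity of $Z(T)=e^{\vartheta(\ell(T,\cdot)-c)-1}$ to upgrade $\Pr{Q}_Z$-a.s.\ to $\Pr{P}$-a.s. This is precisely the argument the paper gives in the remarks immediately preceding the corollary.
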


In practice, the path-dependent partial differential equations, Eq.~\ref{eq:functionalU}, are generally difficult to solve. However, we may convert Eq.~\ref{eq:functionalU} into normal non-linear partial differential equations for a special type of path dependency, formulated by
\begin{align}\label{eq:special}
\ell(T,\cdot)=h_0(T,X(T))+\int_0^T h_1(t,X(t))dt +\int_0^Th(t,X(t))dX(t)
\end{align}
for some functions $h:[0,T]\times\mathbb{R}^d\to\mathbb{R}^d$ and $h_i:[0,T]\times\mathbb{R}^d\to\mathbb{R}~(i=1,2)$. 
We further restrict the canonical process $X$ to the class of Ito diffusions. This means that the process is Markovian, and there exist functions $\mu:[0,T]\times\mathbb{R}^d\to\mathbb{R}^d$ and $\sigma:[0,T]\times\mathbb{R}^d\to\mathbb{R}^{d^2}$ such that $\mu_t=\mu(t,X(t))$ and $\sigma_t=\sigma(t,X(t))$. 
The path-dependent partial differential equations, Eq.~\ref{eq:functionalU}, degenerates to normal partial differential equations.

\begin{corollary} \label{cor:54}
Under the Kullback-Leibler divergence, suppose $\E\left(e^{\vartheta\ell(T,\cdot)}\right)<\infty$, the canonical process $(X(t))_{t\in[0,T]}$ solves the SDE, $dX(t)=\mu(t,X(t))dt+\sigma(t,X(t))dW(t)$, and the cumulative loss $\ell(T,\cdot)$ takes the form of Eq.~\ref{eq:special}. 
If there exists a function $\tilde{u}:[0,T]\times\mathbb{R}^d\to\mathbb{R}$ that solves the partial differential equation 
\begin{align}\label{eq:nonlinearpde1}
\begin{split}
\frac{\partial \tilde{u}(t,x)}{\partial t}+\mu(t,x)\left(\frac{\partial \tilde{u}(t,x)}{\partial x}+h(t,x)\right)+&\,\frac{\vartheta}{2}\left(\sigma(t,x)\left(\frac{\partial \tilde{u}(t,x)}{\partial x}+h(t,x)\right)\right)^2\\
+&\,\frac{1}{2}\Tr\left({\sigma(t,x)^2}\frac{\partial^2 \tilde{u}(t,x)}{\partial x^2}\right)+h_1(t,x)=0
\end{split}
\end{align}
and a function $\tilde{v}:[0,T]\times\mathbb{R}^d\to\mathbb{R}$ that solves the partial differential equation
\begin{align}\label{eq:nonlinearpde2}
\begin{split}
\frac{\partial \tilde{v}(t,x)}{\partial t}
+&\,\vartheta\left(\frac{\partial \tilde{u}(t,x)}{\partial x}+h(t,x)\right)\sigma(t,x)^2\left(\frac{\partial \tilde{v}(t,x)}{\partial x}+h(t,x)\right)
\\
+&\,\mu(t,x)\left(\frac{\partial \tilde{v}(t,x)}{\partial x}+h(t,x)\right)
+\frac{1}{2}\Tr\left({\sigma(t,x)^2}\frac{\partial^2 \tilde{v}(t,x)}{\partial x^2}\right)+h_1(t,x)\,=0
\end{split}
\end{align}
subject to the terminal condition $
\tilde{u}(T, \cdot)=\tilde{v}(T, \cdot)=h_0(T, \cdot)$, 
then the value process, the worst-case risk and the cost process, identified with regular functionals, follow
\begin{align*}
\begin{split}
{U}_t=&\,\tilde{u}(t, X(t))+\int_0^t h_1(s,X(s))ds +\int_0^th(s,X(s))dX(s)\\
{V}_t=&\,\tilde{v}(t, X(t))+\int_0^t h_1(s,X(s))ds +\int_0^th(s,X(s))dX(s)
\end{split}
\end{align*}
and $\eta_t=\vartheta\bigl(\tilde{v}(t, X(t))-\tilde{u}(t, X(t))\bigr)$ for all $t\in[0,T]$.
\end{corollary}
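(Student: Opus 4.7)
The plan is to produce candidate regular functionals matching the stated ansatz, verify by direct computation that they satisfy the path-dependent PDEs of Corollary~\ref{cor:53} with the correct terminal data, and conclude via the uniqueness of the solution asserted in Corollary~\ref{co:47}. Concretely, I set
\begin{align*}
\hat{U}_t&:=\tilde{u}(t,X(t))+\int_0^t h_1(s,X(s))\,ds+\int_0^t h(s,X(s))\,dX(s),\\
\hat{V}_t&:=\tilde{v}(t,X(t))+\int_0^t h_1(s,X(s))\,ds+\int_0^t h(s,X(s))\,dX(s),
\end{align*}
treated as regular functionals on $\Lambda_T^d$. The terminal conditions $\tilde{u}(T,\cdot)=\tilde{v}(T,\cdot)=h_0(T,\cdot)$, combined with the form of $\ell(T,\cdot)$ in Eq.~\eqref{eq:special}, yield $\hat{U}_T=\hat{V}_T=\ell(T,\cdot)$ immediately.

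The bulk of the work is the term-by-term computation of the horizontal and vertical derivatives. The smooth map $(t,\omega)\mapsto\tilde{u}(t,\omega(t))$ contributes $\partial_t\tilde{u}$, $\partial_x\tilde{u}$ and $\partial_{xx}\tilde{u}$ to $\mathcal{D}$, $\nabla_\omega$ and $\nabla_\omega^2$ respectively; the Lebesgue integral $\int_0^t h_1\,ds$ contributes $h_1(t,X(t))$ to $\mathcal{D}$ and zero to both vertical derivatives; the It\^o integral $\int_0^t h(s,X(s))\,dX(s)$ is horizontally flat, since freezing the path leaves the integral unchanged, and the vertical perturbation $\omega\mapsto\omega_t+x\ind{[t,T]}$ introduces a jump of size $x$ at $t$ whose pre-jump integrand is $h(t,X(t))$, so that the integral picks up exactly $h(t,X(t))\cdot x$, yielding $\nabla_\omega=h(t,X(t))$ and $\nabla_\omega^2=0$. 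Assembling,
\begin{align*}
\mathcal{A}\hat{U}_t&=\partial_t\tilde{u}+h_1+\mu\bigl(\partial_x\tilde{u}+h\bigr)+\tfrac{1}{2}\Tr\bigl(\sigma^2\,\partial_{xx}\tilde{u}\bigr),\\
\sigma_t\nabla_\omega\hat{U}_t&=\sigma\bigl(\partial_x\tilde{u}+h\bigr),
\end{align*}
all evaluated at $(t,X(t))$, and symmetrically for $\hat{V}_t$. Substituting into the two equations of Corollary~\ref{cor:53} reproduces Eq.~\eqref{eq:nonlinearpde1} and Eq.~\eqref{eq:nonlinearpde2} exactly, and both hold by hypothesis.

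To close the argument I would invoke Corollary~\ref{co:47}, whose uniqueness pins down $U$ and $V$ as $\vartheta^{-1}\ln\tilde{Z}(t)$ and $\tilde{W}(t)/\tilde{Z}(t)$; since $(\hat{U}_t,\hat{V}_t)$ and $(U_t,V_t)$ both solve the path-dependent PDEs of Corollary~\ref{cor:53} with terminal value $\ell(T,\cdot)$, we obtain $\hat{U}_t=U_t$ and $\hat{V}_t=V_t$ $\P$-a.s. The cost formula $\eta_t=\vartheta(\tilde{v}-\tilde{u})(t,X(t))$ is then immediate from $\eta_t=\vartheta(V_t-U_t)$ in Definition~\ref{def:4}, since the Lebesgue and stochastic integral pieces of $\hat{V}_t-\hat{U}_t$ cancel.

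The hard part will be the functional-It\^o bookkeeping for the stochastic integral: one must carefully justify that $\int_0^t h(s,X(s))\,dX(s)$, viewed as a regular non-anticipative functional, has vanishing horizontal derivative, vertical derivative equal to the pre-jump integrand $h(t,X(t))$, and vanishing second vertical derivative, in spite of its genuine path-dependence. An alternative that sidesteps any appeal to uniqueness of path-dependent PDEs is to apply Feynman--Kac directly to Eq.~\eqref{eq:nonlinearpde1}: combined with the Markov property of $X$, this identifies $\exp(\vartheta\hat{U}_t)$ with the $\SgAlg{F}^0_t$-conditional expectation $\tilde{Z}(t)=\E(e^{\vartheta\ell(T,\cdot)}\mid\SgAlg{F}^0_t)$ from Corollary~\ref{co:47}, giving $\hat{U}_t=U_t$ directly, and a parallel Feynman--Kac computation under $\Pr{Q}_Z$ then delivers $\hat{V}_t=V_t$.
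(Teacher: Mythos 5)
Your proposal is essentially the same verification argument as the paper's proof of Corollary~\ref{cor:54}: both insert the ansatz $\tilde{u}(t,X(t))+\int_0^t h_1\,ds+\int_0^t h\,dX$ into the path-dependent PDEs of Corollary~\ref{cor:53}, compute $\mathcal{D}$, $\nabla_\omega$ and $\nabla_\omega^2$ of each piece, and observe that the functional PDEs reduce to Eqs.~\eqref{eq:nonlinearpde1}--\eqref{eq:nonlinearpde2} with the correct terminal data. The paper runs the decomposition in the opposite direction (it \emph{subtracts} the integral terms from $U_t$ and $V_t$ to obtain $\tilde U_t$ and $\tilde V_t$, then asserts that if $\tilde u, \tilde v$ solve the stated PDEs then $\tilde u(t,X(t))$ and $\tilde v(t,X(t))$ satisfy the same reduced functional equations), but the content is identical.

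Where you go a bit beyond the paper is in flagging the two soft spots. First, you make explicit the functional-It\^o bookkeeping for the stochastic integral term---that $\mathcal{D}\int_0^t h\,dX=0$, $\nabla_\omega\int_0^t h\,dX=h(t,X(t))$, $\nabla_\omega^2\int_0^t h\,dX=0$---whereas the paper simply lists the resulting derivative identities without comment. Second, you note that going from ``the ansatz solves the functional PDE with terminal value $\ell(T,\cdot)$'' to ``the ansatz \emph{equals} $U_t$ and $V_t$'' requires a uniqueness statement; the paper leaves this implicit, while you close the gap by appealing to Corollary~\ref{co:47} (or alternatively to a direct Feynman--Kac representation identifying $\exp(\vartheta\hat U_t)$ with $\tilde Z(t)$). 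The Feynman--Kac route you sketch is actually the cleaner justification, since the paper never establishes uniqueness of solutions to the path-dependent PDEs themselves. In short: correct, same approach, with somewhat more careful accounting of the derivative computations and of the uniqueness step.
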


\begin{proof}
We first define regular functionals $\tilde{U},\tilde{V}:\Lambda_T^d\to\mathbb{R}$ by
\begin{align}\label{eq:split}
\begin{split}
\tilde{U}_t:=&\,U_t-\int_0^t h_1(s,X(s))ds -\int_0^th_2(s,X(s))dX(s)\\
\tilde{V}_t:=&\,V_t-\int_0^t h_1(s,X(s))ds -\int_0^th_2(s,X(s))dX(s)
\end{split}
\end{align}
The horizontal and vertical derivatives can be derived from Eq.~\ref{eq:split},
\begin{align*}
\begin{split}
\mathcal{D}\tilde{U}_t=\mathcal{D}U_t-h_1(t,X(t))
\qquad\text{and}\qquad &\,
\mathcal{D}\tilde{V}_t=\mathcal{D}V_t-h_1(t,X(t))\\
\nabla_\omega\tilde{U}_t=\nabla_\omega U_t-h_2(t,X(t))
\qquad\text{and}\qquad &\,
\nabla_\omega\tilde{V}_t=\nabla_\omega V_t-h_2(t,X(t))\\
\nabla_\omega^2\tilde{U}_t=\nabla_\omega^2 U_t
\qquad\text{and}\qquad &\,
\nabla_\omega^2\tilde{V}_t=\nabla_\omega^2 V_t
\end{split}
\end{align*}
Substituting the equations above into Eq.~\ref{eq:functionalU}, we transform Eq.~\ref{eq:functionalU} to
\begin{align}\label{eq:functional1}
\begin{split}
\mathcal{D}\tilde{U}_t+\mu(t,X(t))\bigl(\nabla_\omega\tilde{U}_t+h(t,X(s))\bigr)+&\,\frac{\vartheta}{2}\left(\sigma(t,X(t))\bigl(\nabla_\omega\tilde{U}_t+h(t,X(s))^2\bigr)\right)^2\\
+&\,
\frac{1}{2}\Tr\left(\sigma(t,X(t))^2\,\nabla_\omega^2\tilde{U}_t\right)+
h_1(t,X(t)) =0
\end{split}
\end{align}
and
\begin{align}\label{eq:functional2}
\begin{split}
\mathcal{D}\tilde{V}_t\,+&\,{\vartheta}\bigl(\nabla_\omega\tilde{U}_t+h(t,X(s))^2\bigr)\sigma(t,X(t))^2\bigl(\nabla_\omega\tilde{V}_t+h(t,X(s))^2\bigr)\\
+&\,\mu(t,X(t))\bigl(\nabla_\omega\tilde{V}_t+h(t,X(s))\bigr)+
\frac{1}{2}\Tr\left(\sigma(t,X(t))^2\,\nabla_\omega^2\tilde{V}_t\right)+
h_1(t,X(t)) =0
\end{split}
\end{align}
If there exists a function $\tilde{u}:[0,T]\times\mathbb{R}^d\to\mathbb{R}$ that solves the partial differential equation
\begin{align*}
\frac{\partial \tilde{u}(t,x)}{\partial t}+\mu(t,x)\left(\frac{\partial \tilde{u}(t,x)}{\partial x}+h(t,x)\right)+&\,\frac{\vartheta}{2}\left(\sigma(t,x)\left(\frac{\partial \tilde{u}(t,x)}{\partial x}+h(t,x)\right)\right)^2\\
+&\,\frac{1}{2}\Tr\left({\sigma(t,x)^2}\frac{\partial^2 \tilde{u}(t,x)}{\partial x^2}\right)+h_1(t,x)=0
\end{align*}
and a function $\tilde{v}:[0,T]\times\mathbb{R}^d\to\mathbb{R}$ that solves
\begin{align*}
\frac{\partial \tilde{v}(t,x)}{\partial t}
+&\,\vartheta\left(\frac{\partial \tilde{u}(t,x)}{\partial x}+h(t,x)\right)\sigma(t,x)^2\left(\frac{\partial \tilde{v}(t,x)}{\partial x}+h(t,x)\right)
\\
+&\,\mu(t,x)\left(\frac{\partial \tilde{v}(t,x)}{\partial x}+h(t,x)\right)
+\frac{1}{2}\Tr\left({\sigma(t,x)^2}\frac{\partial^2 \tilde{v}(t,x)}{\partial x^2}\right)+h_1(t,x)\,=0
\end{align*}
then the regular functionals defined by $\tilde{U}_t:=\tilde{u}(t,X(t))$ and $\tilde{V}_t:=\tilde{v}(t,X(t))$, for all $t\in[0,T]$, satisfy Eqs.~\ref{eq:functional1} and \ref{eq:functional2}. 
The terminal condition $\tilde{U}_T=\tilde{V}_T=h_0(T,X(T))$ is satisfied if $\tilde{u}(T, x)=\tilde{v}(T, x)=h_0(T,x)$ holds for all $x\in\mathbb{R}$.
\end{proof}
Note that Eq.~\ref{eq:nonlinearpde1}-\ref{eq:nonlinearpde2} are non-linear parabolic partial differential equations and in general have to be solved numerically. 

\section{Concluding Remarks}
This paper provides a theoretical framework of formulating and solving the problem of model risk quantification in a path-dependent setting. We need several ingredients to formulate the problem, including terminal time $T$, a (path-dependent) loss function $\ell$, a nominal model (i.e. a canonical process $(X_t)_{t\in[0,T]}$ under a nominal measure $\Pr{P}$) and some $f$-divergence. The non-parametric nature of this approach relies on the $f$-divergence to restrict the set of proper alternative models. This is, however, only applicable to measures that are absolutely continuous w.r.t the nominal measure. More generic distance measure, such as the Wasserstein metric, may be applied instead \cite[]{third}. Despite of this incompleteness, $f$-divergence, especially the Kullback-Leibler divergence, is most tractable and yield simple results for path-dependent problems.


\bibliography{modelrisk}
\bibliographystyle{chicago}
\end{document}